\providecommand{\tabularnewline}{\\}
\theoremstyle{definition}
\newtheorem{defn}{\protect\definitionname}
\theoremstyle{plain}
\newtheorem{prop}{\protect\propositionname}
\theoremstyle{plain}
\newtheorem{lem}{\protect\lemmaname}
\theoremstyle{definition}
 \newtheorem{example}{\protect\examplename}
\renewcommand{\@bibunitname}{\jobname.\the\@bibunitauxcnt}
\providecommand{\definitionname}{Definition}
\providecommand{\examplename}{Example}
\providecommand{\lemmaname}{Lemma}
\providecommand{\propositionname}{Proposition}
\begin{document}
\title{\vspace*{-40bp}
{\normalsize{}
\setcounter{page}{0}
\interfootnotelinepenalty=10000}Mislearning from Censored Data: \\ The Gambler's Fallacy and Other
Correlational Mistakes in Optimal-Stopping Problems}
\author{Kevin He\thanks{University of Pennsylvania. Email: \protect\href{mailto:hesichao\%40gmail.com}{hesichao@gmail.com}.
I am indebted to Drew Fudenberg, Matthew Rabin, Tomasz Strzalecki,
and Ben Golub for their guidance and support. I thank Rani Spiegler
and anonymous referees, Isaiah Andrews, Ruiqing Cao, In-Koo Cho, Martin
Cripps, Krishna Dasaratha, Jetlir Duraj, Ben Enke, Ignacio Esponda,
Jiacheng Feng, Mira Frick, Tristan Gagnon-Bartsch, Ashvin Gandhi,
Oliver Hart, Johannes H\"{o}rner, Alice Hsiaw, Ryota Iijima, Yuhta
Ishii, Lawrence Jin, Yizhou Jin, Michihiro Kandori, Max Kasy, Shengwu
Li, Jonathan Libgober, Matthew Lilley, George Mailath, Eric Maskin,
Weicheng Min, Xiaosheng Mu, Andy Newman, Harry Pei, Joshua Schwartzstein,
Roberto Serrano, Philipp Strack, Elie Tamer, Omer Tamuz, Michael Thaler,
Linh T. Tô, Maria Voronina, Yuichi Yamamoto, and seminar participants
for their insightful comments. I thank the California Institute of
Technology for hospitality when some of the work on this paper was
completed.}}
\date{{\normalsize{}}%
\begin{tabular}{rl}
First version: & March 21, 2018\tabularnewline
This version: & August 18, 2021\tabularnewline
\end{tabular}}

\maketitle
\vspace*{-20bp}

\begin{abstract}
I study endogenous learning dynamics for people who misperceive intertemporal
correlations in random sequences. Biased agents face an optimal-stopping
problem. They are uncertain about the underlying distribution and
learn its parameters from predecessors. Agents stop when early draws
are “good enough,” so predecessors’ experiences contain negative streaks
but not positive streaks. When agents wrongly expect systematic reversals
(the ``gambler's fallacy''), they understate the likelihood of consecutive
below-average draws, converge to over-pessimistic beliefs about the
distribution’s mean, and stop too early. Agents uncertain about the
distribution’s variance overestimate it to an extent that depends
on predecessors’ stopping thresholds. I also analyze how other misperceptions
of intertemporal correlation interact with endogenous data censoring.

\vspace*{\medskipamount}

\noindent \textbf{Keywords}: misspecified learning, gambler’s fallacy,
Berk-Nash equilibrium, endogenous data censoring, fictitious variation
\thispagestyle{empty}
\end{abstract}

\section{\label{sec:Introduction}Introduction}

When a fair coin lands on tails three times in a row, many people
wrongly expect the same coin to have an increased chance of landing
on heads on the next toss to ``balance things out.'' This mistaken
belief stems from a widespread statistical bias called the \emph{gambler's
fallacy}, where people expect too much reversal from sequential realizations
of independent random events. Studies have documented the gambler's
fallacy in settings where it is strictly costly, such as lotteries
with pari-mutuel payouts \citep*{terrell1994test,suetens2016predicting}
and incentivized lab experiments \citep*{benjamin2017biased}. The
same bias also affects experienced decision-makers in high-stakes
environments, including immigration judges \citep*{chen2016decision}
and MBA admissions interviewers \citep{simonsohn2013daily}.

The gambler's fallacy affects people's behavior and beliefs in optimal-stopping
problems, an important class of economic environments where agents
act on sequential signal realizations. For instance, \citet*{mueller2018job}
use survey data to document beliefs consistent with the gambler's
fallacy in job search, finding that job seekers' perceived probability
of becoming employed within the next few months \emph{increases} over
the course of the unemployment spell. In settings like this, how does
the bias affect society's long-run beliefs about the economic fundamentals
(e.g., the labor market conditions) and how does it influence agents'
behavior? These questions are challenging because the biased agents
do not passively observe an exogenous data stream, but take stopping
actions that censor the observation of future signal realizations.
The stopping decisions, in turn, depend on the agents' (possibly mistaken)
beliefs about the fundamentals.

In this paper, I study novel implications of the gambler's fallacy
and other correlational mistakes in optimal-stopping problems when
a society of biased agents learn about the underlying distributions.
Agents take turns playing the same stage game: an optimal-stopping
problem with draws in different periods generated from fixed but unknown
distributions. Agents learn about the means of the distributions from
experience, but start with a dogmatic and wrong belief about the correlation
between the draws. For instance, when the draws are objectively independent
but agents expect the draws to exhibit reversals conditional on the
means, they suffer from the gambler's fallacy. I show the non-self-confirming
steady state of misspecified Bayesian learning in this environment
involves distorted beliefs about the marginal distributions and suboptimal
stopping behavior, and the directions of these errors depend on details
of the correlational mistake. I derive further results about how changes
in the stage game affect long-run learning outcomes and how additional
uncertainty about the variances of the distributions interacts with
stopping incentives.

To illustrate the main mechanism behind these results, consider as
a running example HR managers who suffer from the gambler's fallacy.
Each manager sequentially interviews candidates for a single job opening
and exaggerates how unlikely it is to get consecutive above-average
or consecutive below-average applicants (relative to the labor pool
mean). This error stems from the same psychology that leads people
to exaggerate how unlikely it is to get consecutive heads or consecutive
tails when tossing a fair coin. Evidence from MBA admissions suggests
this bias can have a sizable effect on sequential interviews: following
applicants who are one standard deviation worse than usual, interviewers
expect the next candidate to exceed average quality by the equivalent
of two years of work experience \citep{simonsohn2013daily}.

Suppose the managers are initially uncertain about the labor pool
quality and collectively learn about this fundamental over time. Every
manager is responsible for hiring in a different year. Each junior
manager consults with senior managers and adopts their beliefs about
the labor pool based on their recruiting experience for similar positions
in the past. The junior manager then implements a stopping strategy
for her own recruiting problem, updates her belief at the end of the
hiring season, and shares this new belief with her successors.\footnote{This environment where managers pass down their beliefs is equivalent
to biased managers updating their beliefs using all past managers'
hiring experience.} How does the gambler's fallacy influence the managers’ beliefs and
behavior in the long run?

In this example, agents tend to stop when early draws are deemed ``good
enough,'' causing an asymmetric truncation of experience. When a
manager discovers a sufficiently strong candidate early in the hiring
cycle, she stops her recruitment efforts and does not observe what
additional candidates would have been found for the same job opening
with a longer search. This endogenous \emph{censoring effect} on histories
interacts with the gambler's fallacy bias and generates pessimistic
inference about the labor pool. Managers continue searching only when
their early candidates are below-average. They misinterpret subsequent
above-average candidates as the expected positive reversal after bad
initial outcomes, not as strong signals about the labor pool. On the
other hand, they are surprised by subsequent below-average candidates
since their bias leads them to understate the likelihood of bad streaks,
misreading consecutive bad draws as very strong negative signals about
the pool. That is, after bad early draws, managers under-infer from
subsequent good draws but over-infer from subsequent bad draws. On
average, they communicate an over-pessimistic impression of the labor
pool to future junior managers. This pessimism informs the junior
managers' stopping strategy and affects the kind of censored history
they observe and the new beliefs they pass down to their own successors.

The key mechanism behind my results is the \emph{interaction} between
psychological bias and data censoring in stopping problems. Neither
is dispensable. Agents who do not suffer from correlational mistakes
learn the fundamentals correctly even from censored histories. Conversely,
in an environment without censoring where agents observe ex-post what
would have been drawn in each period of the optimal-stopping problem,
even biased agents learn the fundamentals correctly. In particular,
the gambler's fallacy is a ``symmetric'' bias; the ``asymmetric''
learning outcome of over-pessimism only obtains when the bias interacts
with an (endogenous) asymmetric censoring mechanism that tends to
produce data containing negative streaks but not positive streaks.
More broadly, the selective censoring of sequential signals represents
a natural source of data endogeneity whose impact on different biases
remains understudied.

The misinference mechanism central to this paper implies novel comparative
statics predictions about how the economic environment affects learning
outcomes under the gambler's fallacy. Returning to \citet*{mueller2018job}'s
context of job seekers, my results suggest that government policies
subsidizing longer search, such as extended unemployment insurance,
help mitigate belief distortions for job seekers who commit the gambler's
fallacy. This is because such policies lead agents to use higher acceptance
thresholds and generate less censored histories, which in turn induce
less pessimistic beliefs for their successors. Comparative statics
of this sort are unique to a setting where biased agents learn from
endogenously censored histories — changing the stage game has no effect
on the long-run learning outcomes if data is exogenous or if agents
are correctly specified.

Finally, I extend the analysis for the case of the gambler's fallacy
by considering uncertainty about both the means and variances of the
distributions. In this joint estimation, agents misinfer means by
the same amounts as in the baseline model and exaggerate variances.
The idea is that agents attribute streaks of good or bad draws to
``noise.'' The degree of belief in this\emph{ fictitious variation}
both depends on the severity of history censoring (as the amount of
``noise'' inferred depends on the kind of data) and influences the
agents' stopping strategy (as higher variance encourages continuing
in search problems due to option value). To illustrate how this belief
in fictitious variation interacts with endogenous learning, I show
that a society where agents are uncertain about the variances end
up with a less distorted long-run belief about the means than another
society where agents know the correct variances. This is despite the
fact that agents in both societies would make the same (mis)inference
about the means when given the same data.

The rest of the paper is organized as follows. Section \ref{sec:model}
presents the model and discusses the modeling assumptions. The model
is general enough to capture various misperceptions of intertemporal
correlation, with the gambler's fallacy as a special case. Section
\ref{sec:results} analyzes the steady state of learning and contains
the main results of the paper. Section \ref{sec:Convergence} proves
the convergence of misspecified learning dynamics to the steady state.
Section \ref{sec:Related-Literature} discusses related theoretical
literature. Section \ref{sec:Concluding-Remarks} concludes.

\section{\label{sec:model}Model}

\subsection{The Objective Environment}

The stage game is a two-period optimal-stopping problem. In the first
period, the agent draws $x_{1}\in\mathbb{R}$ and decides whether
to stop. If she stops, her payoff is $u_{1}(x_{1})=x_{1}$ and the
stage game ends. If she continues, she incurs a cost $\kappa\in\mathbb{R}$,
enters the second period, then draws $x_{2}\in\mathbb{R}$. (This
$\kappa$ may also be negative, a subsidy for continuing.) There is
probability $0\le q<1$ that the first draw can be recalled in the
second period and the agent can pick the best of the two draws, but
with complementary probability the first draw is no longer available.
So the agent's expected payoff from continuing, conditional on the
draws, is $u_{2}(x_{1},x_{2})=q\cdot\max(x_{1},x_{2})+(1-q)x_{2}-\kappa$.
Both $q$ and $\kappa$ are known parameters.

This stage game fits a number of economic situations:
\begin{itemize}
\item Many industries have an annual hiring cycle. Consider a firm in such
an industry and an HR manager who must fill a job opening during this
year's cycle. In the early phase of the hiring cycle, she finds a
candidate with quality $x_{1}$. She must decide between hiring this
candidate immediately or waiting. Waiting lets her continue searching
in the late phase of the cycle, but carries the risk that the early
candidate accepts an offer from a different firm in the interim.
\item A homeowner lists his house for sale and receives an offer in each
period. The homeowner must decide whether to accept the first offer
he gets and take his house off the market, or to wait for the second
offer, incurring a waiting cost and risking the first buyer leaving
the market.
\item An unemployed worker searches for jobs. While unemployed, she receives
a job offer in each period and decides whether to continue her job
search. Once she becomes employed, she stops searching and no longer
receives further offers.
\end{itemize}
The draws $x_{1},x_{2}$ are the realizations of two possibly correlated
Gaussian random variables $X_{1},X_{2}$, with unconditional means
$\mu_{1}^{\bullet},\mu_{2}^{\bullet}\in\mathbb{R}$. We have $X_{1}=\mu_{1}^{\bullet}+\epsilon_{1}$
and $X_{2}=\mu_{2}^{\bullet}+\epsilon_{2},$ where $\epsilon_{1}\sim\mathcal{N}(0,\sigma^{2})$
and $(\epsilon_{2}\mid\epsilon_{1})\sim\mathcal{N}(-r\epsilon_{1},\sigma^{2})$
for some fixed value of $r\in\mathbb{R}$. The parameters $\mu_{1}^{\bullet},\mu_{2}^{\bullet}\in\mathbb{R}$
are the \emph{true fundamentals} that stand for the average qualities
of the two pools in the two periods. (In general we may have $\mu_{1}^{\bullet}\ne\mu_{2}^{\bullet}$.
For instance, this might happen due to dynamic adverse selection in
the labor pool over time in the example of the HR manager.) The $\epsilon_{1},\epsilon_{2}$
terms represent the idiosyncratic factors that determine how the agent's
actual draws deviate from the average qualities of the respective
pools, with $r$ the\emph{ true reversal parameter. }When $r>0,$
the idiosyncratic factors that lead to an unusually good first draw
relative to the early pool quality also portend a below average second
draw. (Such reversals may happen, for instance, if the agent is exhausting
a small pool.) Note that $X_{1},X_{2}$ are independent when $r=0,$
negatively correlated when $r>0$, and positively correlated when
$r<0.$

\subsection{Gambler's Fallacy and Other Correlational Mistakes}

I introduce a general model of misperceptions of intertemporal correlation,
with the gambler's fallacy as a special case. Section \ref{sec:results}
will both analyze how different kinds of correlational mistakes interact
with endogenous data censoring, and present more in-depth results
that focus on the gambler's fallacy. 

Agents are uncertain about both the fundamentals and the reversal
parameter. They  believe that if the average qualities of the pools
are $\mu_{1},\mu_{2}\in\mathbb{R}$, then the draws are generated
by $X_{1}=\mu_{1}+\epsilon_{1},$ $X_{2}=\mu_{2}+\epsilon_{2}$ with
$\epsilon_{1}\sim\mathcal{N}(0,\sigma^{2}),$ $(\epsilon_{2}\mid\epsilon_{1})\sim\mathcal{N}(-\gamma\epsilon_{1},\sigma^{2})$
for some unknown $\gamma\in[\gamma_{l},\gamma_{h}]$. If $0=r<\gamma_{l}$,
 then the agents suffer from the gambler's fallacy. This may represent
a superstitious belief in an environment where the two draws are objectively
independent that if someone gets lucky on the first draw, then bad
luck is ``due'' to befall them in the near future. More generally,
when $r<\gamma_{l}$ (but $r$ may not be 0), agents exaggerate the
amount of reversal in the idiosyncratic factors across the draws.
On the other hand, we may also have $r>\gamma_{h},$ in which case
agents dogmatically underestimate the amount of reversal. This might
be called a form of ``hot-hand fallacy,'' where following a ``lucky''
first draw agents systematically overestimate the chance of another
good draw (and symmetrically for bad draws).\footnote{\citet{rabin2010gambler} propose a different mechanism for the hot-hand
fallacy: agents expect reversals (not streaks) conditional on the
fundamentals, but misinfer fundamentals. This also leads agents to
predict that streaks will continue.}

Denote by $\phi(\cdot\mid\mu)$ the Gaussian density with mean $\mu$
and variance $\sigma^{2}$, and let $\Psi(\mu_{1},\mu_{2};\gamma)$
refer to the joint distribution $X_{1}=\mu_{1}+\epsilon_{1}$, $X_{2}=\mu_{2}+\epsilon_{2}$
with $\epsilon_{1}\sim\phi(\cdot\mid0),$ $(\epsilon_{2}\mid\epsilon_{1})\sim\phi(\cdot\mid-\gamma\epsilon_{1})$.
Agents believe the joint distribution of $(X_{1},X_{2})$ is described
by one of the \emph{feasible models}, $\{\Psi(\mu_{1},\mu_{2};\gamma):(\mu_{1},\mu_{2})\in\mathbb{R}^{2},\gamma\in[\gamma_{l},\gamma_{h}]\}.$
If $r\notin[\gamma_{l},\gamma_{h}],$ then the set of feasible models
excludes the true model, $\Psi^{\bullet}:=\Psi(\mu_{1}^{\bullet},\mu_{2}^{\bullet};r)$,
so Bayesian updating within the class of feasible models amounts to
misspecified learning. I use misspecification as a tool to represent
and study the gambler's fallacy and other correlational mistakes.

Throughout, I maintain the assumption that $r,\gamma_{l},\gamma_{h}\ne-1.$
It turns out that for the model $\Psi(\mu_{1},\mu_{2};-1)$ with any
$\mu_{1},\mu_{2},$ all stopping strategies are optimal. So I rule
out this knife-edge case by assuming that neither the true reversal
parameter nor one of the end points of $[\gamma_{l},\gamma_{h}]$
is exactly equal to $-1.$ I still allow the case that the interval
of subjectively feasible reversal parameters contains $-1$ in its
interior. Finally, denote $\gamma_{n}:=\arg\min_{\gamma\in[\gamma_{l},\gamma_{h}]}|\gamma-r|$
as the nearest point in the interval $[\gamma_{l},\gamma_{h}]$ to
$r.$ Note that if $r\in[\gamma_{l},\gamma_{h}],$ then the nearest
point is $\gamma_{n}=r$ itself. Otherwise, $\gamma_{n}$ is one of
the end points, $\gamma_{l}$ or $\gamma_{h}$.

\subsection{The Steady State}

Suppose a sequence of agents arrive one per round $(t=1,2,3,...)$
and take turns playing the stage game. All agents have the same set
of reversal parameters $[\gamma_{l},\gamma_{h}]$ that they find plausible.
They face the same but unknown objective pool qualities $(\mu_{1}^{\bullet},\mu_{2}^{\bullet})$
and true reversal parameter $r.$ At the end of each round $t$, the
$t$-th agent updates her belief about qualities and about the reversal
parameter using her experience, then communicates her updated belief
to her successor. The successor acts based on the inherited belief,
then passes down an updated belief at the end of the round to his
own successor, and so forth. I now define the steady state of this
learning system.

Roughly speaking, a\emph{ steady state }of the system consists of
a strategy $S^{\infty}:\mathbb{R}\to\{\text{Stop, Continue}\}$ that
maps the realization of the first draw $X_{1}=x_{1}$ into a stopping
decision, and point-mass beliefs about the pool qualities and the
reversal parameter, $(\mu_{1}^{\infty},\mu_{2}^{\infty},\gamma^{\infty})\in\mathbb{R}^{2}\times[\gamma_{l},\gamma_{h}]$,
so that: (i) agents find it optimal to follow strategy $S^{\infty}$
given beliefs $(\mu_{1}^{\infty},\mu_{2}^{\infty},\gamma^{\infty})$;
(ii) $(\mu_{1}^{\infty},\mu_{2}^{\infty},\gamma^{\infty})$ are the
``best-fitting'' beliefs about the pool qualities and the reversal
parameter given data generated from the strategy $S^{\infty}$. The
steady state corresponds to \citet{esponda2016berk}'s Berk-Nash equilibrium
adapted to the current setting.

To make precise the meaning of ``best-fitting'' beliefs for misspecified
learners, the \emph{history} of the stage game is an element $h\in\mathbb{H}:=\mathbb{R}\times(\mathbb{R}\cup\{\varnothing\})$.
If an agent decides to stop after $X_{1}=x_{1}$, her history is $(x_{1},\varnothing)$.
If an agent continues after $X_{1}=x_{1}$ and gets a second draw
$X_{2}=x_{2}$, her history is $(x_{1},x_{2})$. The symbol $\varnothing$
is a\emph{ censoring indicator}, emphasizing if the agent stops, then
the counterfactual second draw that she would have found had she continued
remains unobserved.

Consider the strategy $S$ and the parameters $(\mu_{1},\mu_{2},\gamma)$.
The agent's subjective likelihood of the history $h=(x_{1},x_{2})$
with $S(x_{1})=\text{Continue}$ is $\phi(x_{1}\mid\mu_{1})\cdot\phi(x_{2}\mid\mu_{2}-\gamma(x_{1}-\mu_{1}))$,
while that of the history $h=(x_{1},\varnothing)$ with $S(x_{1})=\text{Stop}$
is $\phi(x_{1}\mid\mu_{1})$. Let $(\mu_{1}^{*}(S),\mu_{2}^{*}(S),\gamma^{*}(S))\in\mathbb{R}^{2}\times[\gamma_{l},\gamma_{h}]$
be the \emph{pseudo-true parameters }with respect to $S$ that maximize
the expected log-likelihood of the agent's history, with the expectation
taken over the true distribution of histories generated by $S$. Intuitively
speaking, these correspond to the long-run inferences about the fundamentals
and the reversal parameter when a large sample of histories is generated
using the stopping strategy $S$.

Equivalently, the pseudo-true parameters minimize the \emph{KL divergence}
between the expected and the objective distributions over histories.
Let $\mathcal{H}(\Psi(\mu_{1},\mu_{2};\gamma);S)$ refer to the distribution
of histories when the draws have the joint distribution $\Psi(\mu_{1},\mu_{2};\gamma)$
and histories are censored according to the strategy $S$. The true
distribution of histories given strategy $S$ is $\mathcal{H}(\Psi(\mu_{1}^{\bullet},\mu_{2}^{\bullet};r);S)$,
which I abbreviate as $\mathcal{H}^{\bullet}(S)$. To avoid trivialities,
I will focus on steady states where agents continue with positive
probability (otherwise their beliefs are not disciplined by the observation
of any second-period draws), that is to say strategies $S$ where
$S(x_{1})=\text{Continue}$ for a positive Lebesgue measure of $x_{1}\in\mathbb{R}.$
For such an $S$, the \emph{Kullback-Leibler (KL) divergence} from
$\mathcal{H}^{\bullet}(S)$ to $\mathcal{H}(\Psi(\mu_{1},\mu_{2};\gamma);S)$,
denoted by $D_{KL}(\mathcal{H}^{\bullet}(S)\ ||\ \mathcal{H}(\Psi(\mu_{1},\mu_{2};\gamma);S)\ )$,
is {\small{}
\begin{align}
 & \int_{x_{1}\in S^{-1}(\text{Stop})}\phi(x_{1}\mid\mu_{1}^{\bullet})\cdot\ln\left(\frac{\phi(x_{1}\mid\mu_{1}^{\bullet})}{\phi(x_{1}\mid\mu_{1})}\right)dx_{1}\nonumber \\
 & +\int_{x_{1}\in S^{-1}(\text{Cont.})}\left\{ \int_{-\infty}^{\infty}\begin{array}{c}
\phi(x_{1}\mid\mu_{1}^{\bullet})\cdot\phi(x_{2}\mid\mu_{2}^{\bullet}-r(x_{1}-\mu_{1}))\\
\cdot\ln\left[\frac{\phi(x_{1}\mid\mu_{1}^{\bullet})\cdot\phi(x_{2}\mid\mu_{2}^{\bullet}-r(x_{1}-\mu_{1}))}{\phi(x_{1}\mid\mu_{1})\cdot\phi(x_{2}\mid\mu_{2}-\gamma(x_{1}-\mu_{1}))}\right]
\end{array}dx_{2}\right\} dx_{1}.\label{eq:KL}
\end{align}
}{\small\par}

So the KL divergence in Equation (\ref{eq:KL}) is the expected log-likelihood
ratio of the history under the true process versus under the model
$\Psi(\mu_{1},\mu_{2};\gamma)$, where expectation over histories
is taken under the true process. In general, this optimization objective
depends on the stopping strategy $S$. It is simple to see that the
minimizers of KL divergence are the same as the maximizers of expected
log-likelihood of the history.

I formalize the definition of a steady state:
\begin{defn}
\label{def:A-steady-state}A \emph{steady state}\textbf{ }consists
of $\mu_{1}^{\infty},\mu_{2}^{\infty}\in\mathbb{R}$, $\gamma^{\infty}\in[\gamma_{l},\gamma_{h}],$
and a strategy $S^{\infty}$ such that: (i) $S^{\infty}$ continues
with positive probability and is optimal among all stopping strategies
for the model $\Psi(\mu_{1}^{\infty},\mu_{2}^{\infty};\gamma^{\infty})$;
(ii) $\mu_{1}^{\infty}=\mu_{1}^{*}(S^{\infty})$, $\mu_{2}^{\infty}=\mu_{2}^{*}(S^{\infty})$,
$\gamma^{\infty}=\gamma^{*}(S^{\infty})$.
\end{defn}
The steady state is not a self-confirming equilibrium. There is positive
KL divergence between the true data distribution in the steady state
and the data distribution under $\Psi(\mu_{1}^{\infty},\mu_{2}^{\infty};\gamma^{\infty})$,
so even the best-fitting beliefs do not perfectly explain the data.
To see this, consider the special case of $r=0,$ $\gamma_{n}>0.$
Objectively, the conditional distribution $X_{2}\mid(X_{1}=x_{1})$
has a mean of $\mu_{2}^{\bullet}$ for every $x_{1}\in\mathbb{R}.$
In the steady state, the biased agents believe the same conditional
distribution has a mean of $\mu_{2}^{\infty}-\gamma_{n}(x_{1}-\mu_{1}^{\infty})$,
which only equals $\mu_{2}^{\bullet}$ for one value of $x_{1}.$
The histories cannot be fully explained by $\Psi(\mu_{1}^{\infty},\mu_{2}^{\infty};\gamma^{\infty})$,
as the predicted conditional distribution $X_{2}\mid(X_{1}=x_{1})$
does not match what is in the data for almost all $x_{1}$ values
where the steady-state strategy chooses to continue. 

We may view the steady state as a stand-alone equilibrium concept
that captures the optimality of behavior given beliefs and the constrained-optimality
of inferences given behavior, in the sense of minimizing KL divergence.
Alternatively, Section \ref{sec:Convergence} provides a Bayesian-learning
foundation for the steady state, in an environment where agents are
not actually solving the KL divergence minimization problem given
in Equation (\ref{eq:KL}), and do not observe any history of the
stage game other than the history they personally experience. In that
setting, Equation (\ref{eq:KL}) is involved in characterizing the
steady state when a sequence of agents each play the stage game once
and pass down their updated Bayesian beliefs to their successors.

\subsection{\label{subsec:Discussion-of-Behavioral}Discussion of Behavioral
Assumptions}

In this paper, the agents' correlational mistake stems from their
dogmatic belief in the interval $[\gamma_{l},\gamma_{h}],$ which
may exclude the true reversal parameter $r$. One story about how
the agents erroneously think $\gamma_{l},\gamma_{h}>0$ in an environment
with $r=0$ (that is, suffer from the gambler's fallacy) relates to
\citet{kahneman1972subjective}'s representativeness heuristic in
judging the likelihoods of random sequences. Objectively, the idiosyncratic
factors $\epsilon_{i}$ (e.g., luck) that govern how draws in different
periods deviate from their respective pool averages are sampled i.i.d.
from a mean-zero distribution. The representativeness heuristic states
that people know certain ``essential characteristics'' of the parent
population generating these idiosyncratic factors (perhaps by observing
their luck in other settings where the fundamentals are known), but
exaggerate the extent to which small samples typically represent these
characteristics. Agents who expect a sample of size two $(\epsilon_{1},\epsilon_{2})$
to approximate the mean-zero property of the parent population of
idiosyncratic factors should believe in a reversal of luck, that is
$\gamma_{l},\gamma_{h}>0$.

This is not a fully detailed and satisfactory microfoundation for
the gambler's fallacy bias, and unfortunately there is limited work
on the origin and persistence of biases in learning contexts. This
literature typically studies the implications of a dogmatically wrong
belief about one parameter on the Bayesian inference about a different
parameter (e.g., \citet*{heidhues2018unrealistic,heidhues2019overconfidence}).
Better understanding why mistakes persist is an important next step.

My setup corresponds to the model of the gambler's fallacy introduced
in \citet{rabin2010gambler}, but applied to a different fundamental
process. \citet{rabin2010gambler} study a setting where a signal
$s_{t}=\theta_{t}+\epsilon_{t}$ is generated each period $t$ around
the fundamental $\theta_{t}$. Objectively $\epsilon_{t}\stackrel{\text{i.i.d.}}{\sim}\mathcal{N}(0,\sigma_{\epsilon}^{2})$,
but agents believe $\epsilon_{t}=\omega_{t}-\alpha\delta\epsilon_{t-1}-\alpha\delta^{2}\epsilon_{t-2}-...$,
for $\omega_{t}\stackrel{\text{i.i.d.}}{\sim}\mathcal{N}(0,\sigma_{\omega}^{2})$
and some $\alpha>0,\delta\in(0,1).$ This specializes to my model
with $r=0$ when there are two periods $t=1,2$, the fundamental process
is $\theta_{t}=\mu_{t}$ for deterministic but unknown $\mu_{1},\mu_{2},$
agents know the variance $\sigma_{\omega}^{2}=\sigma_{\epsilon}^{2},$
and $\gamma_{l}=\gamma_{h}=\alpha\delta$. For \citet{rabin2010gambler},
the fundamentals $(\theta_{t})$ follow an AR(1) process instead of
being deterministic, and they study agents who exogenously observe
all signals and estimate the long-run mean and persistence of the
fundamental process. I study a different environment with endogenous
data where agents' stopping decisions censor the observation of future
signals.

\section{\label{sec:results}Steady-State Results}

\subsection{Inference about Parameters from Censored Data}

A\emph{ cutoff strategy} is a strategy $S$ whose stopping region
$S^{-1}(\text{Stop})$ is either $[c,\infty)$ for some $c\in\mathbb{R}\cup\{\infty\}$
or $(-\infty,c]$ for some $c\in\mathbb{R}\cup\{-\infty\}$. The next
proposition provides a closed-form expression for the pseudo-true
parameters as a function of the cutoff threshold $c$ in a cutoff
strategy $S$. This result can be thought of as a one-sided benchmark
of how biased learners misinfer the fundamentals and the reversal
parameter using data censored at an exogenously given threshold. The
subsequent steady-state analysis considers stopping strategies that
best respond to the beliefs they induce. All proofs appear in the
Appendix.
\begin{prop}
\label{prop:pseudo_true_normal}For any strategy $S$ that continues
with positive probability, $\mu_{1}^{*}(S)=\mu_{1}^{\bullet}$, $\gamma^{*}(S)=\gamma_{n}$.
If $S$ is a cutoff strategy that stops when $x_{1}\ge c$ for some
$c\in\mathbb{R}\cup\{\infty\},$ then $\mu_{2}^{*}(c)=\mu_{2}^{\bullet}+(r-\gamma_{n})\cdot(\mu_{1}^{\bullet}-\mathbb{E}[X_{1}\mid X_{1}\le c])$.
If $S$ is a cutoff strategy that stops when $x_{1}\le c$ for some
$c\in\mathbb{R}\cup\{-\infty\},$ then $\mu_{2}^{*}(c)=\mu_{2}^{\bullet}+(r-\gamma_{n})\cdot(\mu_{1}^{\bullet}-\mathbb{E}[X_{1}\mid X_{1}\ge c]).$
\end{prop}
Proposition \ref{prop:pseudo_true_normal} shows that the misinference
phenomenon requires both data censoring and the correlational mistake.
Even biased agents with $r\notin[\gamma_{l},\gamma_{h}]$ correctly
estimate the fundamentals in the absence of censoring (i.e., under
the strategy $S$ that never stops). Conversely, agents whose prior
belief does not contain a dogmatic correlational mistake (i.e., when
$r\in[\gamma_{l},\gamma_{h}]$) end up with correct beliefs about
the fundamentals for any level of censoring.

Whether biased agents with $r\notin[\gamma_{l},\gamma_{h}]$ will
hold over-pessimistic or over-optimistic beliefs about the fundamentals
depends on the direction of their correlational mistake and the direction
of data censoring. When $r-\gamma_{n}<0$ and the strategy stops for
high values of $X_{1},$ and when $r-\gamma_{n}>0$ and the strategy
stops for low values of $X_{1},$ agents have over-pessimistic beliefs.
When $r-\gamma_{n}<0$ and the strategy stops for low values of $X_{1},$
and when $r-\gamma_{n}>0$ and the strategy stops for high values
of $X_{1},$ agents have over-optimistic beliefs. In all cases, more
severe censoring (i.e., a cutoff strategy that stops for more realizations
of $X_{1}$) exacerbates the belief distortion. Details of the intertemporal
correlation misperception interact with the region of selective censoring
to determine agents' long-run beliefs.

Turning to our main application, when agents exaggerate reversals
$r-\gamma_{n}<0$ and observe data generated from a cutoff rule that
stops for high $X_{1}$ (e.g., stop searching if and only if the early
candidate's quality is higher than some $c$), they have over-pessimistic
beliefs about $\mu_{2}$ and their beliefs decrease without bound
as the stopping threshold $c$ decreases. I will use this application
to explain why directional data censoring leads to belief distortions
for biased learners.

Suppose $r=0$ and $\gamma_{n}>0$. Under the gambler's fallacy, the
expected realization of $X_{2}$ depends on two factors: the second-period
pool quality $\mu_{2},$ and a reversal effect based on the realization
of $X_{1}$. The society of biased agents who stop for low values
of $X_{1}$ cannot end up with a correct or over-optimistic belief
about $\mu_{2}$, else they would be systematically disappointed by
the realizations of $X_{2}$ in their own histories in an environment
where $X_{1},X_{2}$ are objectively independent. This is because
the second draw is only observed when the first draw's quality is
low enough, a contingency that leads biased agents to expect positive
reversal on average. The long-run beliefs of the agents thus feature
two mistakes partially canceling each other out to better fit the
data, as their pessimism about the quality of the late-phase pool
counteracts their false expectation of positive reversals when the
first draw is bad enough to be rejected.

The severity of the biased agents' pessimism increases with the severity
of censoring. The intuition is that the bias leads agents to infer
a lower $\mu_{2}^{*}$ to better match $X_{2}$'s in histories that
start with bad $X_{1}$'s, but doing so carries the cost of a worse
model fit for histories that start with intermediate $X_{1}$'s. More
severe censoring — generated by a strategy that stops not only after
the very good early early draws but also after the intermediate ones
— alleviates this cost, as histories that start with intermediate
$X_{1}$'s no longer contain their associated $X_{2}$'s. The extra
censoring thus decreases the optimal inference $\mu_{2}^{*}$.

The agents jointly estimate the reversal parameter $r$ and the fundamentals
$\mu_{1}^{\bullet},\mu_{2}^{\bullet}$. Proposition \ref{prop:pseudo_true_normal}
says that agents always end up believing the nearest feasible parameter
$\gamma_{n}$ to the true reversal parameter $r$. To gain some geometric
intuition for this result, view the agents' inference problem as using
a scatter plot of $(x_{1},x_{2})$ data points to estimate a conditional
expectation, $\mathbb{E}[X_{2}\mid X_{1}=x_{1}]$. This conditional
expectation is a linear function in $x_{1}$ with a slope of $-\gamma$
and an intercept determined by $\mu_{2}$. The conditional expectation
in the true data-generating process has the slope $-r$. The agent
is free to infer any intercept, but must pick a slope such that $\gamma\in[\gamma_{l},\gamma_{h}].$
Geometrically speaking, the best-fitting regression line will have
the slope $-\gamma_{n}$. A line with a slope as close as possible
to the data-generating slope and the best-fitting intercept given
this slope will better describe the data points than a line with any
other feasible slope and any other intercept.

Proposition \ref{prop:pseudo_true_normal} also tells us that the
quality of the early pool is always correctly estimated with any stopping
strategy. This is because the first draw's quality $X_{1}$ is always
observed, and $\mu_{1}^{*}=\mu_{1}^{\bullet}$ provides the best fit
for the first-period data. The agents cannot improve the fit of second-period
data by distorting their inference about the early pool: for any reversal
parameter $\gamma$, fundamentals $(\mu_{1}^{'},\mu_{2})$ and $(\mu_{1}^{\bullet},\mu_{2}-\gamma(\mu_{1}^{\bullet}-\mu_{1}^{'}))$
generate the same conditional distributions of $X_{2}\mid(X_{1}=x_{1})$
for any realization $x_{1}$. Any distortion of the inference about
early pool from $\mu_{1}^{\bullet}$ to $\mu_{1}^{'}$ to better explain
$X_{2}$ data can be equivalently done by keeping $\mu_{1}^{*}=\mu_{1}^{\bullet}$
and shifting $\mu_{2}^{*}$ by $-\gamma(\mu_{1}^{\bullet}-\mu_{1}^{'})$.
There is no trade-off between fitting $X_{1}$ and fitting $X_{2}$,
so the agents correctly infer $\mu_{1}^{\bullet}$ to provide the
best fit for the early-pool mean.

\citet*{mueller2018job} report in their Figure 3 that very recently
unemployed workers underestimate their probability of finding a job
in the next three months. This is consistent with Proposition \ref{prop:pseudo_true_normal}'s
prediction of ex-ante pessimistic beliefs at the start of the search,
in a world where people suffer from the gambler's fallacy and accept
early draws (i.e., job offers) that are sufficiently good.

\subsection{Steady-State Stopping Behavior}

In this section, I turn to behavior in the steady state. In the main
application of the gambler's fallacy $(r=0,$ $\gamma_{n}>0)$, we
know from Proposition \ref{prop:pseudo_true_normal} that agents end
up with over-pessimistic beliefs about $\mu_{2}$ if they infer from
histories that are censored when $X_{1}\ge c$ for any threshold $c\in\mathbb{R}.$
But this pessimistic belief does not by itself imply that the misspecified
agents must stop too often compared to a rational agent who knows
the true fundamentals and $r.$ Outside of the steady state, there
is an intuition that an agent with the gambler's fallacy may stop
less often than a rational one, even if the biased agent is over-pessimistic
about $\mu_{2}$. Consider an environment with $r=0,$ $\gamma_{n}>0,$
and suppose the stopping problem satisfies $\kappa=0,q=0$, so there
is no cost of continuing but also no probability of recall. Suppose
the true fundamentals are $\mu_{1}^{\bullet}\gg\mu_{2}^{\bullet}$.
If a biased agent has the correct beliefs about the fundamentals,
she perceives a greater continuation value after $X_{1}=\mu_{2}^{\bullet}$
than a rational agent with the same correct beliefs, since the former
holds a false expectation of positive reversals after a bad (relative
to $\mu_{1}^{\bullet})$ early draw. The rational stopping cutoff
is $c^{\bullet}=\mu_{2}^{\bullet}$ and the rational agent is willing
to stop after $X_{1}=\mu_{2}^{\bullet}$, but the biased agent strictly
prefers to continue after such an early draw and has an indifference
threshold strictly above $c^{\bullet}$. By continuity, the biased
agent's cutoff threshold remains strictly above $c^{\bullet}$ even
under slightly pessimistic beliefs about $\mu_{2}.$

Such ambiguity about behavior disappears in the steady state. The
main result of this section, Proposition \ref{prop:steady_state_example},
compares the \emph{steady-state} stopping behavior of the biased learners
to the objectively optimal thresholds. Towards this result, I begin
with a lemma that characterizes the optimal behavior for an agent
that believes in the model $\Psi(\mu_{1},\mu_{2};\gamma)$, and a
sufficient condition about the existence and uniqueness of the steady
state.
\begin{lem}
\label{lem:behavior_simplified} Consider the model $\Psi(\mu_{1},\mu_{2};\gamma)$
for any $\mu_{1},\mu_{2},\gamma\in\mathbb{R}.$ When $\gamma\ne-1,$
there is a unique cutoff $C(\mu_{1},\mu_{2};\gamma)$ so that the
agent is indifferent between continuing and stopping after $X_{1}=C(\mu_{1},\mu_{2};\gamma)$.
When $\gamma>-1,$ the optimal strategy is to stop when $X_{1}\ge C(\mu_{1},\mu_{2};\gamma)$,
and $\mu_{2}\mapsto C(\mu_{1},\mu_{2};\gamma)$ is strictly increasing.
When $\gamma<-1,$ the optimal strategy is to stop when $X_{1}\le C(\mu_{1},\mu_{2};\gamma)$,
and $\mu_{2}\mapsto C(\mu_{1},\mu_{2};\gamma)$ is strictly decreasing.
\end{lem}
Lemma \ref{lem:behavior_simplified} says the optimal behavior under
the model $\Psi(\mu_{1},\mu_{2};\gamma)$ is a cutoff strategy, and
whether the agent stops after high enough or low enough values of
$X_{1}$ depends on if $\gamma>-1$ or $\gamma<-1.$ To understand
why, note that if the agent thinks $X_{1},X_{2}$ are independent
($\gamma=0$), then she will choose to stop when the realization of
$X_{1}$ is so large that the known payoff from stopping exceeds the
expectation of the uncertain payoff from continuing and drawing an
independent $X_{2}$. But if the agent thinks $X_{1},X_{2}$ are sufficiently
positively correlated ($\gamma<-1)$, then larger realizations of
$X_{1}$ make it even more attractive to continue. In this case, it
is bad realizations of $X_{1}$ that cause the agent to stop, for
the positive correlation makes the agent pessimistic about $X_{2}$
after a bad $X_{1}$.

Suppose $\gamma_{n}>-1$, and consider a simplified setting where
the agents know $\mu_{1}=\mu_{1}^{\bullet}$ and always believe in
$\gamma=\gamma_{n}$. For agents who exaggerate reversals ($r-\gamma_{n}<0$),
there is a positive feedback loop between distorted beliefs and distorted
strategies: a more pessimistic belief about the second-period pool
leads to a lower stopping cutoff by Lemma \ref{lem:behavior_simplified},
and a lower stopping cutoff leads to more pessimistic beliefs by Proposition
\ref{prop:pseudo_true_normal}. On the other hand, for agents who
suffer from the opposite correlational mistake ($r-\gamma_{n}>0$),
there is instead a negative feedback loop: a more pessimistic belief
about $\mu_{2}$ still leads to a lower stopping cutoff, but a lower
stopping cutoff leads to more \emph{optimistic} beliefs by Proposition
\ref{prop:pseudo_true_normal}. \citet*{heidhues2018unrealistic}
show that overconfidence and underconfidence biases in a static effort-choice
problem also lead to positive and negative feedback loops, respectively.
In both environments, reversing the direction of the bias changes
the nature of the feedback cycle between distorted actions and distorted
beliefs.

The next result gives a sufficient condition for the existence and
uniqueness of the steady state.
\begin{prop}
\label{prop:existence_and_uniqueness} There exists a unique steady
state if $|\frac{r-\gamma_{n}}{1+\gamma_{n}}|<1$.
\end{prop}
When $r=0,$ so the draws are objectively independent, Proposition
\ref{prop:existence_and_uniqueness} says a unique steady state exists
under any amount of the gambler's fallacy ($\gamma_{n}>0$), and also
under a moderate amount of the opposite correlational mistake ($-1/2<\gamma_{n}<0$).
In general, a steady state may fail to exist when Proposition \ref{prop:existence_and_uniqueness}'s
condition is violated, as the next example shows.
\begin{example}
\label{exa:non-existence}Suppose $\kappa=0$ and $q=0$ (no cost
of continuing and no probability of recall), and let $\gamma_{l}=\gamma_{h}=0$,
$r=-2,$ $\mu_{1}^{\bullet}=\mu_{2}^{\bullet}=0$. No steady state
exists in this setting. This is because by Lemma \ref{lem:behavior_simplified},
steady-state behavior must involve stopping for $X_{1}\ge c$ for
some $c\in\mathbb{R}.$ In fact, since the agent believes $X_{1},X_{2}$
are independent, she is indifferent between continuing and stopping
if the early draw equals $\mu_{2},$ her belief about the mean of
the second-period draw. Proposition \ref{prop:pseudo_true_normal}
implies her belief $\mu_{2}$ is related to $c$ by $\mu_{2}^{*}(c)=2\cdot\mathbb{E}[X_{1}\mid X_{1}\le c]<0$.
We need to find a $c<0$ such that $c=2\cdot\mathbb{E}[X_{1}\mid X_{1}\le c]$,
which is impossible. Intuitively, the feedback cycle between more
pessimistic beliefs and lower cutoff thresholds is expansionary and
tends to $-\infty.$
\end{example}
As Example \ref{exa:non-existence} hints at, the condition $|\frac{r-\gamma_{n}}{1+\gamma_{n}}|<1$
in Proposition \ref{prop:existence_and_uniqueness} ensures that the
feedback between beliefs and behavior is a contraction map.

Under the condition $|\frac{r-\gamma_{n}}{1+\gamma_{n}}|<1$, the
next result compares the (unique) steady-state cutoff threshold $c^{\infty}$
with the objectively optimal one, $c^{\bullet}$. Of course, by Lemma
\ref{lem:behavior_simplified}, if $r$ and $\gamma_{n}$ are on the
opposite sides of $-1,$ then the comparison of thresholds is meaningless
as the steady-state behavior will have the ``opposite'' kind of
stopping region relative to the optimal behavior. When they are on
the same side of $-1,$ Proposition \ref{prop:steady_state_example}
shows that whether $c^{\infty}<c^{\bullet}$ or $c^{\infty}>c^{\bullet}$
depends on the direction of the correlational mistake.
\begin{prop}
\label{prop:steady_state_example}Suppose $|\frac{r-\gamma_{n}}{1+\gamma_{n}}|<1$,
and suppose either both $r,\gamma_{n}>-1$ or both $r,\gamma_{n}<-1.$
Let $c^{\infty}$ be the cutoff where the steady-state strategy switches
between continuing and stopping, and let $c^{\bullet}$ be switching
cutoff of the objectively optimal strategy. If $r-\gamma_{n}<0,$
then $c^{\infty}<c^{\bullet}.$ If $r-\gamma_{n}>0,$ then $c^{\infty}>c^{\bullet}.$
\end{prop}
Combined with Proposition \ref{prop:pseudo_true_normal} and Lemma
\ref{lem:behavior_simplified}, Proposition \ref{prop:steady_state_example}
tells us the following when $|\frac{r-\gamma_{n}}{1+\gamma_{n}}|<1$:
if $r,\gamma_{n}>-1$ (so that steady-state and optimal strategies
stop after good first-period draws), then $\gamma_{n}>r$ implies
that the agent stops too often and underestimate $\mu_{2}$, while
$\gamma_{n}<r$ implies that the agent stops too rarely and overestimates
$\mu_{2}$. By contrast, if $r,\gamma_{n}<-1$ (so that steady-state
and optimal strategies stop after bad first-period draws), then the
implications of these two biases are reversed.

In particular, when $r=0$ and $\gamma_{n}>0$,  Proposition \ref{prop:steady_state_example}'s
early-stopping conclusion strengthens Proposition \ref{prop:pseudo_true_normal}'s
over-pessimism result. In the steady state, agents must be \emph{sufficiently}
pessimistic as to overcome the opposite intuition about late stopping
under the gambler's fallacy discussed earlier.  To understand the
intuition,  note biased agents believe in different conditional distributions
of $X_{2}$ following different realizations of $X_{1},$ with more
pessimistic beliefs after higher realizations. In a steady state $((\mu_{1}^{\infty},\mu_{2}^{\infty},\gamma_{n}),c^{\infty}),$
the agents' subjective distribution of $X_{2}$ following $X_{1}=c^{\infty}$
must be a leftward shift of the true distribution $\phi(\cdot\mid\mu_{2}^{\bullet})$.
Else, their subjective distributions of $X_{2}$ would stochastically
dominate the true distribution following \emph{all} $x_{1}$ values
in the continuation region, so heuristically they could improve the
fit of their model by lowering their belief about $\mu_{2}$. The
biased agents' indifference at $c^{\infty}$ is thus based on an overly
pessimistic belief about the continuation value, so we must have $c^{\infty}<c^{\bullet}$.

\subsection{Gambler's Fallacy with Independent Draws}

In this section, I derive additional steady-state results for the
main application of agents who suffer from the gambler's fallacy in
an environment with independent $X_{1}$ and $X_{2}$: that is, $r=0$
and $\gamma_{n}>0$.

\subsubsection{Comparative Statics in the Stage Game's Parameters}

How do steady-state beliefs react to changes in the stage game's parameters,
$q$ and $\kappa$? In general, when learners infer from exogenous
data, their decision problem does not influence learning outcomes.
This observation holds independently of whether learners are misspecified.
On the other hand, correctly specified learners in my setting always
end up with correct beliefs in the long run, so the game parameters
are again irrelevant. With misspecified learners in an endogenous-data
setting, however, changes in the stage game carry long-run consequences
on society's beliefs about the fundamentals.
\begin{prop}
\label{prop:comparative_statics_example} Suppose $r=0$ and $\gamma_{n}>0.$
Let $((\mu_{1}^{(q,\kappa)},\mu_{2}^{(q,\kappa)},\gamma_{n}),c^{(q,\kappa)})$
denote the unique steady-state beliefs and cutoff under parameters
$q\in[0,1),\kappa\in\mathbb{R}$. The steady-state belief $\mu_{2}^{(q,\kappa)}$
is strictly increasing in $q$ and strictly decreasing in $\kappa$,
but always satisfies $\mu_{2}^{(q,\kappa)}<\mu_{2}^{\bullet}$. The
steady-state cutoff threshold $c^{(q,\kappa)}$ is strictly increasing
in $q$ and strictly decreasing in $\kappa.$
\end{prop}
Proposition \ref{prop:comparative_statics_example} provides novel
predictions about how the economic environment affects biased inference
under the gambler's fallacy. It says when agents are more patient
(i.e., suffer a lower waiting cost, or receive a higher subsidy for
continuing) or when they have a higher chance of recalling previous
draws, then they will end up with less distorted beliefs about the
pool in the long run. These changes in environmental parameters partially
correct society's long-run beliefs by incentivizing longer search
and mitigating the censoring effect.

\subsubsection{Fictitious Variation and Censoring}

So far, I have assumed agents hold dogmatic and correct beliefs about
the variance of $X_{1}$ and the conditional variance of $X_{2}\mid(X_{1}=x_{1}).$
Now consider agents who are uncertain about these variances and jointly
estimate them together with the means of the pools. I show that agents
end up exaggerating the variances, in a way that depends on the severity
of data censoring.

For $\mu_{1},\mu_{2}\in\mathbb{R},$ $\sigma_{1}^{2},\sigma_{2}^{2}\ge0,$
and $\gamma\in\mathbb{R},$ let $\Psi(\mu_{1},\mu_{2},\sigma_{1}^{2},\sigma_{2}^{2};\gamma)$
refer to the joint distribution $X_{1}=\mu_{1}+\epsilon_{1},$ $X_{2}=\mu_{2}+\epsilon_{2}$
with $\epsilon_{1}\sim\mathcal{N}(0,\sigma_{1}^{2}),$ $(\epsilon_{2}\mid\epsilon_{1})\sim\mathcal{N}(-\gamma\epsilon_{1},\sigma_{2}^{2})$.
In this section, ``fundamentals'' refer to the four parameters $\mu_{1},\mu_{2},\sigma_{1}^{2},\sigma_{2}^{2}$,
and I assume for simplicity $\gamma_{l}=\gamma_{h}=\gamma>0.$ Objectively,
$X_{1},X_{2}$ are independent Gaussian random variables each with
a variance of $(\sigma^{\bullet})^{2}>0$, so the true joint distribution
of $(X_{1},X_{2})$ is $\Psi^{\bullet}:=\Psi(\mu_{1}^{\bullet},\mu_{2}^{\bullet},(\sigma^{\bullet})^{2},(\sigma^{\bullet})^{2};0)$.

Following Equation (\ref{eq:KL}), write $D_{KL}(\mathcal{H}^{\bullet}(c)\parallel\mathcal{H}(\Psi(\mu_{1},\mu_{2},\sigma_{1}^{2},\sigma_{2}^{2};\gamma);c))\ )$
to denote the KL divergence between the true distribution of histories
with $X_{2}$ censored whenever $X_{1}>c$ and the implied history
distribution under the fundamentals $\mu_{1},\mu_{2},\sigma_{1}^{2},\sigma_{2}^{2}$.
This divergence is given by {\footnotesize{}
\begin{align}
 & \int_{c}^{\infty}\phi(x_{1}\mid\mu_{1}^{\bullet},(\sigma^{\bullet})^{2})\cdot\ln\left(\frac{\phi(x_{1}\mid\mu_{1}^{\bullet},(\sigma^{\bullet})^{2})}{\phi(x_{1}\mid\mu_{1},\sigma_{1}^{2})}\right)dx_{1}\label{eq:KL_mean_var}\\
 & +\int_{-\infty}^{c}\left\{ \int_{-\infty}^{\infty}\phi(x_{1}\mid\mu_{1}^{\bullet},(\sigma^{\bullet})^{2})\cdot\phi(x_{2}\mid\mu_{2}^{\bullet},(\sigma^{\bullet})^{2})\cdot\ln\left[\frac{\phi(x_{1}\mid\mu_{1}^{\bullet},(\sigma^{\bullet})^{2})\cdot\phi(x_{2}\mid\mu_{2}^{\bullet},(\sigma^{\bullet})^{2})}{\phi(x_{1}\mid\mu_{1},\sigma_{2}^{2})\cdot\phi(x_{2}\mid\mu_{2}-\gamma(x_{1}-\mu_{1}),\sigma_{2}^{2})}\right]dx_{2}\right\} dx_{1},\nonumber 
\end{align}
}where $\phi(x\mid\mu,\sigma^{2})$ is the Gaussian density with mean
$\mu$ and variance $\sigma^{2},$ evaluated at $x.$

The next proposition gives closed-form expressions for the pseudo-true
fundamentals $\mu_{1}^{*},\mu_{2}^{*},(\sigma_{1}^{*})^{2},(\sigma_{2}^{*})^{2}$
that minimize Equation (\ref{eq:KL_mean_var}).
\selectlanguage{american}%
\begin{prop}
\label{prop:pseudo_true_mean_var} Suppose $r=0.$ The solutions of
\[
\min_{\mu_{1,}\mu_{2}\in\mathbb{R},\sigma_{1}^{2},\sigma_{2}^{2}\ge0}D_{KL}(\mathcal{H}^{\bullet}(c)\parallel\mathcal{H}(\Psi(\mu_{1},\mu_{2},\sigma_{1}^{2},\sigma_{2}^{2};\gamma);c))\ )
\]
 are $\mu_{1}^{*}(c)=\mu_{1}^{\bullet},$ $\mu_{2}^{*}(c)=\mu_{2}^{\bullet}-\gamma\left(\mu_{1}^{\bullet}-\mathbb{E}\left[X_{1}\mid X_{1}\le c\right]\right),$
$(\sigma_{1}^{*})^{2}(c)=(\sigma^{\bullet})^{2},$ and $(\sigma_{2}^{*})^{2}(c)=(\sigma^{\bullet})^{2}+\gamma^{2}\textnormal{Var}[X_{1}\mid X_{1}\le c].$
So, $(\sigma_{2}^{*})^{2}(c)$ strictly increases in $c.$
\end{prop}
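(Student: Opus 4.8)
The plan is to observe that the KL divergence in Equation (\ref{eq:KL_mean_var}) splits \emph{exactly} into two additive pieces and then minimize each in closed form. Writing $\phi_1^\bullet:=\phi(\cdot;\mu_1^\bullet,(\sigma^\bullet)^2)$ and $\phi_2^\bullet:=\phi(\cdot;\mu_2^\bullet,(\sigma^\bullet)^2)$, expand the logarithm in the uncensored block of (\ref{eq:KL_mean_var}) as $[\ln\phi_1^\bullet(x_1)-\ln\phi(x_1;\mu_1,\sigma_1^2)]+[\ln\phi_2^\bullet(x_2)-\ln\phi(x_2;\mu_2-\gamma(x_1-\mu_1),\sigma_2^2)]$. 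The first bracket, after the inner $x_2$-integral collapses to $1$, combines with the censored block ($x_1>c$) to give the \emph{untruncated} Gaussian cross-entropy $A(\mu_1,\sigma_1^2):=D_{KL}(\mathcal{N}(\mu_1^\bullet,(\sigma^\bullet)^2)\,\|\,\mathcal{N}(\mu_1,\sigma_1^2))$, because $\phi_1^\bullet$ is then integrated over all of $\mathbb{R}$. The second bracket, integrated over $x_1\le c$ against $\phi_1^\bullet$ — with the $x_2$-integral yielding $D_{KL}(\mathcal{N}(\mu_2^\bullet,(\sigma^\bullet)^2)\,\|\,\mathcal{N}(\mu_2-\gamma(x_1-\mu_1),\sigma_2^2))$ for each $x_1$ — gives a term $B(\mu_1,\mu_2,\sigma_2^2)$. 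So the objective is $A(\mu_1,\sigma_1^2)+B(\mu_1,\mu_2,\sigma_2^2)$, whose two summands share only the argument $\mu_1$.

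Next I would minimize sequentially, $\min_{\mu_1}\big[\min_{\sigma_1^2}A(\mu_1,\sigma_1^2)+\min_{\mu_2,\sigma_2^2}B(\mu_1,\mu_2,\sigma_2^2)\big]$. Using the standard formula $D_{KL}(\mathcal{N}(m_0,s_0^2)\,\|\,\mathcal{N}(m_1,s_1^2))=\tfrac12\ln(s_1^2/s_0^2)-\tfrac12+\tfrac{s_0^2+(m_0-m_1)^2}{2s_1^2}$ and the identity $\int_{-\infty}^c\phi_1^\bullet(x_1)f(x_1)\,dx_1=\mathbb{P}[X_1\le c]\cdot\mathbb{E}[f(X_1)\mid X_1\le c]$, I would write $B$ as $\mathbb{P}[X_1\le c]\big(\tfrac12\ln\tfrac{\sigma_2^2}{(\sigma^\bullet)^2}-\tfrac12+\tfrac{1}{2\sigma_2^2}[(\sigma^\bullet)^2+\mathbb{E}((\mu_2^\bullet-\mu_2+\gamma(X_1-\mu_1))^2\mid X_1\le c)]\big)$ and decompose the conditional second moment into its variance $\gamma^2\text{Var}[X_1\mid X_1\le c]$ plus the squared conditional mean $(\mu_2^\bullet-\mu_2+\gamma(\mathbb{E}[X_1\mid X_1\le c]-\mu_1))^2$. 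The $\mu_2$-minimizer (which does not depend on $\sigma_2^2$) zeroes out the squared-mean term, forcing $\mu_2=\mu_2^\bullet-\gamma(\mu_1-\mathbb{E}[X_1\mid X_1\le c])$ and — the key point — making $\min_{\mu_2,\sigma_2^2}B$ independent of $\mu_1$. Hence the outer minimization over $\mu_1$ is governed solely by $\min_{\sigma_1^2}A(\mu_1,\cdot)=\tfrac12\ln(1+(\mu_1-\mu_1^\bullet)^2/(\sigma^\bullet)^2)$, which is uniquely minimized at $\mu_1^*=\mu_1^\bullet$ with $(\sigma_1^*)^2=(\sigma^\bullet)^2$; substituting into the $\mu_2$-minimizer gives $\mu_2^*=\mu_2^\bullet-\gamma(\mu_1^\bullet-\mathbb{E}[X_1\mid X_1\le c])$, matching Proposition \ref{prop:pseudo_true_basic}.

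Finally, with $\mu_1,\mu_2$ fixed, I would minimize $s\mapsto\tfrac{a}{2}\ln s+\tfrac{b}{2s}$ with $a=\mathbb{P}[X_1\le c]>0$ and $b=\mathbb{P}[X_1\le c]\big((\sigma^\bullet)^2+\gamma^2\text{Var}[X_1\mid X_1\le c]\big)>0$ over $s=\sigma_2^2>0$; this map is strictly convex on $(0,\infty)$, diverges at both ends, and has its unique stationary point at $s=b/a$, yielding $(\sigma_2^*)^2=(\sigma^\bullet)^2+\gamma^2\text{Var}[X_1\mid X_1\le c]$. Along the way I would record that all minimizers are interior (so the constraints $\sigma_1^2,\sigma_2^2\ge0$ are slack): each of $A$ and $B$ diverges to $+\infty$ as the corresponding variance tends to $0$ — here using $\gamma>0$ and that a truncated Gaussian has strictly positive variance — and the log terms force divergence as the variances grow.

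I do not expect a genuine obstacle; the result is a closed-form optimization. The one thing to execute carefully is the bookkeeping in the splitting step — collecting the $X_1$-marginal contribution from \emph{both} the censored and the uncensored blocks so that it reassembles into a full (untruncated) Gaussian cross-entropy, and verifying that minimizing $B$ over $\mu_2$ truly removes the $\mu_1$-dependence. That cancellation is the same mechanism — the gambler's-fallacy reversal term exactly offsetting a shift in the unconditional mean — that underlies $\mu_1^*(c)=\mu_1^\bullet$ in the baseline model, and it is what lets the two-dimensional mean problem reduce to two decoupled one-dimensional ones.
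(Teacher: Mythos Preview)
Your proposal is correct and uses essentially the same splitting as the paper's proof: both rewrite the objective as an untruncated Gaussian KL term in $(\mu_1,\sigma_1^2)$ plus a truncated term carrying $(\mu_1,\mu_2,\sigma_2^2)$, and both exploit that in the second piece $\mu_1$ enters only through the combination $\mu_2+\gamma\mu_1$ (the paper phrases this as $\partial_{\mu_1}\xi=(\mu_1-\mu_1^\bullet)/\sigma_1^2+\gamma\,\partial_{\mu_2}\xi$, you phrase it as ``profiling out $\mu_2$ kills the $\mu_1$-dependence of $B$''). One small advantage of your write-up is that you explicitly argue convexity and boundary blow-up to certify the stationary point as the \emph{global} minimizer, whereas the paper's proof stops at the first-order conditions.
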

Comparing Proposition \ref{prop:pseudo_true_mean_var} and with the
expressions for $\mu_{1}^{*}(c),\mu_{2}^{*}(c)$ in Proposition \ref{prop:pseudo_true_normal}
(for the special case of $r=0$, $\gamma_{n}=\gamma>0$, and a strategy
that stops when  $X_{1}\ge c$) shows that \foreignlanguage{english}{agents}
misinfer the means in the same way regardless of whether they know
the variances. Biased \foreignlanguage{english}{agents} correctly
estimate the first-period variance, $(\sigma_{1}^{*})^{2}=(\sigma^{\bullet})^{2},$
but over-estimate second-period variance. They exaggerate the variation
in quality among the late-phase draws. This phenomenon relates to
findings in \citet{rabin2002inference} and \citet{rabin2010gambler},
who refer to exaggeration of variance under the gambler's fallacy
as \emph{fictitious variation}. The key innovation of Proposition
\foreignlanguage{english}{\ref{prop:pseudo_true_mean_var} is to show,
in an endogenous-data setting, how the degree of fictitious variation
depends on the severity of censoring.}

The magnitude of this distortion increases in the severity of the
gambler's fallacy but decreases with the severity of the censoring,
as $\text{Var}[X_{1}\mid X_{1}\le c]$ increases in $c$ for $X_{1}$
Gaussian.\foreignlanguage{english}{ Here is the intuition. Whereas
the objective conditional distribution of $X_{2}\mid(X_{1}=x_{1})$
is independent of $x_{1},$ the biased agents entertain different
beliefs about this distribution for different $x_{1}$'s. The agents'
best-fitting inference about $\mu_{2}$ ensures their belief about
$X_{2}\mid(X_{1}=x_{1})$ fits the data well following ``typical''
realizations of $x_{1}$ in the continuation region $(-\infty,c]$.
But they are still surprised when they experience a streak of bad
draws in their own stage game. Agents who observe such surprising
streaks attribute the unexpectedly low realizations of $X_{2}$ to
``noise,'' and thus pass down beliefs that estimate a higher conditional
variance of $X_{2}\mid(X_{1}=x_{1})$. A larger fraction of the agents
attribute their data to ``noise'' when $\text{Var}[X_{1}\mid X_{1}\le c]$
is larger, for the frequency of the surprising streaks depends on
how much $X_{1}$ tends to deviate from its typical value of $\mathbb{E}[X_{1}\mid X_{1}\le c]$
conditional on the event $\{X_{1}\le c\}$.}

\selectlanguage{english}%
The next result demonstrates the interplay between fictitious variation
and endogenous censoring in the steady state. Consider two societies
of agents, who have the same bias, play the same stage game, and face
the same true fundamentals. Agents in Society A know the true variances
and only infer about $(\mu_{1},\mu_{2}),$ while those in Society
B do not know the variances and infer about $(\mu_{1},\mu_{2},\sigma_{1}^{2},\sigma_{2}^{2})$.
\begin{prop}
\label{prop:AB_societies}Suppose $r=0,$ $\gamma_{l}=\gamma_{h}=\gamma$,
and the probability of recall is interior, $0<q<1.$ Let $(\mu_{1}^{A},\mu_{2}^{A},c^{A})$
and $(\mu_{1}^{B},\mu_{2}^{B},(\sigma_{1}^{B})^{2},(\sigma_{2}^{B})^{2},c^{B})$
be the steady-state beliefs about the fundamentals and the steady-state
cutoffs in the two societies. Then $\mu_{2}^{B}>\mu_{2}^{A}$ and
$c^{B}>c^{A}.$ Also, $\sigma_{2}^{B}>\sigma_{2}^{*}(c^{A})$.
\end{prop}
The endogenous-data setting leads to two novel implications of fictitious
variation relative to \foreignlanguage{american}{\citet{rabin2010gambler}'s
exogenous-data world.} First, even though \foreignlanguage{american}{Proposition
\ref{prop:pseudo_true_mean_var} implies that the two societies would
make the same inferences about the pool means if they were given the
same data, in steady state Society B holds more optimistic (i.e.,
more correct) beliefs about $\mu_{2}$ and uses a higher cutoff than
Society A. Allowing uncertainty on one dimension (variance) ends up
affecting society's long-run inference in another dimension (mean),
because a belief in fictitious variation increases the }agents'\foreignlanguage{american}{
perceived option value of continuing and thus changes their behavior
and the kind of data they observe in the steady state. Second, fictitious
variation has a ``multiplier effect,'' as formalized by the final
statement of Proposition \ref{prop:AB_societies}. Society B's steady-state
belief about $\sigma_{2}$ is higher than what it would have been
had they simply inferred using data generated from Society A's steady-state
cutoff $c^{A}.$ Allowing for uncertainty about the pool variances
leads to fictitious variation that increases Society B's cutoff above
$c^{A}$. This is because when the }agent\foreignlanguage{american}{
can recall the first draw with an interior probability, the option
value of waiting for the second draw is larger when the second labor
pool has a larger variance in quality. This higher cutoff further
heightens Society B's belief in fictitious variation, since Proposition
\ref{prop:pseudo_true_mean_var} implies $\sigma_{2}^{*}(c)$ is strictly
increasing, and so forth.}

\section{\label{sec:Convergence}Convergence to the Steady State}

This section shows the steady state defined and studied earlier corresponds
to the long-run learning outcome for a society of biased agents acting
one by one.

Time is discrete and partitioned into rounds $t=1,2,3,...$ One short-lived
agent arrives per round. For simplicity, in analyzing convergence
I focus on learning about the fundamentals $\mu_{1},\mu_{2}$ and
suppose agents have a degenerate belief about the reversal parameter,
$\gamma_{l}=\gamma_{h}>-1.$ Agent 1 starts with a prior belief $M_{0}$
given by a continuously differentiable prior density $m_{0}:[\underline{\mu}_{1},\bar{\mu}_{1}]\times[\underline{\mu}_{2},\bar{\mu}_{2}]\to\mathbb{R}_{>0}$,
while each agent $t\ge2$ adopts the final belief $\tilde{M}_{t-1}$
of agent $t-1$ as her prior belief. Since all agents commit the same
statistical bias, each agent's inherited belief aggregates all the
information in all predecessors' histories. The same learning dynamics
obtain in an environment where every agent starts with the common
prior belief $M_{0}$ and observes the stage-game histories of all
predecessors.

In each round $t$, agent $t$ chooses a cutoff threshold $\tilde{C}_{t}$
to maximize her expected payoff based on her prior belief.\footnote{I focus on learning across different iterations of the stage game
and assume agents do not update beliefs within the stage game.} She observes the outcome of her game and updates her belief from
$\tilde{M}_{t-1}$ to $\tilde{M}_{t}$ by applying Bayes' rule to
her stage-game history, $\tilde{H}_{t}\in\mathbb{H}$. She then passes
down $\tilde{M}_{t}$ as the prior belief of agent $t+1.$

By Proposition \ref{prop:existence_and_uniqueness},    there exists
a unique steady state $((\mu_{1}^{\infty},\mu_{2}^{\infty},\gamma_{n}),c^{\infty})$
when $|\frac{r-\gamma_{n}}{1+\gamma_{n}}|<1$. Proposition \ref{prop:one-by-one}
shows that almost surely behavior and belief converge to this steady
state for any prior density $m_{0}$, provided the support $[\underline{\mu}_{1},\bar{\mu}_{1}]\times[\underline{\mu}_{2},\bar{\mu}_{2}]$
includes the steady-state beliefs $(\mu_{1}^{\infty},\mu_{2}^{\infty})$.
To state this convergence result formally, I need to develop the probability
space underlying the learning system.

The sequences $(\tilde{M}_{t}),(\tilde{C}_{t}),(\tilde{H}_{t})$ are
stochastic processes whose randomness stem from randomness of the
stage-game draws in different rounds. The convergence result is about
the almost sure convergence of the processes $(\tilde{M}_{t})$ and
$(\tilde{C}_{t}).$ Consider the $\mathbb{R}^{2}$-valued stochastic
process $(X_{t})_{t\ge1}=(X_{1,t},X_{2,t})_{t\ge1}$, where $X_{t}$
and $X_{t^{'}}$ are independent for $t\ne t^{'}$. Within each $t,$
$X_{1,t}\sim\phi(\cdot\mid\mu_{1}^{\bullet})$ and $X_{2,t}\mid(X_{1,t}=x_{1,t})\sim\phi(\cdot\mid\mu_{2}^{\bullet}-r(x_{1,t}-\mu_{1}^{\bullet}))$
are jointly Gaussian. Interpret $X_{t}$ as the pair of potential
draws in the $t$-th round of the stage game. Clearly, there exists
a probability space $(\Omega,\mathcal{A},\mathbb{P})$, with sample
space $\Omega=(\mathbb{R}^{2})^{\infty}$ interpreted as paths of
the process just described, $\mathcal{A}$ the Borel $\sigma$-algebra
on $\Omega,$ and $\mathbb{P}$ the measure on sample paths so that
the process $X_{t}(\omega)=\omega_{t}$ has the desired distribution.
The term ``almost surely'' means ``with probability 1 with respect
to the realization of the infinite sequence of all (potential) draws'',
i.e., $\mathbb{P}$-almost surely. The processes $(\tilde{M}_{t}),(\tilde{C}_{t}),(\tilde{H}_{t})$
are defined on this probability space and adapted to the filtration
$(\mathcal{F}_{t})_{t\ge1}$, where $\mathcal{F}_{t}$ is the sub-$\sigma$-algebra
generated by draws up to round $t$, $\mathcal{F}_{t}=\sigma((X_{s})_{s=1}^{t})$.
Write $(\tilde{\mu}_{1,t},\tilde{\mu}_{2,t})$ for the random element
in $[\underline{\mu}_{1},\bar{\mu}_{1}]\times[\underline{\mu}_{2},\bar{\mu}_{2}]$
given by the belief $\tilde{M}_{t}$.
\begin{prop}
\label{prop:one-by-one} Suppose $|\frac{r-\gamma_{n}}{1+\gamma_{n}}|<1$,
$r\ne\gamma_{n},$ and $\gamma_{n}>-1$. Provided $\underline{\mu}_{1}<\mu_{1}^{\bullet}<\bar{\mu}_{1}$
and $\underline{\mu}_{2}<\mu_{2}^{\infty}<\bar{\mu}_{2}$, almost
surely $\lim_{t\to\infty}\tilde{C}_{t}=c^{\infty}$ and $(\tilde{\mu}_{1,t},\tilde{\mu}_{2,t})_{t\ge1}$
converges in $L^{1}$ to $(\mu_{1}^{\bullet},\mu_{2}^{\infty})$,
where $((\mu_{1}^{\bullet},\mu_{2}^{\infty},\gamma_{n}),c^{\infty})$
is the unique steady state.
\end{prop}

\subsection{Proof Outline for Proposition \ref{prop:one-by-one}}

The argument for Proposition \ref{prop:one-by-one} adapts techniques
from \citet*{heidhues2018unrealistic}, in particular a law of large
numbers for martingale increments. I discuss the novelties specific
to my environment below.

\subsubsection{When $\mu_{1}$ Is Known}

First consider a simpler situation where agents dogmatically know
that $\mu_{1}=\mu_{1}^{\bullet}$ and only entertain uncertainty about
$\mu_{2}$ in some bounded interval $[\underline{\mu}_{2},\bar{\mu}_{2}]$
that includes $\mu_{2}^{\infty}$. I use a statistical tool from \citet*{heidhues2018unrealistic},
a version of the law of large numbers for martingales whose quadratic
variation grows linearly.

\textbf{Proposition 10 }from \citet*{heidhues2018unrealistic}: \emph{Let
$(y_{t})_{t}$ be a martingale that satisfies a.s. $[y_{t}]\le vt$
for some constant $v\ge0.$ We have that a.s. $\lim_{t\to\infty}\frac{y_{t}}{t}=0$.}

After simplifying the problem with this result, I establish a pair
of mutual bounds on asymptotic behavior and asymptotic beliefs. If
cutoff thresholds are asymptotically bounded between $c^{l}$ and
$c^{h},$ $c^{l}<c^{h},$ then beliefs about $\mu_{2}$ must be asymptotically
supported on the interval $[\mu_{2}^{*}(c^{l}),\mu_{2}^{*}(c^{h})]$
when $r-\gamma_{n}<0,$and asymptotically supported on the interval
$[\mu_{2}^{*}(c^{h}),\mu_{2}^{*}(c^{l})]$ when $r-\gamma_{n}>0$.
Conversely, if belief is asymptotically supported on the subinterval
$[\mu_{2}^{l},\mu_{2}^{h}]\subseteq[\underline{\mu}_{2},\bar{\mu}_{2}]$,
then cutoff thresholds must be asymptotically bounded between $C(\mu_{1}^{\bullet},\mu_{2}^{l};\gamma_{n})$
and $C(\mu_{1}^{\bullet},\mu_{2}^{h};\gamma_{n})$.

Applying this pair of lemmas to $[\underline{\mu}_{2},\bar{\mu}_{2}]$,
I conclude that asymptotically $\tilde{M}_{t}$ must be supported
on the subinterval with the end points $\mathcal{I}(\underline{\mu}_{2})$
and $\mathcal{I}(\bar{\mu}_{2}),$ where $\mathcal{I}$ is the composition
$\mathcal{I}(\mu_{2}):=\mu_{2}^{*}(C(\mu_{1}^{\bullet},\mu_{2};\gamma)).$
The proof of Proposition \ref{prop:existence_and_uniqueness} implies
that $\mathcal{I}$ is a contraction map whose iterates converge to
$\mu_{2}^{\infty}.$ Therefore by repeatedly applying the pair of
lemmas, the bound on asymptotic beliefs gets refined down to the singleton
$\{\mu_{2}^{\infty}\}$, showing the almost-sure convergence of beliefs
and behavior.

\subsubsection{Uncertainty About $\mu_{1}$}

In the hypothesis of Proposition \ref{prop:one-by-one}, both $\mu_{1}$
and $\mu_{2}$ are unknown, so there is two-dimensional uncertainty
about the fundamentals. This complication prevents a direct application
of \citet*{heidhues2018unrealistic}'s statistical tools, as their
tools are only designed to work with a one-dimensional fundamental.
But the structure of the inference problem is such that I can separately
bound the agents' asymptotic beliefs in two ``directions,'' thus
reducing the task of proving a two-dimensional belief bound into a
pair of tasks involving one-dimensional belief bounds.

Consider a pair of fundamentals, $(\mu_{1},\mu_{2})$ and $(\mu_{1}^{'},\mu_{2}^{'})=(\mu_{1}+d,\mu_{2}-\gamma d)$
for some $d>0$, satisfying $\mu_{1},\mu_{1}^{'}\le\mu_{1}^{\bullet}$.
That is, $(\mu_{1},\mu_{2})$ and $(\mu_{1}^{'},\mu_{2}^{'})$ lie
on the same line with slope $-\gamma$. For any uncensored history
$(x_{1},x_{2})\in\mathbb{R}^{2}$, the likelihood of second-period
draw $x_{2}$ is the same under both pairs of fundamentals, $\phi(x_{2}\mid\mu_{2}-\gamma(x_{1}-\mu_{1}))=\phi(x_{2}\mid\mu_{2}^{'}-\gamma(x_{1}-\mu_{1}^{'})).$
So both pairs of fundamentals $(\mu_{1},\mu_{2})$ and $(\mu_{1}^{'},\mu_{2}^{'})$
explain $X_{2}$ data equally well in \emph{all} uncensored histories.
At the same time, $(\mu_{1}^{'},\mu_{2}^{'})$ provides a strictly
better fit for $X_{1}$ data on average than $(\mu_{1},\mu_{2}),$
since $\mu_{1}<\mu_{1}^{'}\le\mu_{1}^{\bullet}.$ This means in the
long run, fundamentals $(\mu_{1},\mu_{2})$ should receive much less
posterior probability than $(\mu_{1}^{'},\mu_{2}^{'})$, as the latter
better rationalize the data overall.

To formalize this, I compute the directional derivative for data log-likelihood
along the vector $(1,-\gamma)$ in the space of fundamentals. I establish
an (almost-sure) positive lowerbound on this directional derivative
at all points at least $2\epsilon^{'}$ to the left of $\mu_{1}^{\bullet},$
and an analogous negative upperbound to the right of $\mu_{1}^{\bullet}.$
(The picture below is an illustration for the case of $\gamma>0$.)
This allows me to show the region colored in red receives 0 posterior
probability asymptotically, by comparing each point in red with a
corresponding point in blue along a line of slope $-\gamma$.
\begin{center}
\includegraphics[scale=0.55]{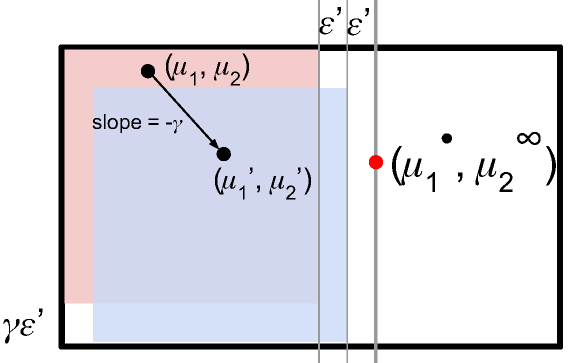}
\par\end{center}

By repeating this argument for small values of $\epsilon^{'}$ (and
applying the symmetric bound to the right of $\mu_{1}^{\bullet})$,
I show that belief is asymptotically concentrated either along a small
vertical strip containing the steady state beliefs, $(\mu_{1}^{\bullet},\mu_{2}^{\infty})$,
or along an edge of belief's support, colored in green. The latter
possibility requires belief in an extreme value of $\mu_{2}\in\{\underline{\mu}_{2},\bar{\mu}_{2}\}$
in the support of the prior $m_{0}$ and can be ruled out by showing
that, within these regions, slightly increasing or decreasing belief
in $\mu_{2}$ leads to better fit.
\begin{center}
\includegraphics[scale=0.55]{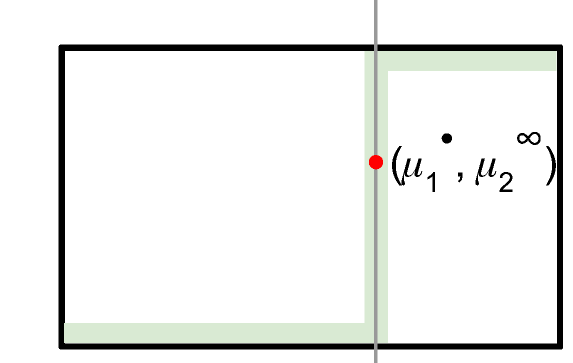}
\par\end{center}

Having restricted the long-run belief to a thin vertical strip, the
first ``direction'' of the belief bounds is complete and the dimensionality
of uncertainty is effectively reduced back to one. The rest of the
argument proceeds similarly to the case where agents know $\mu_{1}^{\bullet}$
discussed above.

\section{\label{sec:Related-Literature}Related Theoretical Literature}

A strand of behavioral economics literature has focused on a different
cognitive error when agents learn from partial data: selection neglect.
Theory papers in this area have studied agents who observe a selective
sample in different settings: good's quality in a bilateral trade
game \citep{esponda2008behavioral}, investment outcomes by past entrepreneurs
\citep{jehiel2018investment}, government policy effectiveness \citep{esponda2017conditional,esponda2019retrospective},
and outcomes of recent experiments \citep{chen_censor}. In all of
these settings, the sample selection depends on some unobserved private
information of other players. Biased agents fail to account for the
informational content of selection,\footnote{Some recent experiments have demonstrated selection neglect in laboratory
subjects: \citet{enke2017you}, \citet*{barron2019everyday}, \citet*{Araujo2018}.} thus make wrong inferences. While I also consider a setting where
agents learn from partial data, I focus on the implications of a different
bias in such environments: the gambler's fallacy. Selection neglect
and the gambler's fallacy can be conceptually unified under the broader
category of correlational mistakes. As \citet{spiegler2016bayesian}
and \citet{spiegler2017data} point out, many examples of selection
neglect can be viewed as biases stemming from incorrect conditional-independence
assumptions. I emphasize that the biased agents in my world do not
additionally suffer from selection neglect. Agents derive different
inferences from histories censored at different thresholds purely
as a result of misperceiving the reversal parameter that relates different
draws; this conclusion does not come from the combination of multiple
behavioral biases.

\citet{rabin2002inference} and \citet{rabin2010gambler} are the
first to study the inferential mistakes implied by the gambler's fallacy.
Like these papers, I consider agents who believe in reversals conditional\emph{
}on the underlying fundamentals and mislearn some parameters of the
world as a result. Except for an example in \citet{rabin2002inference},
all such investigations focus on passive inference, whereby learners
observe an exogenous signal process. By contrast, this paper examines
an endogenous learning setting where actions affect observables. Section
7 of \citet{rabin2002inference} discusses an example of endogenous
learning with a finite-urn model of the gambler's fallacy. The nature
of \citet{rabin2002inference}'s endogenous data, however, is unrelated
to the censoring effect central to the current paper.\footnote{In \citet{rabin2002inference}'s example, biased agents (correctly)
believe that the part of the data which is always observable is independent
of the part of the data which is sometimes missing. However, what
I term the ``censoring effect'' is about misinference resulting
from agents wrongly believing in negative correlation between the
early draws that are always observed and the later draws that may
be censored, depending on the realizations of the early draws. I discuss
this further in the Online Appendix of an earlier version of this
paper: \href{https://arxiv.org/pdf/1803.08170v5.pdf}{https://arxiv.org/pdf/1803.08170v5.pdf}}

This work joins a strand of literature on the implications of misspecified
Bayesian learning when the learner's actions affect the data she observes.
The earliest example is \citet{nyarko1991learning}. \citet{esponda2016berk}
propose an equilibrium concept for such settings — the Berk-Nash equilibrium.
Subsequently, a number of papers have studied the properties of Berk-Nash
equilibria in different applied contexts \citep*{fudenberg2017,heidhues2018unrealistic,FII2018}
and the persistence and stability of misspecifications \citep*{FII_welfare_based,FL_mutation,he2020evolutionarily}.
In addition to using this framework to explore the gambler's fallacy,
I also highlight a new source of data endogeneity relative to the
existing papers — the censoring effect in an optimal-stopping problem.
Agents' stopping decisions determine how many signals they observe
about the fundamentals. Other recent papers \citep*{esponda2021asymptotic,fudenberg2021limit,frick2021belief,heidhues2018convergence}
prove general theorems about the convergence of misspecified learning
in different settings. Though not the primary contribution of this
work, the convergence result in Proposition \ref{prop:one-by-one}
deals with a setting that is not covered by these papers: a multi-dimensional
inference problem with a continuum of states, signals, and actions.

Although Section \ref{sec:Convergence} considers a learning system
with a sequence of short-lived agents, the ``social learning'' aspect
of the framework is not central to the results. In fact, the environment
where a sequence of short-lived agents act one at a time is equivalent
to an environment where a single long-lived agent plays the stage
game repeatedly, myopically maximizing her expected payoff in each
iteration of the stage game. In the growing literature on social learning
with misspecified Bayesians (e.g., \citet*{eyster2010naive,guarino13,bohren2016informational,bohren2017bounded,dasaratha2020network,FII2018,bushong2018misattributeReference}),
agents observe their predecessors' actions but make errors when inverting
these actions to deduce said predecessors' information. This kind
of action inversion does not take place here: later agents inherit
all the information that their predecessors have seen by adopting
their beliefs, so predecessors' actions are uninformative.

The econometrics literature has also studied data-generating processes
with censoring — for example, the Tobit model and models of competing
risks.\footnote{References can be found in \citet{amemiya1985advanced} and \citet{crowder2001classical}.}
This literature has primarily focused on the issue of model identification
from censored data \citep*{cox1962,tsiatis1975nonidentifiability,heckman1989identifiability}.
In my setting, there is no identification problem for correctly specified
agents. Instead, I study how agents make wrong parameter estimates
from censored data when they infer using a family of misspecified
models. Another contrast is that the econometrics literature has focused
on exogenous data-censoring mechanisms, but censoring is endogenous
in this paper and depends on the beliefs of previous agents.

\section{\label{sec:Concluding-Remarks}Concluding Remarks}

This paper studies endogenous learning dynamics of misspecified agents.
The general framework allows different correlational mistakes, and
shows the interaction between the statistical bias and data censoring
in optimal-stopping problems distorts beliefs and behavior. When agents
suffer from the gambler's fallacy, they hold overly pessimistic beliefs
about the fundamentals and stop too frequently in the steady state.
Lower continuation costs, as well as initial uncertainty about the
distribution's variance, partially correct asymptotic beliefs about
the distribution's mean.

An earlier version of this paper\footnote{Available at \href{https://arxiv.org/pdf/1803.08170v5.pdf}{https://arxiv.org/pdf/1803.08170v5.pdf}}
shows that the steady-state results (about over-pessimistic inference
and early stopping) and the convergence result continue to hold for
a larger class of stage games and any symmetric, log-concave distributions.
That earlier version also contains an extension with any finite number
$L\ge2$ of periods instead of two periods.

In line with previous work on the gambler's fallacy, I take the behavioral
error as given and do not try to explain the origin of the bias. Endogenizing
the gambler's fallacy and other common statistical errors is an interesting
open question.

I have studied a particular environment where censoring happens (histories
in optimal-stopping problems). The key mechanism I highlight, the
interaction between data censoring and bias, applies more broadly
and delivers different predictions in different contexts. Environments
that feature different censoring patterns would produce different
predictions, but again through the same basic mechanism— interaction
between censoring and bias. More broadly, other kinds of ``symmetric''
behavioral biases may lead to ``asymmetric'' predictions in environments
that feature directional data censoring. I am leaving open the interaction
of other kinds of behavioral learning with other censoring mechanisms
to future work. 

\begin{spacing}{1.1}
\bibliographystyle{ecta}
\bibliography{gambler}

\end{spacing}
\begin{center}
\textbf{\Large{}Appendix}{\Large\par}
\par\end{center}

\renewcommand{\thesection}{A\arabic{section}} 
\renewcommand{\thedefn}{A.\arabic{defn}} 
\renewcommand{\thelem}{A.\arabic{lem}} 
\renewcommand{\theprop}{A.\arabic{prop}} 
\renewcommand{\thetable}{A.\arabic{table}}
\setcounter{section}{0}
\setcounter{prop}{0}
\setcounter{lem}{0}
\setcounter{defn}{0}
\setcounter{table}{0}

\section{\label{sec:Omitted-Proofs-from-Main}Proofs}

\subsection{Proof of Proposition \ref{prop:pseudo_true_normal}}
\begin{proof}
In the true model, $X_{2}|(X_{1}=x_{1})\sim\mathcal{N}(\mu_{2}^{\bullet}-r(x_{1}-\mu_{1}^{\bullet}),\sigma^{2})$,
while the agents' feasible model $\Psi(\mu_{1},\mu_{2};\gamma)$ has
$X_{2}|(X_{1}=x_{1})\sim\mathcal{N}(\mu_{2}-\gamma(x_{1}-\mu_{1}),\sigma^{2})$.
Suppose histories are generated with a stopping rule that continues
in the positive Lebesgue measure set $K\subseteq\mathbb{R}.$ The
objective in Equation (\ref{eq:KL}) is:{\small{}
\begin{align*}
 & \int_{x_{1}\notin K}\phi(x_{1}\mid\mu_{1}^{\bullet})\cdot\ln\left(\frac{\phi(x_{1}\mid\mu_{1}^{\bullet})}{\phi(x_{1}\mid\mu_{1})}\right)dx_{1}\\
 & +\int_{x_{1}\in K}\phi(x_{1}\mid\mu_{1}^{\bullet})\cdot\left\{ \int_{-\infty}^{\infty}\phi(x_{2}\mid\mu_{2}^{\bullet}-r(x_{1}-\mu_{1}^{\bullet}))\cdot\ln\left[\frac{\phi(x_{1}\mid\mu_{1}^{\bullet})\cdot\phi(x_{2}\mid\mu_{2}^{\bullet}-r(x_{1}-\mu_{1}^{\bullet}))}{\phi(x_{1}\mid\mu_{1})\cdot\phi(x_{2}\mid\mu_{2}-\gamma(x_{1}-\mu_{1}))}\right]dx_{2}\right\} dx_{1}.
\end{align*}
}This can be rewritten as {\footnotesize{}
\begin{align*}
 & \int_{x_{1}\notin K}\phi(x_{1}\mid\mu_{1}^{\bullet})\ln\left(\frac{\phi(x_{1}\mid\mu_{1}^{\bullet})}{\phi(x_{1}\mid\mu_{1})}\right)dx_{1}+\int_{x_{1}\in K}\phi(x_{1}\mid\mu_{1}^{\bullet})\left\{ \int_{-\infty}^{\infty}\phi(x_{2}\mid\mu_{2}^{\bullet}-r(x_{1}-\mu_{1}^{\bullet}))\ln\left[\frac{\phi(x_{1}\mid\mu_{1}^{\bullet})\cdot}{\phi(x_{1}\mid\mu_{1})}\right]dx_{2}\right\} dx_{1}\\
 & +\int_{x_{1}\in K}\phi(x_{1}\mid\mu_{1}^{\bullet})\cdot\left\{ \int_{-\infty}^{\infty}\phi(x_{2}\mid\mu_{2}^{\bullet}-r(x_{1}-\mu_{1}^{\bullet}))\cdot\ln\left[\frac{\phi(x_{2}\mid\mu_{2}^{\bullet}-r(x_{1}-\mu_{1}^{\bullet}))}{\phi(x_{2}\mid\mu_{2}-\gamma(x_{1}-\mu_{1}))}\right]dx_{2}\right\} dx_{1},
\end{align*}
}which is: 
\begin{align*}
 & \int_{-\infty}^{\infty}\phi(x_{1}\mid\mu_{1}^{\bullet})\cdot\ln\left(\frac{\phi(x_{1}\mid\mu_{1}^{\bullet})}{\phi(x_{1}\mid\mu_{1})}\right)dx_{1}+\\
 & +\int_{x_{1}\in K}\phi(x_{1}\mid\mu_{1}^{\bullet})\cdot\int_{-\infty}^{\infty}\phi(x_{2}\mid\mu_{2}^{\bullet}-r(x_{1}-\mu_{1}^{\bullet}))\cdot\ln\left[\frac{\phi(x_{2}\mid\mu_{2}^{\bullet}-r(x_{1}-\mu_{1}^{\bullet}))}{\phi(x_{2}\mid\mu_{2}-\gamma(x_{1}-\mu_{1}))}\right]dx_{2}dx_{1}.
\end{align*}
The KL divergence between $\mathcal{N}(\mu_{\text{true}},\sigma_{\text{true}}^{2})$
and $\mathcal{N}(\mu_{\text{model}},\sigma_{\text{model}}^{2})$ is
{\small{}$\ln\frac{\sigma_{\text{model}}}{\sigma_{\text{true}}}+\frac{\sigma_{\text{true}}^{2}+(\mu_{\text{true}}-\mu_{\text{model}})^{2}}{2\sigma_{\text{model}}^{2}}-\frac{1}{2},$}
so we may simplify the first term and the inner integral of the second
term:
\[
\frac{(\mu_{1}-\mu_{1}^{\bullet})^{2}}{2\sigma^{2}}+\int_{x_{1}\in K}\phi(x_{1}\mid\mu_{1}^{\bullet})\cdot\frac{(\mu_{2}-\gamma(x_{1}-\mu_{1})-\mu_{2}^{\bullet}+r(x_{1}-\mu_{1}^{\bullet}))^{2}}{2\sigma^{2}}dx_{1}.
\]
Multiplying through by $\sigma^{2}$, we get a simplified objective
with the same minimizers: 
\[
\xi(\mu_{1},\mu_{2},\gamma)=\frac{(\mu_{1}-\mu_{1}^{\bullet})^{2}}{2}+\int_{x_{1}\in K}\phi(x_{1}\mid\mu_{1}^{\bullet})\cdot\frac{1}{2}\cdot[\mu_{2}-\gamma(x_{1}-\mu_{1})-\mu_{2}^{\bullet}+r(x_{1}-\mu_{1}^{\bullet})]^{2}dx_{1}.
\]
We have the partial derivatives by differentiating under the integral
sign, 
\[
\frac{\partial\xi}{\partial\mu_{2}}=\int_{x_{1}\in K}\phi(x_{1}\mid\mu_{1}^{\bullet})\cdot[\mu_{2}-\gamma(x_{1}-\mu_{1})-\mu_{2}^{\bullet}+r(x_{1}-\mu_{1}^{\bullet})]dx_{1},
\]
\begin{align*}
\frac{\partial\xi}{\partial\mu_{1}} & =(\mu_{1}-\mu_{1}^{\bullet})+\gamma\int_{x_{1}\in K}\phi(x_{1}\mid\mu_{1}^{\bullet})\cdot[\mu_{2}-\gamma(x_{1}-\mu_{1})-\mu_{2}^{\bullet}+r(x_{1}-\mu_{1}^{\bullet})]dx_{1}\\
 & =(\mu_{1}-\mu_{1}^{\bullet})+\gamma\frac{\partial\xi}{\partial\mu_{2}},
\end{align*}
\[
\frac{\partial\xi}{\partial\gamma}=-\int_{x_{1}\in K}\phi(x_{1}\mid\mu_{1}^{\bullet})\cdot[x_{1}-\mu_{1}]\cdot[\mu_{2}-\gamma(x_{1}-\mu_{1})-\mu_{2}^{\bullet}+r(x_{1}-\mu_{1}^{\bullet})]dx_{1}.
\]
Suppose $(\mu_{1}^{*},\mu_{2}^{*},\gamma^{*})$ is the minimum. By
the first-order conditions for $\mu_{1}$ and $\mu_{2}$, we have:
\[
\frac{\partial\xi}{\partial\mu_{1}}(\mu_{1}^{*},\mu_{2}^{*},\gamma^{*})=\frac{\partial\xi}{\partial\mu_{2}}(\mu_{1}^{*},\mu_{2}^{*},\gamma^{*})=0\Rightarrow\mu_{1}^{*}=\mu_{1}^{\bullet}.
\]
Substituting this into the first-order condition for $\mu_{2},$ 
\[
\frac{\partial\xi}{\partial\mu_{2}}(\mu_{1}^{\bullet},\mu_{2}^{*},\gamma^{*})=0\Rightarrow\mu_{2}^{*}=\mu_{2}^{\bullet}+(r-\gamma^{*})\cdot\left(\mu_{1}^{\bullet}-\mathbb{E}[X_{1}|X_{1}\in K]\right).
\]
It remains to find $\gamma^{*}.$ We have 
\[
\frac{\partial\xi}{\partial\gamma}(\mu_{1}^{*},\mu_{2}^{*},\gamma^{*})=-\mathbb{P}[X_{1}\in K]\cdot\mathbb{E}[(X_{1}-\mu_{1}^{*})\cdot(\mu_{2}^{*}-\gamma^{*}(X_{1}-\mu_{1}^{*})-\mu_{2}^{\bullet}+r(X_{1}-\mu_{1}^{\bullet}))|X_{1}\in K].
\]
We rearrange the expectation term as: 
\begin{align*}
 & \mathbb{E}[(X_{1}-\mu_{1}^{*})\cdot(\mu_{2}^{*}-\gamma^{*}(X_{1}-\mu_{1}^{*})-\mu_{2}^{\bullet}+r(X_{1}-\mu_{1}^{\bullet}))|X_{1}\in K]\\
= & \mathbb{E}[(X_{1}-\mu_{1}^{*})|X_{1}\in K]\cdot\mathbb{E}[(\mu_{2}^{*}-\gamma^{*}(X_{1}-\mu_{1}^{*})-\mu_{2}^{\bullet}+r(X_{1}-\mu_{1}^{\bullet}))|X_{1}\in K]\\
 & +\text{Cov}[X_{1}-\mu_{1}^{*},\mu_{2}^{*}-\gamma^{*}(X_{1}-\mu_{1}^{*})-\mu_{2}^{\bullet}+r(X_{1}-\mu_{1}^{\bullet})|X_{1}\in K].
\end{align*}
The first-order condition for $\mu_{2}$ implies $\mathbb{E}[(\mu_{2}^{*}-\gamma^{*}(X_{1}-\mu_{1}^{*})-\mu_{2}^{\bullet}+r(X_{1}-\mu_{1}^{\bullet}))|X_{1}\in K]=0$
at the optimum $(\mu_{1}^{*},\mu_{2}^{*},\gamma^{*})$. Also, we may
drop terms without $X_{1}$ in the conditional covariance operator,
and we get: 
\[
\frac{\partial\xi}{\partial\gamma}(\mu_{1}^{*},\mu_{2}^{*},\gamma^{*})=\mathbb{P}[X_{1}\in K]\cdot(\gamma^{*}-r)\cdot\text{Cov}(X_{1},X_{1}|X_{1}\in K).
\]
We have $\mathbb{P}[X_{1}\in K]>0$ and $\text{Cov}(X_{1},X_{1}|X_{1}\in K)>0,$
hence we conclude 
\[
\frac{\partial\xi}{\partial\gamma}(\mu_{1}^{*},\mu_{2}^{*},\gamma^{*})\begin{cases}
>0 & \text{for }\gamma^{*}>r\\
=0 & \text{for }\gamma^{*}=r\\
<0 & \text{for }\gamma^{*}<r
\end{cases}.
\]
When $r\in[\gamma_{l},\gamma_{h}],$ $(\mu_{1}^{*},\mu_{2}^{*},\gamma^{*})$
cannot minimize $\xi$ if $\gamma^{*}\ne r$: at either end point
where FOC in $\gamma$ does not hold, $\xi$ can be strictly reduced
by changing $\gamma$ slightly. In case that $\gamma_{l}>r,$ at the
optimum we must have $\frac{\partial\xi}{\partial\gamma}(\mu_{1}^{*},\mu_{2}^{*},\gamma^{*})>0$.
By Karush-Kuhn-Tucker condition, this means the minimizer is $\gamma^{*}=\gamma_{l}.$
Conversely, when $\gamma_{h}<r,$ at the optimum we must have $\frac{\partial\xi}{\partial\gamma}(\mu_{1}^{*},\mu_{2}^{*},\gamma^{*})<0$.
In that case, the minimizer is $\gamma^{*}=\gamma_{h}$. So in both
cases, $\gamma^{*}=\gamma_{n}$ as desired.

Finally, by using $\mu_{2}^{*}=\mu_{2}^{\bullet}+(r-\gamma_{n})\cdot\left(\mu_{1}^{\bullet}-\mathbb{E}[X_{1}|X_{1}\in K]\right)$
and specializing to the case where the continuation region $K$ is
either $(-\infty,c]$ or $[c,\infty)$, we get the closed-form expression
of $\mu_{2}^{*}(c).$
\end{proof}

\subsection{Proof of Lemma \ref{lem:behavior_simplified}}

I state and prove a stronger result, which will be used in some of
the later proofs.
\begin{lem}
\label{lem:cutoff_properties} Consider the model $\Psi(\mu_{1},\mu_{2};\gamma)$
for any $\mu_{1},\mu_{2},\gamma\in\mathbb{R}.$ Let $D(x_{1})$ be
the difference between the expected payoff in stopping and continuing
after $X_{1}=x_{1}$ in the model. If $\gamma=-1$, then $D(x_{1})$
is constant in $x_{1}.$ If $\gamma>-1,$ then $D(x_{1})$ is continuous
and strictly increasing in $x_{1}$ with $\lim_{x_{1}\to\pm\infty}D(x_{1})=\pm\infty$.
If $\gamma<-1,$ then $D(x_{1})$ is continuous and strictly decreasing
in $x_{1}$ with $\lim_{x_{1}\to\pm\infty}D(x_{1})=\mp\infty$. When
$\gamma\ne-1,$ there is a unique $C(\mu_{1},\mu_{2};\gamma)$ so
that the agent is indifferent between continuing and stopping after
$x_{1}=C(\mu_{1},\mu_{2};\gamma)$. For fixed $\mu_{1}\in\mathbb{R},$
$\gamma\ne-1,$ the function $\mu_{2}\mapsto C(\mu_{1},\mu_{2};\gamma)$
is linear with a slope of $\frac{1}{\gamma+1}$.
\end{lem}
Using Lemma \ref{lem:cutoff_properties}, agents stop after high values
of $X_{1}$ when $\gamma>-1$ and stop after low values of $X_{1}$
when $\gamma<-1$, because $D$ is strictly increasing when $\gamma>-1$
and strictly decreasing when $\gamma<-1.$ Also, since $\mu_{2}\mapsto C(\mu_{1},\mu_{2};\gamma)$
has a slope of $\frac{1}{\gamma+1}$, it is strictly increasing if
$\gamma>-1$ and strictly decreasing if $\gamma<-1.$ I now prove
Lemma \ref{lem:cutoff_properties}.
\begin{proof}
In the model $\Psi(\mu_{1},\mu_{2};\gamma)$, the expected difference
between stopping and continuing after $X_{1}=x_{1}$ is:
\[
D(x_{1})=x_{1}-q\mathbb{E}[\max(x_{1},[X_{2}\mid x_{1}])]-(1-q)\mathbb{E}[X_{2}\mid x_{1}]+\kappa
\]
where $[X_{2}\mid x_{1}]\sim\mathcal{N}(\mu_{2}-\gamma(x_{1}-\mu_{1}),\sigma^{2})$.
This is clearly continuous in $x_{1}.$ When $\gamma=-1,$ $D$ is
constant because we have for every $a\in\mathbb{R},$ $\delta>0$
\[
D(a+\delta)-D(a)=\delta-q\left\{ \mathbb{E}[\max(a+\delta,[X_{2}\mid a+\delta])]-\mathbb{E}[\max(a,[X_{2}\mid a])]\right\} -\delta(1-q).
\]
In comparing $\max(a+\delta,[X_{2}\mid a+\delta])$ and $\max(a,[X_{2}\mid a])$,
note the distribution $[X_{2}\mid a+\delta]$ is $[X_{2}\mid a]$
shifted to the right by $\delta$, so the distribution $\max(a+\delta,[X_{2}\mid a+\delta])$
is just $\max(a,[X_{2}\mid a])$ shifted to the right by $\delta$.
Thus, $\mathbb{E}[\max(a+\delta,[X_{2}\mid a+\delta])]-\mathbb{E}[\max(a,[X_{2}\mid a])]=\delta.$
So overall, $D(a+\delta)-D(a)=0.$

When $\gamma>-1,$ $[X_{2}\mid a+\delta]$ is strictly stochastically
dominated by $\delta+[X_{2}\mid a]$, therefore $\mathbb{E}[\max(a+\delta,[X_{2}\mid a+\delta])<\delta+\mathbb{E}[\max(a,[X_{2}\mid a])]$.
Also, we have $\mathbb{E}[X_{2}\mid a+\delta]-\mathbb{E}[X_{2}\mid a]=-\delta\gamma.$
So, we get $D(a+\delta)-D(a)>(1-q)(1+\gamma)\delta>0.$ This shows
$D$ is strictly increasing at a rate of at least $(1-q)(1+\gamma)$
at every point in the domain, therefore $\lim_{x_{1}\to\pm\infty}D(x_{1})=\pm\infty$.

When $\gamma<-1,$ $[X_{2}\mid a+\delta]$ strictly stochastically
dominates $\delta+[X_{2}\mid a]$, therefore $\mathbb{E}[\max(a+\delta,[X_{2}\mid a+\delta])>\delta+\mathbb{E}[\max(a,[X_{2}\mid a])]$.
Also, we have $\mathbb{E}[X_{2}\mid a+\delta]-\mathbb{E}[X_{2}\mid a]=-\gamma\delta.$
So, we get $D(a+\delta)-D(a)<(1-q)(1+\gamma)\delta<0.$ This shows
$D$ is strictly decreasing at a rate of at least $(1-q)(1+\gamma)$
at every point in the domain, therefore $\lim_{x_{1}\to\pm\infty}D(x_{1})=\mp\infty$.

When $\gamma\ne-1,$ the existence and uniqueness of $C(\mu_{1},\mu_{2};\gamma)$
come from the fact that $D$ is strictly monotonic and takes on both
positive and negative values, so it must cross 0 at a unique point.

In fact, $C(\mu_{1},\mu_{2};\gamma)$ is linear in $\mu_{2}$ with
a coefficient of $\frac{1}{\gamma+1}$. To see this, fix $\mu_{1}$
and $\gamma$ and consider the difference $x_{1}-q\mathbb{E}[\max(x_{1},[X_{2}\mid x_{1}])]-(1-q)\mathbb{E}[X_{2}\mid x_{1}]+\kappa$
as a function $G(x_{1},\mu_{2})$ of $x_{1}$ and $\mu_{2}.$ For
every $\delta>0,$ we have $G(x_{1}+\frac{\delta}{\gamma+1},\mu_{2}+\delta)=G(x_{1},\mu_{2}).$
This is because 
\begin{align*}
\mathcal{N}((\mu_{2}+\delta)-\gamma((x_{1}+\frac{\delta}{\gamma+1})-\mu_{1}),\sigma^{2}) & =\mathcal{N}(\mu_{2}-\gamma(x_{1}-\mu_{1}),\sigma^{2})+\delta-\delta\frac{\gamma}{\gamma+1}\\
 & =\mathcal{N}(\mu_{2}-\gamma(x_{1}-\mu_{1}),\sigma^{2})+\delta\frac{1}{\gamma+1},
\end{align*}
 therefore $q\mathbb{E}_{\mu_{2}+\delta}[\max(x_{1}+\frac{\delta}{\gamma+1},[X_{2}\mid x_{1}])]=q(\mathbb{E}_{\mu_{2}}[\max(x_{1},[X_{2}\mid x_{1}])]+\delta\frac{1}{\gamma+1})$.
Also, $(1-q)\mathbb{E}_{\mu_{2}+\delta}[X_{2}\mid x_{1}]=(1-q)(\mathbb{E}_{\mu_{2}}[X_{2}\mid x_{1}]+\delta\frac{1}{\gamma+1}).$
Using these two facts, 
\[
G(x_{1}+\frac{\delta}{\gamma+1},\mu_{2}+\delta)-G(x_{1},\mu_{2})=\frac{\delta}{\gamma+1}-q(\delta\frac{1}{\gamma+1})-(1-q)(\delta\frac{1}{\gamma+1})=0.
\]
That is, increasing belief about $\mu_{2}$ by $\delta$ and also
increasing the realization of the early draw by $\delta/(\gamma+1)$
cancel each other out in terms of the difference between the expected
payoffs in stopping and continuing. Therefore, we must have $C(\mu_{1},\mu_{2}+\delta;\gamma)=C(\mu_{1},\mu_{2};\gamma)+\delta\frac{1}{\gamma+1}.$
\end{proof}

\subsection{Proof of Proposition \ref{prop:existence_and_uniqueness}}
\begin{proof}
Consider the map $\mathcal{I}:\mathbb{R}\to\mathbb{R}$ defined by
$\mathcal{I}(\mu_{2}):=\mu_{2}^{*}(C(\mu_{1}^{\bullet},\mu_{2};\gamma_{n}))$,
where we define $\mu_{2}^{*}(c)=\mu_{2}^{\bullet}+(r-\gamma_{n})\cdot(\mu_{1}^{\bullet}-\mathbb{E}[X_{1}\mid X_{1}\le c])$
if $\gamma_{n}>-1$ and $\mu_{2}^{*}(c)=\mu_{2}^{\bullet}+(r-\gamma_{n})\cdot(\mu_{1}^{\bullet}-\mathbb{E}[X_{1}\mid X_{1}\ge c])$
if $\gamma_{n}<-1.$ Lemma \ref{lem:cutoff_properties} shows $\mu_{2}\mapsto C(\mu_{1}^{\bullet},\mu_{2};\gamma_{n})$
is linear with a slope of $\frac{1}{\gamma_{n}+1}$. Also, by property
of the Gaussian distribution, both $c\mapsto\mathbb{E}[X_{1}\mid X_{1}\le c]$
and $c\mapsto\mathbb{E}[X_{1}\mid X_{1}\ge c]$ are Lipschitz continuous
with a Lipschitz constant of 1. Therefore, the composition $\mathcal{I}$
is Lipschitz continuous with a Lipschitz constant of $|\frac{r-\gamma_{n}}{1+\gamma_{n}}|<1$,
hence a contraction map. By property of contraction maps, $\mathcal{I}$
has a unique fixed point, which we denote $\mu_{2}^{\infty}.$ When
$\gamma_{n}>-1,$ the beliefs $(\mu_{1}^{\infty},\mu_{2}^{\infty},\gamma_{n})$
together with the cutoff strategy that stops when $c\ge C(\mu_{1}^{\infty},\mu_{2}^{\infty};\gamma_{n})$
make up a steady state by Proposition \ref{prop:pseudo_true_normal}
and Lemma \ref{lem:behavior_simplified}. When $\gamma_{n}<-1,$ the
beliefs $(\mu_{1}^{\infty},\mu_{2}^{\infty},\gamma_{n})$ together
with the cutoff strategy that stops when $c\le C(\mu_{1}^{\infty},\mu_{2}^{\infty};\gamma_{n})$
make up a steady state for the same reason. Also, this steady state
is unique. By Proposition \ref{prop:pseudo_true_normal}, in any steady-state
beliefs $(\mu_{1}',\mu_{2}',\gamma')$ we must have $\mu_{1}'=\mu_{1}^{\bullet}$,
$\gamma'=\gamma_{n}$. This implies $\mu_{2}'$ must be a fixed point
of $\mathcal{I}$ by the optimality of behavior and the KL-divergence
minimization of beliefs, yet $\mu_{2}^{\infty}$ is the unique fixed
point of $\mathcal{I}$.
\end{proof}

\subsection{Proof of Proposition \ref{prop:steady_state_example}}
\begin{proof}
Under the condition $|\frac{r-\gamma_{n}}{1+\gamma_{n}}|<1$, by Proposition
\ref{prop:existence_and_uniqueness} there exists a unique steady
state where $\gamma^{\infty}=\gamma_{n}$, and the agent uses a cutoff
strategy with some threshold $c^{\infty}.$ The agent stops when $X_{1}\ge c^{\infty}$
if $\gamma_{n}>-1,$ and stops when $X_{1}\le c^{\infty}$ if $\gamma_{n}<-1$.

Suppose $r,\gamma_{n}>-1$. Then by Proposition \ref{prop:pseudo_true_normal},
$\mu_{2}^{\infty}=\mu_{2}^{\bullet}+(r-\gamma_{n})\cdot(\mu_{1}^{\bullet}-\mathbb{E}[X_{1}\mid X_{1}\le c^{\infty}])$.
Since $\mathbb{E}[X_{1}\mid X_{1}\le c^{\infty}]<c^{\infty},$ we
get $\mu_{2}^{\infty}<\mu_{2}^{\bullet}+(r-\gamma_{n})\cdot(\mu_{1}^{\bullet}-c^{\infty})\iff\mu_{2}^{\infty}-\gamma_{n}(c^{\infty}-\mu_{1}^{\bullet})<\mu_{2}^{\bullet}-r(c^{\infty}-\mu_{1}^{\bullet})$
if $r-\gamma_{n}<0$, and symmetrically $\mu_{2}^{\infty}-\gamma_{n}(c^{\infty}-\mu_{1}^{\bullet})>\mu_{2}^{\bullet}-r(c^{\infty}-\mu_{1}^{\bullet})$
if $r-\gamma_{n}>0$. In the $r-\gamma_{n}<0$ case, it shows the
agent's belief about the second-period mean of $X_{2}$ conditional
on $X_{1}=c^{\infty}$ is strictly lower than the truth. As the agent
who believes in the model $\Psi(\mu_{1}^{\bullet},\mu_{2}^{\infty};\gamma_{n})$
is indifferent between continuing and stopping after $X_{1}=c^{\infty},$
an agent who believes in the model $\Psi(\mu_{1}^{\bullet},\mu_{2}^{\bullet};r)$
finds it strictly better to continue after $X_{1}=c^{\infty}$. Under
the model $\Psi(\mu_{1}^{\bullet},\mu_{2}^{\bullet};r)$ with $r>-1,$
by Lemma \ref{lem:cutoff_properties} the agent strictly prefers continuing
only at those $c$ with $c<C(\mu_{1}^{\bullet},\mu_{2}^{\bullet};r)=c^{\bullet}$,
which shows $c^{\infty}<c^{\bullet}$. The $r-\gamma_{n}>0$ case
symmetrically leads to the conclusion that $c^{\infty}>c^{\bullet}$.

Suppose both $r,\gamma_{n}<-1$. Then by Proposition \ref{prop:pseudo_true_normal},
$\mu_{2}^{\infty}=\mu_{2}^{\bullet}+(r-\gamma_{n})\cdot(\mu_{1}^{\bullet}-\mathbb{E}[X_{1}\mid X_{1}\ge c^{\infty}])$.
Since $\mathbb{E}[X_{1}\mid X_{1}\ge c^{\infty}]>c^{\infty},$ we
get $\mu_{2}^{\infty}>\mu_{2}^{\bullet}+(r-\gamma_{n})\cdot(\mu_{1}^{\bullet}-c^{\infty})\iff\mu_{2}^{\infty}-\gamma_{n}(c^{\infty}-\mu_{1}^{\bullet})>\mu_{2}^{\bullet}-r(c^{\infty}-\mu_{1}^{\bullet})$
if $r-\gamma_{n}<0$, and symmetrically $\mu_{2}^{\infty}-\gamma_{n}(c^{\infty}-\mu_{1}^{\bullet})<\mu_{2}^{\bullet}-r(c^{\infty}-\mu_{1}^{\bullet})$
if $r-\gamma_{n}>0$. In the $r-\gamma_{n}<0$ case, it shows the
agent's belief about the second-period mean of $X_{2}$ conditional
on $X_{1}=c^{\infty}$ is strictly higher than the truth. As the agent
who believes in the model $\Psi(\mu_{1}^{\bullet},\mu_{2}^{\infty};\gamma_{n})$
is indifferent between continuing and stopping after $X_{1}=c^{\infty},$
an agent who believes in the model $\Psi(\mu_{1}^{\bullet},\mu_{2}^{\bullet};r)$
finds it strictly better to stop after $X_{1}=c^{\infty}$. Under
the model $\Psi(\mu_{1}^{\bullet},\mu_{2}^{\bullet};r)$ with $r<-1,$
by Lemma \ref{lem:cutoff_properties} the agent strictly prefers stopping
only only at those $c$ with $c<C(\mu_{1}^{\bullet},\mu_{2}^{\bullet};r)=c^{\bullet}$,
which shows $c^{\infty}<c^{\bullet}$. The $r-\gamma_{n}>0$ case
symmetrically leads to the conclusion that $c^{\infty}>c^{\bullet}$.
\end{proof}

\subsection{Proof of Proposition \ref{prop:comparative_statics_example}}

I will show a stronger statement. Given a pair of second-period payoff
functions $u_{2}^{H},u_{2}^{L}$, say $u_{2}^{H}$ \emph{payoff dominates}
$u_{2}^{L}$ (abbreviated $u_{2}^{H}\succ u_{2}^{L})$ if for every
$x_{1}\in\mathbb{R},$ $u_{2}^{H}(x_{1},x_{2})\ge u_{2}^{L}(x_{1},x_{2})$
for every $x_{2}\in\mathbb{R},$ and also $u_{2}^{H}(x_{1},x_{2})>u_{2}^{L}(x_{1},x_{2})$
for a positive-measure set of $x_{2}$ in $\mathbb{R}$. It is clear
that increasing $q$ or decreasing $\kappa$ in the statement of Proposition
\ref{prop:comparative_statics_example} leads to a payoff dominating
game. There is a unique steady state for any $(q,\kappa)$ by Proposition
\ref{prop:existence_and_uniqueness} since $r=0$ and $\gamma_{n}>0$.
The next part of Proposition \ref{prop:comparative_statics_example}
is implied by:
\begin{prop}
\label{prop:comparative_statics} Let $r=0$ and $\gamma_{n}>0.$
Suppose both $(u_{1},u_{2}^{H})$ and $(u_{1},u_{2}^{L})$ correspond
to stage games with some $(q,\kappa)$, and that $u_{2}^{H}\succ u_{2}^{L}.$
The steady state of $(u_{1},u_{2}^{H})$ features strictly more optimistic
belief about the second-period fundamental and a strictly higher cutoff
threshold than the steady state of $(u_{1},u_{2}^{L})$.
\end{prop}
\begin{proof}
I require an auxiliary lemma.
\begin{lem}
\label{lem:C_compare_payoff_dom} Suppose both $(u_{1},u_{2}^{H})$
and $(u_{1},u_{2}^{L})$ correspond to stage games with some $(q,\kappa)$,
and that $u_{2}^{H}\succ u_{2}^{L}.$ For all $\mu_{1},\mu_{2}\in\mathbb{R},$
$\gamma>0$, $C_{u_{1},u_{2}^{H}}(\mu_{1},\mu_{2};\gamma)>C_{u_{1},u_{2}^{L}}(\mu_{1},\mu_{2};\gamma)$.
\end{lem}
\begin{proof}
Indifference $c^{L}=C_{u_{1},u_{2}^{L}}(\mu_{1},\mu_{2};\gamma)$
implies $u_{1}(c^{L})=\mathbb{E}_{\tilde{X}_{2}\sim\phi(\cdot\mid\mu_{2}-\gamma(c^{L}-\mu_{1}))}[u_{2}^{L}(c^{L},\tilde{X}_{2})].$
Since $u_{2}^{H}(c^{L},x_{2})\ge u_{2}^{L}(c^{L},x_{2})$ for all
$x_{2}\in\mathbb{R},$ with strict inequality on a positive-measure
set, this shows $u_{1}(c^{L})<\mathbb{E}_{\tilde{X}_{2}\sim\phi(\cdot\mid\mu_{2}-\gamma(c^{L}-\mu_{1}))}[u_{2}^{H}(c^{L},\tilde{X}_{2})].$
The best stopping strategy in the model $\Psi(\mu_{1},\mu_{2};\gamma)$
with the utility functions $(u_{1},u_{2}^{H})$ has a cutoff form
by Lemma \ref{lem:cutoff_properties}. This shows $C_{u_{1},u_{2}^{H}}(\mu_{1},\mu_{2};\gamma)$
is strictly above $c^{L}$.
\end{proof}
Now I return to the proof of Proposition \ref{prop:comparative_statics}.
Say the unique steady states under $(u_{1},u_{2}^{H})$ and $(u_{1},u_{2}^{L})$
are $((\mu_{1}^{\bullet},\mu_{2,H}^{\infty},\gamma_{n}),c_{H}^{\infty})$
and $((\mu_{1}^{\bullet},\mu_{2,L}^{\infty},\gamma_{n}),c_{L}^{\infty})$
respectively. Let $\mathcal{I}_{H},\mathcal{I}_{L}$ be the iteration
maps corresponding to these two stage games, that is to say 
\begin{align*}
\mathcal{I}_{H}(\mu_{2}) & :=\mu_{2}^{*}(C_{u_{1},u_{2}^{H}}(\mu_{1}^{\bullet},\mu_{2};\gamma_{n}))\\
\mathcal{I}_{L}(\mu_{2}) & :=\mu_{2}^{*}(C_{u_{1},u_{2}^{L}}(\mu_{1}^{\bullet},\mu_{2};\gamma_{n})).
\end{align*}
From the proof of Proposition \ref{prop:existence_and_uniqueness},
both $\mathcal{I}_{H}$ and $\mathcal{I}_{L}$ are contraction maps.
Consider their iterates with a starting value of $0$. That is, put
$\mu_{2,H}^{[0]}=0$, $\mu_{2,L}^{[0]}=0$ and let $\mu_{2,H}^{[t]}=\mathcal{I}_{H}(\mu_{2,H}^{[t-1]}),$
$\mu_{2,L}^{[t]}=\mathcal{I}_{L}(\mu_{2,L}^{[t-1]})$ for $t\ge1$.
By property of contraction maps and since the fixed points of the
iteration maps are the steady state beliefs, $\mu_{2,H}^{[t]}\to\mu_{2,H}^{\infty}$
and $\mu_{2,L}^{[t]}\to\mu_{2,L}^{\infty}$.

By induction, I will show $\mu_{2,L}^{[t]}\le\mu_{2,H}^{[t]}$ for
every $t\ge0.$ The base case of $t=0$ is true by definition. If
$\mu_{2,L}^{[T]}\le\mu_{2,H}^{[T]},$ then 
\[
C_{u_{1},u_{2}^{L}}(\mu_{1}^{\bullet},\mu_{2,L}^{[T]};\gamma)\le C_{u_{1},u_{2}^{L}}(\mu_{1}^{\bullet},\mu_{2,H}^{[T]};\gamma)<C_{u_{1},u_{2}^{H}}(\mu_{1}^{\bullet},\mu_{2,H}^{[T]};\gamma).
\]
 The first inequality comes from $C$ being increasing in the second
argument and the inductive hypothesis, while the second inequality
is due to Lemma \ref{lem:C_compare_payoff_dom}. Therefore, $\mathcal{I}_{L}(\mu_{2,L}^{[T]})\le\mathcal{I}_{H}(\mu_{2,H}^{[T]})$
using the fact that $\mu_{2}^{*}$ is increasing by Proposition \ref{prop:pseudo_true_normal},
so $\mu_{2,L}^{[T+1]}\le\mu_{2,H}^{[T+1]}.$

Since weak inequalities are preserved by limits, we have $\mu_{2,H}^{\infty}\ge\mu_{2,L}^{\infty}$.
It is impossible to have $\mu_{2,H}^{\infty}=\mu_{2,L}^{\infty},$
because this would lead to $c_{H}^{\infty}>c_{L}^{\infty}$ by Lemma
\ref{lem:C_compare_payoff_dom}, which in turn implies $\mu_{2,H}^{\infty}=\mu_{2}^{*}(c_{H}^{\infty})>\mu_{2}^{*}(c_{L}^{\infty})=\mu_{2,L}^{\infty}$.
This inequality contradicts $\mu_{2,H}^{\infty}=\mu_{2,L}^{\infty}$.
Therefore, we in fact have $\mu_{2,H}^{\infty}>\mu_{2,L}^{\infty}.$
The conclusion that $c_{H}^{\infty}>c_{L}^{\infty}$ follows from
Lemma \ref{lem:C_compare_payoff_dom} and the fact that $C$ is increases
in its second argument.
\end{proof}

\subsection{Proof of Proposition \ref{prop:pseudo_true_mean_var}}
\selectlanguage{american}%
\begin{proof}
Rewrite Equation (\ref{eq:KL_mean_var}) as\foreignlanguage{english}{{\small{}
\begin{align*}
 & \int_{-\infty}^{\infty}\phi(x_{1}\mid\mu_{1}^{\bullet},(\sigma^{\bullet})^{2})\cdot\ln\left(\frac{\phi(x_{1}\mid\mu_{1}^{\bullet},(\sigma^{\bullet})^{2})}{\phi(x_{1}\mid\mu_{1},\sigma_{1}^{2})}\right)dx_{1}\\
+ & \int_{-\infty}^{c}\phi(x_{1}\mid\mu_{1}^{\bullet},(\sigma^{\bullet})^{2})\cdot\int_{-\infty}^{\infty}\phi(x_{2}\mid\mu_{2}^{\bullet},(\sigma^{\bullet})^{2})\ln\left[\frac{\phi(x_{2}\mid\mu_{2}^{\bullet},(\sigma^{\bullet})^{2})}{\phi(x_{2}\mid\mu_{2}-\gamma(x_{1}-\mu_{1}),\sigma_{2}^{2})}\right]dx_{2}dx_{1}.
\end{align*}
}KL divergence between $\mathcal{N}(\mu_{\text{true}},\sigma_{\text{true}}^{2})$
and $\mathcal{N}(\mu_{\text{model}},\sigma_{\text{model}}^{2})$ is
$\ln\frac{\sigma_{\text{model}}}{\sigma_{\text{true}}}+\frac{\sigma_{\text{true}}^{2}+(\mu_{\text{true}}-\mu_{\text{model}})^{2}}{2\sigma_{\text{model}}^{2}}-\frac{1}{2}$,
so we may simplify the first term and the inner integral of the second
term. {\small{}
\[
\ln\frac{\sigma_{1}}{\sigma^{\bullet}}+\frac{(\mu_{1}-\mu_{1}^{\bullet})^{2}}{2\sigma_{1}^{2}}+\frac{(\sigma^{\bullet})^{2}}{2\sigma_{1}^{2}}-\frac{1}{2}+\int_{-\infty}^{c}\phi(x_{1}\mid\mu_{1}^{\bullet},\sigma^{\bullet})\cdot\left[\ln\frac{\sigma_{2}}{\sigma^{\bullet}}+\frac{(\sigma^{\bullet})^{2}+(\mu_{2}-\gamma(x_{1}-\mu_{1})-\mu_{2}^{\bullet})^{2}}{2\sigma_{2}^{2}}-\frac{1}{2}\right]dx_{1}.
\]
}Dropping terms not dependent on any of the four variables gives a
simplified version of the objective, 
\begin{align*}
\xi(\mu_{1},\mu_{2},\sigma_{1},\sigma_{2}):= & \ln\frac{\sigma_{1}}{\sigma^{\bullet}}+\frac{(\mu_{1}-\mu_{1}^{\bullet})^{2}}{2\sigma_{1}^{2}}+\frac{(\sigma^{\bullet})^{2}}{2\sigma_{1}^{2}}\\
 & +\int_{-\infty}^{c}\phi(x_{1}\mid\mu_{1}^{\bullet},(\sigma^{\bullet})^{2})\cdot\left[\ln\frac{\sigma_{2}}{\sigma^{\bullet}}+\frac{(\sigma^{\bullet})^{2}+(\mu_{2}-\gamma(x_{1}-\mu_{1})-\mu_{2}^{\bullet})^{2}}{2\sigma_{2}^{2}}\right]dx_{1}.
\end{align*}
Differentiating under the integral sign, 
\[
\frac{\partial\xi}{\partial\mu_{2}}=\int_{-\infty}^{c}\phi(x_{1}\mid\mu_{1}^{\bullet},(\sigma^{\bullet})^{2})\cdot\left[\frac{(\mu_{2}-\gamma(x_{1}-\mu_{1})-\mu_{2}^{\bullet})}{\sigma_{2}^{2}}\right]dx_{1}
\]
\begin{align*}
\frac{\partial\xi}{\partial\mu_{1}} & =\frac{(\mu_{1}-\mu_{1}^{\bullet})}{\sigma_{1}^{2}}+\gamma\int_{-\infty}^{c}\phi(x_{1}\mid\mu_{1}^{\bullet},(\sigma^{\bullet})^{2})\cdot\left[\frac{(\mu_{2}-\gamma(x_{1}-\mu_{1})-\mu_{2}^{\bullet})}{\sigma_{2}^{2}}\right]dx_{1}=\frac{(\mu_{1}-\mu_{1}^{\bullet})}{\sigma_{1}^{2}}+\gamma\frac{\partial\xi}{\partial\mu_{2}}.
\end{align*}
At FOC $(\mu_{1}^{*},\mu_{2}^{*},\sigma_{1}^{*},\sigma_{2}^{*}),$
we have $\frac{\partial\xi}{\partial\mu_{2}}(\mu_{1}^{*},\mu_{2}^{*},\sigma_{1}^{*},\sigma_{2}^{*})=0,$
hence $\mu_{1}^{*}=\mu_{1}^{\bullet}$. Similar arguments as before
then establish $\mu_{2}^{*}=\mu_{2}^{\bullet}-\gamma\left(\mu_{1}^{\bullet}-\mathbb{E}\left[X_{1}\mid X_{1}\le c\right]\right),$
where expectation is taken with respect to the true distribution of
$X_{1}$ (with the true variance $(\sigma^{\bullet})^{2}$). Then,
$\frac{\partial\xi}{\partial\sigma_{1}}(\mu_{1}^{*},\mu_{2}^{*},\sigma_{1}^{*},\sigma_{2}^{*})=\frac{1}{(\sigma_{1}^{*})}-\frac{(\sigma^{\bullet})^{2}}{(\sigma_{1}^{*})^{3}}=0,$
this gives $\sigma_{1}^{*}=\sigma^{\bullet}$ (since $\sigma_{1}^{*}\ge0).$}

\selectlanguage{english}%
Finally, from the FOC for $\sigma_{2},$ 
\[
\int_{-\infty}^{c}\phi(x_{1};\mu_{1}^{\bullet},(\sigma^{\bullet})^{2})\cdot\left[\frac{1}{\sigma_{2}^{*}}-\frac{(\sigma^{\bullet})^{2}+(\mu_{2}^{*}-\gamma(x_{1}-\mu_{1}^{*})-\mu_{2}^{\bullet})^{2}}{(\sigma_{2}^{*})^{3}}\right]dx_{1}=0.
\]
Substituting in values of $\mu_{1}^{*},\mu_{2}^{*}$ already solved
for, 
\begin{align*}
(\sigma_{2}^{*})^{2} & =(\sigma^{\bullet})^{2}+\mathbb{E}[(\mu_{2}^{*}-\gamma(X_{1}-\mu_{1}^{\bullet})-\mu_{2}^{\bullet})^{2}|X_{1}\le c]\\
 & =(\sigma^{\bullet})^{2}+\mathbb{E}[(\mu_{2}^{\bullet}-\gamma\left(\mu_{1}^{\bullet}-\mathbb{E}\left[X_{1}\mid X_{1}\le c\right]\right)-\gamma(X_{1}-\mu_{1}^{\bullet})-\mu_{2}^{\bullet})^{2}|X_{1}\le c]\\
 & =(\sigma^{\bullet})^{2}+\gamma^{2}\mathbb{E}\left[\left[(X_{1}-\mu_{1}^{\bullet})-\left(\mathbb{E}\left[X_{1}\mid X_{1}\le c\right]-\mu_{1}^{\bullet}\right)\right]^{2}|X_{1}\le c\right]=(\sigma^{\bullet})^{2}+\gamma^{2}\text{Var}[X_{1}|X_{1}\le c]
\end{align*}
as desired. Finally, $\sigma_{2}^{*}(c)$ is an increasing function
of $c$ because $\text{Var}[X_{1}\mid X_{1}\le c]$ increases in $c$
for $X_{1}$ Gaussian \citep{mailhot1985propriete}.
\end{proof}
\selectlanguage{english}%

\subsection{Proof of Proposition \ref{prop:AB_societies}}

\selectlanguage{american}%
I start with a lemma that says if the decision problem is convex,
a stronger belief in fictitious variation increases the subjectively
optimal cutoff threshold.
\begin{lem}
\label{lem:indifference_different_var} Suppose that under the feasible
model $\Psi(\mu_{1},\mu_{2},\sigma_{1}^{2},\sigma_{2}^{2};\gamma)$,
the agent is indifferent between stopping at $c$ and continuing.
Suppose $\hat{\sigma}_{2}^{2}>\sigma_{2}^{2}.$ Then if $x_{2}\mapsto u_{2}(c,x_{2})$
is convex with strict convexity for $x_{2}$ in a positive-measure
set, then under the feasible model $\Psi(\mu_{1},\mu_{2},\sigma_{1}^{2},\hat{\sigma}_{2}^{2};\gamma)$
the agent strictly prefers continuing at $c$.
\end{lem}
\begin{proof}
Indifference at $x_{1}=c$ under $\Psi(\mu_{1},\mu_{2},\sigma_{1}^{2},\sigma_{2}^{2};\gamma)$
implies \\ $u_{1}(c)=\mathbb{E}_{X_{2}\sim\mathcal{N}(\mu_{2}-\gamma(x_{1}-\mu_{1}),\sigma_{2}^{2})}[u_{2}(c,X_{2})].$
When hypothesis is satisfied, \\ $\mathbb{E}_{X_{2}\sim\mathcal{N}(\mu_{2}-\gamma(x_{1}-\mu_{1}),\sigma_{2}^{2})}[u_{2}(c,X_{2})]<\mathbb{E}_{X_{2}\sim\mathcal{N}(\mu_{2}-\gamma(x_{1}-\mu_{1}),\hat{\sigma}_{2}^{2})}[u_{2}(c,X_{2})]$
since $\hat{\sigma}_{2}^{2}>\sigma_{2}^{2}$ implies that $\mathcal{N}(\mu_{2}-\gamma(x_{1}-\mu_{1}),\hat{\sigma}_{2}^{2})$
is a strict mean-preserving spread of $\mathcal{N}(\mu_{2}-\gamma(x_{1}-\mu_{1}),\sigma_{2}^{2}).$
The RHS is the expected continuation payoff under model $\Psi(\mu_{1},\mu_{2},\sigma_{1}^{2},\hat{\sigma}_{2}^{2};\gamma)$,
so the agent strictly prefers continuing when $X_{1}=c.$
\end{proof}
Now I give the proof of Proposition \foreignlanguage{english}{\ref{prop:AB_societies}.}
\selectlanguage{english}%
\begin{proof}
By the proof of Proposition \ref{prop:existence_and_uniqueness},
$\mathcal{I}(\mu_{2};\gamma):=\mu_{2}^{*}(C(\mu_{1}^{\bullet},\mu_{2},\gamma))$
for society A is a contraction map in $\mu_{2}$. By way of contradiction
suppose $c^{B}\le c^{A}$. Then $\mu_{2}^{B}\le\mu_{2}^{A}$ by Proposition
\ref{prop:pseudo_true_mean_var}. In society A, $C(\mu_{1}^{\bullet},\mu_{2}^{B};\gamma)<c^{B}$
by Lemma \ref{lem:indifference_different_var}, as there is strictly
positive probability of recall. This shows $\mathcal{I}(\mu_{2}^{B};\gamma)<\mu_{2}^{B}$.
In fact, for the $t$-times iteration we have $\mathcal{I}^{(t)}(\mu_{2}^{B};\gamma)\le\mathcal{I}(\mu_{2}^{B};\gamma)<\mu_{2}^{B}$,
which means $\mathcal{I}$ has a fixed point strictly smaller than
$\mu_{2}^{A}.$ This contradicts $\mu_{2}^{A}$ being the only fixed
point of $\mathcal{I}.$ Hence we must have $\mu_{2}^{B}>\mu_{2}^{A}$
and $c^{B}>c^{A}$. We have $\sigma_{2}^{B}=\sigma_{2}^{*}(c^{B}),$
which is larger than $\sigma_{2}^{*}(c^{A})$ by combining $c^{B}>c^{A}$
with Proposition \ref{prop:pseudo_true_mean_var}.
\end{proof}

\subsection{Proof of Proposition \ref{prop:one-by-one}}

I introduce some new notation. Abbreviate $\APLbox:=[\underline{\mu}_{1},\bar{\mu}_{1}]\times[\underline{\mu}_{2},\bar{\mu}_{2}]$.
Let $\gamma=\gamma_{n}$ and let $li(\mu_{2})$ be the line in $\mathbb{R}^{2}$
with slope $-\gamma$ that passes through the point $(\mu_{1}^{\bullet},\mu_{2})$.
There are some minimal and maximal $\underline{\mu}_{2}^{\circ}$
and $\bar{\mu}_{2}^{\circ}$ so that $li(\underline{\mu}_{2}^{\circ})\cap\APLbox\ne\varnothing$
and $li(\bar{\mu}_{2}^{\circ})\cap\APLbox\ne\varnothing$. Finally,
for $\mu_{2}^{l}<\mu_{2}^{h}$, let $\lozenge[\mu_{2}^{l},\mu_{2}^{h}]:=\cup_{\mu_{2}\in[\mu_{2}^{l},\mu_{2}^{h}]}li(\mu_{2})$.
So we have $\APLbox\subseteq\lozenge[\underline{\mu}_{2}^{\circ},\bar{\mu}_{2}^{\circ}]$.
Similarly the half-open versions $\lozenge[\mu_{2}^{l},\mu_{2}^{h})$
and $\lozenge(\mu_{2}^{l},\mu_{2}^{h}]$ are defined as the unions
$\cup_{\mu_{2}\in[\mu_{2}^{l},\mu_{2}^{h})}li(\mu_{2})$ and $\cup_{\mu_{2}\in(\mu_{2}^{l},\mu_{2}^{h}]}li(\mu_{2})$.
(The picture below illustrates a case with $\gamma>0$.)
\begin{center}
\includegraphics[scale=0.35]{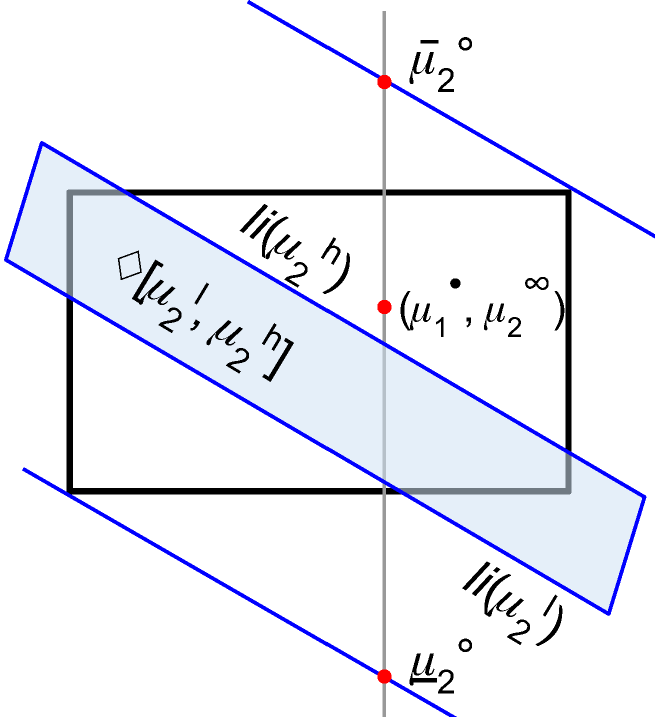}
\par\end{center}

\subsubsection{Preliminary Results}

First, I consider how the predicted second-period payoff after $X_{1}=x_{1}$
depends on the parameters of the feasible model $\Psi(\mu_{1},\mu_{2};\gamma)$.
\begin{lem}
\label{lem:C_equivalence} For every $\mu_{1},\mu_{2},x_{1}\in\mathbb{R}$,
the conditional distribution $X_{2}|X_{1}=x_{1}$ is the same under
$\Psi(\mu_{1}^{\bullet},\mu_{2}+\gamma(\mu_{1}-\mu_{1}^{\bullet});\gamma)$
and $\Psi(\mu_{1},\mu_{2};\gamma)$. So in particular, $C(\mu_{1},\mu_{2};\gamma)=C(\mu_{1}^{\bullet},\mu_{2}+\gamma(\mu_{1}-\mu_{1}^{\bullet});\gamma)$.
\end{lem}
\begin{proof}
Under the feasible model $\Psi(\mu_{1}^{\bullet},\mu_{2}+\gamma(\mu_{1}-\mu_{1}^{\bullet});\gamma)$,
the conditional density of $X_{2}$ given $X_{1}=x_{1}$ is $\phi(\cdot\mid\mu_{2}+\gamma(\mu_{1}-\mu_{1}^{\bullet})-\gamma(x_{1}-\mu_{1}^{\bullet}))$,
which simplifies to $\phi(\cdot\mid\mu_{2}-\gamma(x_{1}-\mu_{1}))$.
It is easy to see that this is also the expression for the same conditional
density under $\Psi(\mu_{1},\mu_{2};\gamma)$.

Suppose $C(\mu_{1},\mu_{2};\gamma)=c.$ This implies $u_{1}(c)=\mathbb{E}_{\Psi(\mu_{1},\mu_{2};\gamma)}[u_{2}(c,X_{2})\mid X_{1}=c].$
But by the equivalence of conditional distribution given above, 
\[
u_{1}(c)=\mathbb{E}_{\Psi(\mu_{1}^{\bullet},\mu_{2}+\gamma(\mu_{1}-\mu_{1}^{\bullet});\gamma)}[u_{2}(c,X_{2})\mid X_{1}=c].
\]
This means $c$ is also the indifference threshold for the model $\Psi(\mu_{1}^{\bullet},\mu_{2}+\gamma(\mu_{1}-\mu_{1}^{\bullet});\gamma)$.
\end{proof}
As a corollary, this lemma shows the restriction to cutoff strategies
is without loss, and that $\tilde{C}_{t}$ is well defined. That is,
for any belief given by a density on $\APLbox$, there exists a cutoff
strategy that is weakly optimal among the class of all stopping strategies,
and further this cutoff strategy is strictly optimal among the class
of cutoff strategies. This is because for any $x_{1}\in\mathbb{R}$
and any density $\tilde{m}$ on $\APLbox$, 
\begin{align*}
 & \int_{\APLbox}\mathbb{E}_{\Psi(\mu_{1},\mu_{2};\gamma)}[u_{2}(x_{1},X_{2})\mid X_{1}=x_{1}]\cdot\tilde{m}(\mu_{1},\mu_{2})d(\mu_{1},\mu_{2})\\
 & =\int_{\underline{\mu}_{2}^{\circ}}^{\bar{\mu}_{2}^{\circ}}\mathbb{E}_{\Psi(\mu_{1}^{\bullet},\mu_{2};\gamma)}[u_{2}(x_{1},X_{2})\mid X_{1}=x_{1}]\cdot\tilde{m}^{V}(\mu_{2})d\mu_{2}
\end{align*}
where $\tilde{m}^{V}(\mu_{2})$ is the integral of $\tilde{m}(\mu_{1},\mu_{2})$
over $li(\mu_{2})$. This equality holds because by Lemma \ref{lem:C_equivalence},
all fundamentals on $li(\mu_{2})$ imply the same continuation payoff
after $X_{1}=x_{1}$ as the fundamentals $(\mu_{1}^{\bullet},\mu_{2}).$
\begin{lem}
\label{lem:cutoff_optimal_single_agent}If $\gamma>-1,$ then the
function

\[
x_{1}\mapsto u_{1}(x_{1})-\int_{\underline{\mu}_{2}^{\circ}}^{\bar{\mu}_{2}^{\circ}}\mathbb{E}_{\Psi(\mu_{1}^{\bullet},\mu_{2};\gamma)}[u_{2}(x_{1},X_{2})\mid X_{1}=x_{1}]\tilde{m}^{V}(\mu_{2})d\mu_{2}
\]
 is strictly increasing, continuous, and crosses 0.
\end{lem}
\begin{proof}
Let $d\nu(\mu_{2})=\tilde{m}^{V}(\mu_{2})d\mu_{2}$. Consider the
payoff difference between accepting $x_{1}$ and continuing under
belief $\nu$, 
\[
D(x_{1};\nu):=u_{1}(x_{1})-\int\mathbb{E}_{X_{2}\sim\phi(\cdot\mid\mu_{2}-\gamma(x_{1}-\mu_{1}^{\bullet}))}[u_{2}(x_{1},X_{2})]d\nu(\mu_{2}).
\]
Note that $D(x_{1},\nu)=\int D(x_{1};\mu_{1}^{\bullet},\mu_{2},\gamma)d\nu(\mu_{2})$.
When $\gamma>-1,$ Lemma \ref{lem:cutoff_properties} shows that for
every $\mu_{2}\in\mathbb{R}$, $D(x_{1};\mu_{1}^{\bullet},\mu_{2},\gamma)$
is strictly increasing in $x_{1}$. Hence the same must hold for $D(x_{1},\nu).$

Lemma \ref{lem:cutoff_properties} shows there exists some $x_{1}'\in\mathbb{R}$
so that $D(x_{1}';\mu_{1}^{\bullet},\underline{\mu}_{2},\gamma)<0,$
and that there exists some $x_{1}''\in\mathbb{R}$ satisfying $D(x_{1}'';\mu_{1}^{\bullet},\bar{\mu}_{2},\gamma)>0$.
Since $u_{2}$ increases in its second argument, we also get $D(x_{1}';\mu_{1}^{\bullet},\mu_{2},\gamma)<0$
and $D(x_{1}'';\mu_{1}^{\bullet},\mu_{2},\gamma)>0$ for all $\mu_{2}\in[\underline{\mu}_{2},\bar{\mu}_{2}]$.
This implies $D(x_{1}';\nu)<0$ and $D(x_{1}'';\nu)>0$, as $\nu$
is supported on (a subset of) $[\underline{\mu}_{2},\bar{\mu}_{2}].$

To show $D(x_{1};\nu)$ is continuous in $x_{1}$, fix some $\bar{x}_{1}.$
Let $\pi(\mu_{2})$ represent the expectation of the absolute value
of a normal random variable with mean $\mu_{2}-\gamma((\bar{x}_{1}-1)-\mu_{1}^{\bullet})$
and variance $\sigma^{2}$. Here $\pi(\mu_{2})$ is bounded by a constant
plus a linear function of $\mu_{2}$ as we vary $\mu_{2}.$ For $|x_{1}-\bar{x}_{1}|\le1$,
\[
|\mathbb{E}_{X_{2}\sim\phi(\cdot\mid\mu_{2}-\gamma(x_{1}-\mu_{1}^{\bullet}))}[u_{2}(x_{1},X_{2})]|\le q(|\bar{x}_{1}|+1+\pi(\mu_{2}))+(1-q)\pi(\mu_{2})+|\kappa|,
\]
 and the RHS is a positive and integrable function with respect to
$d\nu(\mu_{2}).$ For a sequence $x_{1}^{(n)}\to\bar{x}_{1},$ the
integrand in $D(x_{1}^{(n)};\nu)$ is dominated by $q(|\bar{x}_{1}|+1+\pi(\mu_{2}))+(1-q)\pi(\mu_{2})+|\kappa|$
for all large enough $n$, so by dominated convergence theorem, $D(x_{1}^{(n)};\nu)\to D(\bar{x}_{1};\nu).$
So, $D(\cdot;\nu)$ is continuous.
\end{proof}
Now, the key step is to separate the two-dimensional inference problem
into a pair of one-dimensional problems.

\subsubsection{Learning $\mu_{1}^{\bullet}$}

I define the stochastic process of data log-likelihood (for a given
fundamental). For each $\mu_{1},\mu_{2}\in\text{supp}(m_{0})$, let
$\ell_{t}(\mu_{1},\mu_{2})(\omega)$ be the log likelihood that the
fundamentals are $(\mu_{1},\mu_{2})$ and histories $(\tilde{H}_{s})_{s\le t}(\omega)$
are generated by the end of round $t$. It is given by

\[
\ell_{t}(\mu_{1},\mu_{2})(\omega):=\ln(m_{0}(\mu_{1},\mu_{2}))+\sum_{s=1}^{t}\ln(\text{lik}(\tilde{H}_{s}(\omega);\mu_{1},\mu_{2}))
\]
where $\text{lik}(x_{1},\varnothing;\mu_{1},\mu_{2}):=\phi(x_{1}\mid\mu_{1})$
and $\text{lik}(x_{1},x_{2};\mu_{1},\mu_{2}):=\phi(x_{1}\mid\mu_{1})\cdot\phi(x_{2}\mid\mu_{2}-\gamma(x_{1}-\mu_{1}))$.
Let $g_{1}(\cdot)$ and $g_{2}(\cdot\mid x_{1})$ be the true densities
for the distributions of $X_{1}$ and $X_{2}|(X_{1}=x_{1})$, incorporating
the true parameters $\mu_{1}^{\bullet},\mu_{2}^{\bullet}$, and $r.$
Let $f_{2}(z)$ be the Gaussian distribution with the mean $\mu_{2}^{\bullet},$
variance $\sigma^{2}$ evaluated at $z$. By simple algebra, we may
expand
\begin{align*}
\ell_{t}(\mu_{1},\mu_{2})(\omega) & =\ln(m_{0}(\mu_{1},\mu_{2}))+\sum_{s=1}^{t}\ln[g_{1}(X_{1,s}(\omega)-\mu_{1}+\mu_{1}^{\bullet})]\\
 & +\sum_{s=1}^{t}\boldsymbol{1}\{X_{1,s}(\omega)\le\tilde{C}_{s}(\omega)\}\cdot\ln\left[f_{2}(X_{2,s}(\omega)-\mu_{2}+\mu_{2}^{\bullet}+\gamma(X_{1,s}(\omega)-\mu_{1}))\right]
\end{align*}

I first establish that, without knowing anything about the process
$(\tilde{C}_{t}),$ we can conclude agents either learn $\mu_{1}^{\bullet}$
arbitrarily well, or they believe in a boundary value of $\mu_{2}$
— that is, either $\mu_{2}=\underline{\mu}_{2}$ or $\mu_{2}=\bar{\mu}_{2}$.
(We can later rule out these boundary beliefs of $\mu_{2}$).
\begin{lem}
\label{lem:learning_mu1} Let $\epsilon>0$ be given. If $\gamma>0$,
then 
\[
\lim_{t\to\infty}\tilde{M}_{t}\{\APLbox\cap(([\mu_{1}^{\bullet}-\epsilon,\mu_{1}^{\bullet}+\epsilon]\times\mathbb{R})\cup([\underline{\mu}_{1},\mu_{1}^{\bullet}]\times[\underline{\mu}_{2},\underline{\mu}_{2}+\epsilon])\cup([\mu_{1}^{\bullet},\bar{\mu}_{1}]\times[\bar{\mu}_{2}-\epsilon,\bar{\mu}_{2}]))\}=1.
\]
 If $\gamma\le0,$ then 
\[
\lim_{t\to\infty}\tilde{M}_{t}\{\APLbox\cap(([\mu_{1}^{\bullet}-\epsilon,\mu_{1}^{\bullet}+\epsilon]\times\mathbb{R})\cup([\underline{\mu}_{1},\mu_{1}^{\bullet}]\times[\bar{\mu}_{2}-\epsilon,\bar{\mu}_{2}])\cup([\mu_{1}^{\bullet},\bar{\mu}_{1}]\times[\underline{\mu}_{2},\underline{\mu}_{2}+\epsilon]))\}=1.
\]
\end{lem}
\begin{proof}
First calculate the directional derivative $\nabla_{v}\frac{1}{t}\ell_{t}(\mu_{1},\mu_{2}),$
where \\ $v=(1/\sqrt{1+\gamma^{2}},-\gamma/\sqrt{1+\gamma^{2}})^{\prime}$
is the unit vector with slope $-\gamma$. We have 
\begin{align*}
\frac{\partial(\ell_{t}/t)}{\partial\mu_{1}}(\mu_{1},\mu_{2})= & \frac{1}{t}\frac{D_{1}m_{0}(\mu_{1},\mu_{2})}{m_{0}(\mu_{1},\mu_{2})}-\frac{1}{t}\sum_{s=1}^{t}\frac{g_{1}^{'}(X_{1,s}-\mu_{1}+\mu_{1}^{\bullet})}{g_{1}(X_{1,s}-\mu_{1}+\mu_{1}^{\bullet})}\\
 & -\frac{\gamma}{t}\sum_{s=1}^{t}\boldsymbol{1}\{X_{1,s}\le\tilde{C}_{s}\}\cdot\lambda(X_{2,s}-\mu_{2}+\mu_{2}^{\bullet}+\gamma(X_{1,s}-\mu_{1}))
\end{align*}

\[
\frac{\partial(\ell_{t}/t)}{\partial\mu_{2}}(\mu_{1},\mu_{2})=\frac{1}{t}\frac{D_{2}m_{0}(\mu_{1},\mu_{2})}{m_{0}(\mu_{1},\mu_{2})}-\frac{1}{t}\sum_{s=1}^{t}\boldsymbol{1}\{X_{1,s}\le\tilde{C}_{s}\}\cdot\lambda(X_{2,s}-\mu_{2}+\mu_{2}^{\bullet}+\gamma(X_{1,s}-\mu_{1})),
\]
 where $D_{1}m_{0}$ and $D_{2}m_{0}$ are the two partial derivatives
of $m_{0}$, and $\lambda(\cdot):=f_{2}^{'}(\cdot)/f_{2}(\cdot)$.
At every $\omega$ and every $(\mu_{1},\mu_{2}),$ note the last summand
in $\frac{\partial(\ell_{t}/t)}{\partial\mu_{1}}$ is $\gamma$ times
the last summand in $\frac{\partial(\ell_{t}/t)}{\partial\mu_{2}}$.
Therefore, 
\begin{align*}
\nabla_{v}\frac{1}{t}\ell_{t}(\mu_{1},\mu_{2})= & \frac{-1}{\sigma^{2}\sqrt{1+\gamma^{2}}}\left(\frac{1}{t}\sum_{s=1}^{t}\frac{g_{1}^{'}(X_{1,s}-\mu_{1}+\mu_{1}^{\bullet})}{g_{1}(X_{1,s}-\mu_{1}+\mu_{1}^{\bullet})}\right)+\frac{1}{t\sqrt{1+\gamma^{2}}}\frac{1}{t}\frac{D_{1}m_{0}(\mu_{1},\mu_{2})}{m_{0}(\mu_{1},\mu_{2})}\\
 & -\frac{\gamma}{t\sqrt{1+\gamma^{2}}}\frac{D_{2}m_{0}(\mu_{1},\mu_{2})}{m_{0}(\mu_{1},\mu_{2})}.
\end{align*}
Since $m_{0},D_{1}m_{0},D_{2}m_{0}$ are continuous on the compact
set $\APLbox$, there exists some $0<B<\infty$ so that $|\frac{D_{1}m_{0}(\mu_{1},\mu_{2})}{m_{0}(\mu_{1},\mu_{2})}|<B$
and $|\frac{D_{2}m_{0}(\mu_{1},\mu_{2})}{m_{0}(\mu_{1},\mu_{2})}|<B$
for all $(\mu_{1},\mu_{2})\in\APLbox$. Pick any $\epsilon^{'}>0.$
We have that for every $\omega,$

\[
\inf_{(\mu_{1},\mu_{2})\in\APLbox_{L}}\left[\left(\nabla_{v}\frac{1}{t}\ell_{t}(\mu_{1},\mu_{2})\right)+\frac{1}{\sigma^{2}\sqrt{1+\gamma^{2}}}\left(\frac{1}{t}\sum_{s=1}^{t}\frac{g_{1}^{'}(X_{1,s}-\mu_{1}+\mu_{1}^{\bullet})}{g_{1}(X_{1,s}-\mu_{1}+\mu_{1}^{\bullet})}\right)\right]\ge-\frac{1}{t}\frac{(1+\gamma)}{\sqrt{1+\gamma^{2}}}B,
\]
 where $\APLbox_{L}:=[\underline{\mu}_{1},\mu_{1}^{\bullet}-2\epsilon^{'}]\times[\underline{\mu}_{2}+\gamma\epsilon^{'},\bar{\mu}_{2}]$
when $\gamma>0$ and $\APLbox_{L}:=[\underline{\mu}_{1},\mu_{1}^{\bullet}-2\epsilon^{'}]\times[\underline{\mu}_{2},\bar{\mu}_{2}+\gamma\epsilon^{'}]$
when $\gamma\le0$ is a sub-rectangle to the left of $\mu_{1}^{\bullet}-\epsilon^{'}$.
By law of large numbers applied to the i.i.d. sequence $(\frac{g_{1}^{'}(X_{1,s}-(\mu_{1}^{\bullet}-\epsilon)+\mu_{1}^{\bullet})}{g_{1}(X_{1,s}-(\mu_{1}^{\bullet}-\epsilon)+\mu_{1}^{\bullet})})_{s\ge1},$
almost surely 
\[
\frac{1}{t}\sum_{s=1}^{t}\frac{g_{1}^{'}(X_{1,s}-(\mu_{1}^{\bullet}-\epsilon)+\mu_{1}^{\bullet})}{g_{1}(X_{1,s}-(\mu_{1}^{\bullet}-\epsilon)+\mu_{1}^{\bullet})}\to\mathbb{E}_{X\sim g_{1}}\left[\frac{g_{1}^{'}(X_{1}+\epsilon)}{g_{1}(X_{1}+\epsilon)}\right].
\]
Since $\mathbb{E}_{X\sim g_{1}}\left[\frac{g_{1}^{'}(X_{1})}{g_{1}(X_{1})}\right]=0$
and since $z\mapsto\frac{g_{1}^{'}(z)}{g_{1}(z)}=\frac{d}{dz}(\ln(g_{1}(z))$
is strictly decreasing by log-concavity of the normal distribution,
there is some $\delta>0$ so that $\mathbb{E}_{X\sim g_{1}}\left[\frac{g_{1}^{'}(X_{1}+\epsilon^{'})}{g_{1}(X_{1}+\epsilon^{'})}\right]=-\delta.$
Furthermore, for any $\mu_{1}\le\mu_{1}^{\bullet}-\epsilon^{'},$
then for any $x_{1}\in\mathbb{R},$ $\frac{g_{1}^{'}(x_{1}-\mu_{1}+\mu_{1}^{\bullet})}{g_{1}(x_{1}-\mu_{1}+\mu_{1}^{\bullet})}\le\frac{g_{1}^{'}(x_{1}+\epsilon^{'})}{g_{1}(x_{1}-\epsilon^{'})}.$
Along any $\omega$ where $\frac{1}{t}\sum_{s=1}^{t}\frac{g_{1}^{'}(X_{1,s}-(\mu_{1}^{\bullet}-\epsilon^{'})+\mu_{1}^{\bullet})}{g_{1}(X_{1,s}-(\mu_{1}^{\bullet}-\epsilon^{'})+\mu_{1}^{\bullet})}\to-\delta$,
we therefore also have 
\[
\limsup_{t\to\infty}\sup_{\mu_{1}\ge\mu_{1}^{\bullet}-\epsilon^{'}}\frac{1}{t}\sum_{s=1}^{t}\frac{g_{1}^{'}(X_{1,s}-\mu_{1}+\mu_{1}^{\bullet})}{g_{1}(X_{1,s}-\mu_{1}+\mu_{1}^{\bullet})}\le-\delta.
\]

Therefore almost surely 
\[
\liminf_{t\to\infty}\inf_{(\mu_{1},\mu_{2})\in\APLbox_{L}}\left(\nabla_{v}\frac{1}{t}\ell_{t}(\mu_{1},\mu_{2})\right)\ge\frac{\delta}{\sigma^{2}\sqrt{1+\gamma^{2}}}.
\]

\begin{center}\includegraphics[scale=0.6]{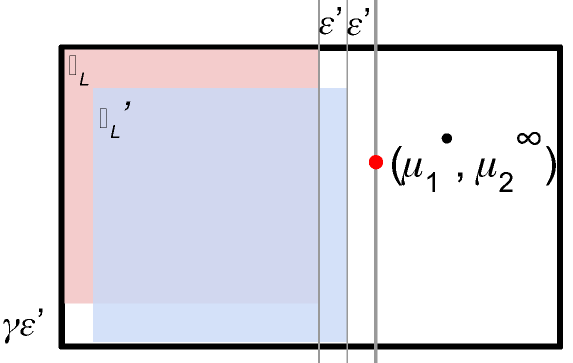}\end{center}

Let $\APLbox_{L}^{'}$ be $\APLbox_{L}$ shifted by the vector $(\epsilon^{'},-\gamma\epsilon^{'})$,
so it remains in $\APLbox$ and at least $\epsilon^{'}$ to the left
of $\mu_{1}^{\bullet}.$ That is, $\APLbox_{L}^{'}:=[\underline{\mu}_{1}+\epsilon^{'},\mu_{1}^{\bullet}-\epsilon^{'}]\times[\underline{\mu}_{2},\bar{\mu}_{2}-\gamma\epsilon^{'}]$
if $\gamma>0$ (illustrated above) and $\APLbox_{L}^{'}:=[\underline{\mu}_{1}+\epsilon^{'},\mu_{1}^{\bullet}-\epsilon^{'}]\times[\underline{\mu}_{2}-\gamma\epsilon^{'},\bar{\mu}_{2}]$
if $\gamma\le0$. I will show that $\lim_{t\to\infty}\tilde{M}_{t}(\APLbox_{L})=0$
almost surely. The idea is we can map every point in $\APLbox_{L}$
to another point in $\APLbox_{L}^{'}$ in the direction of $v$. For
every point, its image under the map will have much higher posterior
probability, since we have a uniform, strictly positive lowerbound
on the directional derivative of log-likelihood $\ell_{t}$ in the
direction of $v$. 
\begin{align*}
\tilde{M}_{t}(\APLbox_{L}) & =\int_{\APLbox_{L}}\tilde{m}_{t}(\mu_{1},\mu_{2})d\mu\\
 & =\int_{\APLbox_{L}^{'}}\tilde{m}_{t}(\mu_{1},\mu_{2})\cdot\frac{\tilde{m}_{t}(\mu_{1}-\epsilon^{'},\mu_{2}-\gamma\epsilon^{'})}{\tilde{m}_{t}(\mu_{1},\mu_{2})}d\mu\\
 & =\int_{\APLbox_{L}^{'}}\tilde{m}_{t}(\mu_{1},\mu_{2})\exp(\ell_{t}(\mu_{1}-\epsilon^{'},\mu_{2}-\gamma\epsilon^{'})-\ell_{t}(\mu_{1},\mu_{2}))d\mu\\
 & =\int_{\APLbox_{L}^{'}}\tilde{m}_{t}(\mu_{1},\mu_{2})\exp(-\int_{0}^{\epsilon}\nabla_{v}\ell_{t}(\mu_{1}-\epsilon^{'}+z,\mu_{2}-\gamma\epsilon^{'}+\gamma z)dz)d\mu
\end{align*}
Almost surely, 
\[
\liminf_{t\to\infty}\inf_{(\mu_{1},\mu_{2})\in\APLbox_{L}^{'},z\in[0,\epsilon^{'}]}\left(\nabla_{v}\ell_{t}(\mu_{1}-\epsilon^{'}+z,\mu_{2}-\gamma\epsilon^{'}+\gamma z)\right)\ge\frac{t\delta}{\sigma^{2}\sqrt{1+\gamma^{2}}},
\]
so almost surely 
\[
\limsup_{t\to\infty}\tilde{M}_{t}(\APLbox_{L})\le\limsup_{t\to\infty}\exp(-\frac{\epsilon^{'}t\delta}{\sigma^{2}\sqrt{1+\gamma^{2}}})\cdot\int_{\APLbox_{L}^{'}}\tilde{m}_{t}(\mu_{1},\mu_{2})d\mu.
\]
But for every $\omega$ and $t$, the RHS is bounded above by $\exp(-\frac{\epsilon^{'}t\delta}{\sigma^{2}\sqrt{1+\gamma^{2}}})$,
which tends to 0 as $t\to\infty$ since $\epsilon^{'},\delta>0$.
So in fact $\tilde{M}_{t}(\APLbox_{L})\to0$ almost surely.

Since the choice of $\epsilon^{'}>0$ was arbitrary, this shows for
every $\epsilon>0,$ almost surely $\lim_{t\to\infty}\tilde{M}_{t}([\underline{\mu}_{1},\mu_{1}^{\bullet}-\epsilon]\times[\underline{\mu}_{2}+\epsilon,\bar{\mu}_{2}])=0$
when $\gamma>0$ and $\lim_{t\to\infty}\tilde{M}_{t}([\underline{\mu}_{1},\mu_{1}^{\bullet}-\epsilon]\times[\underline{\mu}_{2},\bar{\mu}_{2}-\epsilon])=0$
when $\gamma\le0$. And by a symmetric argument, $\lim_{t\to\infty}\tilde{M}_{t}([\mu_{1}^{\bullet}+\epsilon,\bar{\mu}_{1}]\times[\underline{\mu}_{2},\bar{\mu}_{2}-\epsilon])=0$
when $\gamma>0$ and $\lim_{t\to\infty}\tilde{M}_{t}([\mu_{1}^{\bullet}+\epsilon,\bar{\mu}_{1}]\times[\underline{\mu}_{2}+\epsilon,\bar{\mu}_{2}])=0$
when $\gamma\le0.$ Taking the complement of these sets that get assigned
probability 0 in the limit establishes the result.
\end{proof}

\subsubsection{Decomposing Partial Derivative of Log-Likelihood With Respect to
$\mu_{2}$}

I record a decomposition of $\frac{\partial\ell}{\partial\mu_{2}}(\mu_{1},\mu_{2})$,
the partial derivative of the log-likelihood process with respect
to its second argument.

Define two stochastic processes: 
\begin{align*}
\varphi_{s}(\mu_{1},\mu_{2}) & :=-\lambda(X_{2,s}-\mu_{2}+\mu_{2}^{\bullet}+\gamma(X_{1,s}-\mu_{1}))\cdot1\{X_{1,s}\le\tilde{C}_{s}\}\\
\bar{\varphi}_{s}(\mu_{1},\mu_{2}) & :=\frac{\partial}{\partial\mu_{2}}\bar{L}(\mu_{2}+\gamma(\mu_{1}-\mu_{1}^{\bullet})\mid\tilde{C}_{s}),
\end{align*}
where $\bar{L}(\mu_{2}\mid c):=\int_{-\infty}^{c}g_{1}(x_{1})\cdot\int_{-\infty}^{\infty}g_{2}(x_{2}\mid x_{1})\cdot\ln(\phi(x_{2}\mid\mu_{2}-\gamma(x_{1}-\mu_{1}^{\bullet})))dx_{2}dx_{1}$.
Note that $\bar{\varphi}_{s}(\mu_{1},\mu_{2})$ is measurable with
respect to $\mathcal{F}_{s-1},$ since $(\tilde{C}_{t})$ is a predictable
process. Write $\xi_{s}(\mu_{1},\mu_{2}):=\varphi_{s}(\mu_{1},\mu_{2})-\bar{\varphi}_{s}(\mu_{1},\mu_{2})$
and $y_{t}(\mu_{1},\mu_{2}):=\sum_{s=1}^{t}\xi_{s}(\mu_{1},\mu_{2})$.
Write $z_{t}(\mu_{1},\mu_{2}):=\sum_{s=1}^{t}\bar{\varphi}_{s}(\mu_{1},\mu_{2})$.
\begin{lem}
\label{lem:decomposition-general} $\frac{\partial\ell_{t}}{\partial\mu_{2}}(\mu_{1},\mu_{2})=\frac{D_{2}m_{0}(\mu_{1},\mu_{2})}{m_{0}(\mu_{1},\mu_{2})}+y_{t}(\mu_{1},\mu_{2})+z_{t}(\mu_{1},\mu_{2})$
\end{lem}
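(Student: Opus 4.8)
The plan is to mimic the proof of Lemma \ref{lem:decomposition} almost verbatim, since this is the two-dimensional analog of that one-dimensional decomposition. First I would write out $\ell_t(\mu_1,\mu_2)$ explicitly using the Gaussian densities, as is already done in the paragraph preceding Lemma \ref{lem:learning_mu1}: up to an additive constant and the $\ln g(\mu_1,\mu_2)$ term, it equals a sum over $s$ of $-\tfrac{(X_{1,s}-\mu_1)^2}{2\sigma^2}$ plus a sum over $s$ of $\boldsymbol{1}\{X_{1,s}\le\tilde C_s\}\cdot\big(\text{const}-\tfrac{(X_{2,s}-\mu_2+\gamma(X_{1,s}-\mu_1))^2}{2\sigma^2}\big)$. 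Differentiating term by term with respect to $\mu_2$, the first sum contributes nothing (it does not involve $\mu_2$), the $\ln g$ term contributes $D_2 g(\mu_1,\mu_2)/g(\mu_1,\mu_2)$, and the $s$-th summand of the second sum contributes exactly $\sigma^{-2}(X_{2,s}-\mu_2+\gamma(X_{1,s}-\mu_1))\boldsymbol{1}\{X_{1,s}\le\tilde C_s\}=\varphi_s(\mu_1,\mu_2)$. Hence $\frac{\partial\ell_t}{\partial\mu_2}(\mu_1,\mu_2)=\frac{D_2 g(\mu_1,\mu_2)}{g(\mu_1,\mu_2)}+\sum_{s=1}^t\varphi_s(\mu_1,\mu_2)$.

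The remaining step is the purely algebraic rearrangement $\sum_{s=1}^t\varphi_s(\mu_1,\mu_2)=\sum_{s=1}^t\big(\varphi_s(\mu_1,\mu_2)-\bar\varphi_s(\mu_1,\mu_2)\big)+\sum_{s=1}^t\bar\varphi_s(\mu_1,\mu_2)=y_t(\mu_1,\mu_2)+z_t(\mu_1,\mu_2)$, which is immediate from the definitions of $\xi_s$, $y_t$ and $z_t$ given just before the lemma. Combining the two displays yields the claimed identity. One should note, just as in Lemma \ref{lem:quadratic_variation}, that $\bar\varphi_s(\mu_1,\mu_2)$ is $\mathcal{F}_{s-1}$-measurable because $(\tilde C_t)$ is a predictable process, so the decomposition into a martingale-difference part $y_t$ and a predictable part $z_t$ is meaningful; but this observation is for later use and is not needed to establish the equality itself.

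There is essentially no obstacle here: the only care required is the justification for differentiating the infinite-looking expression under the implicit limit, but since $\ell_t$ is a finite sum of smooth functions of $(\mu_1,\mu_2)$ for each fixed $t$ and $\omega$, term-by-term differentiation is valid with no convergence issues. I would present the proof in three or four lines: expand $\ell_t$, differentiate in $\mu_2$ to get $\frac{D_2 g}{g}+\sum_s\varphi_s$, then add and subtract $\bar\varphi_s$ and regroup. This parallels the proof of Lemma \ref{lem:decomposition} exactly, with the one-dimensional fundamental $\mu_2$ replaced by the pair $(\mu_1,\mu_2)$ and the constant $\mu_1^\bullet$ in the earlier $\varphi_s,\bar\varphi_s$ replaced by the free variable $\mu_1$.
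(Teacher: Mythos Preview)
Your proposal is correct and matches the paper's approach exactly: the paper's proof is the single sentence ``This comes from expanding $\ell_{t}(\mu_{1},\mu_{2})$ and taking its derivative as in the proof of Lemma \ref{lem:learning_mu1},'' which is precisely the expand--differentiate--regroup computation you outline. Your reference to Lemma \ref{lem:decomposition} rather than Lemma \ref{lem:learning_mu1} is immaterial, since both contain the same expansion and partial-derivative calculation.
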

\begin{proof}
This comes from expanding $\ell_{t}(\mu_{1},\mu_{2})$ and taking
its derivative as in the proof of Lemma \ref{lem:learning_mu1}.
\end{proof}
Now I derive a result about the $\xi_{t}(\mu_{1},\mu_{2})$ processes
for different pairs $(\mu_{1},\mu_{2}).$
\begin{lem}
\label{lem:quadratic_variation-general} There exists $\kappa_{\xi}<\infty$
so that for every $(\mu_{1},\mu_{2})\in\APLbox$ and for every $t\ge1,$
$\omega\in\Omega$, $\mathbb{E}[\xi_{t}^{2}(\mu_{1},\mu_{2})|\mathcal{F}_{t-1}](\omega)\le\kappa_{\xi}$.
\end{lem}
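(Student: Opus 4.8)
The plan is to mirror exactly the proof of Lemma \ref{lem:quadratic_variation} from the one-dimensional case, since the only new wrinkle is that $\mu_{1}$ is now a free parameter ranging over a compact set rather than being fixed at $\mu_{1}^{\bullet}$. First I would observe that $\bar\varphi_{t}(\mu_{1},\mu_{2})$ is $\mathcal{F}_{t-1}$-measurable (because $(\tilde C_{t})$ is predictable), and that conditional on $\mathcal{F}_{t-1}$ the only relevant information about $\varphi_{t}(\mu_{1},\mu_{2})$ is the value of $\tilde C_{t}$, since $X_{t}=(X_{1,t},X_{2,t})$ is independent of $(X_{s})_{s<t}$. Then, on a sample path with $\tilde C_{t}(\omega)=c$, I would compute
\[
\mathbb{E}[\varphi_{t}(\mu_{1},\mu_{2})\mid\mathcal{F}_{t-1}](\omega)=\sigma^{-2}\bigl(\mathbb{E}[(X_{2}-\mu_{2})\boldsymbol{1}\{X_{1}\le c\}]+\gamma\,\mathbb{E}[(X_{1}-\mu_{1})\boldsymbol{1}\{X_{1}\le c\}]\bigr),
\]
using independence of $X_{1,t}$ and $X_{2,t}$ to split the first expectation, which gives exactly $\bar\varphi_{t}(\mu_{1},\mu_{2})(\omega)$. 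Hence $\mathbb{E}[\xi_{t}(\mu_{1},\mu_{2})\mid\mathcal{F}_{t-1}]=0$ and $\mathbb{E}[\xi_{t}^{2}(\mu_{1},\mu_{2})\mid\mathcal{F}_{t-1}]=\mathrm{Var}[\varphi_{t}(\mu_{1},\mu_{2})\mid\mathcal{F}_{t-1}]\le\mathbb{E}[\varphi_{t}^{2}(\mu_{1},\mu_{2})\mid\mathcal{F}_{t-1}]$.

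Next I would bound $\mathbb{E}[\varphi_{t}^{2}(\mu_{1},\mu_{2})\mid\mathcal{F}_{t-1}]$. Dropping the indicator $\boldsymbol{1}\{X_{1,t}\le\tilde C_{t}\}\le 1$ gives the bound $\sigma^{-4}\,\mathbb{E}[(X_{2}-\mu_{2}+\gamma(X_{1}-\mu_{1}))^{2}]$, where the expectation is under the true law of a single draw pair and is therefore a fixed continuous function of $(\mu_{1},\mu_{2})$ — explicitly $\sigma^{-4}\bigl((\sigma^{\bullet})^{2}(1+\gamma^{2})+(\mu_{2}^{\bullet}-\mu_{2}+\gamma(\mu_{1}^{\bullet}-\mu_{1}))^{2}\bigr)$ if one wants the closed form, though continuity alone suffices. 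Since $\lozenge$ is compact and this function is continuous, it attains a finite maximum $\kappa_{\xi}$ over $\lozenge$. This $\kappa_{\xi}$ is independent of $t$ and of the sample path $\omega$, so it bounds $\mathbb{E}[\xi_{t}^{2}(\mu_{1},\mu_{2})\mid\mathcal{F}_{t-1}](\omega)$ uniformly for all $t$, all $\omega$, and all $(\mu_{1},\mu_{2})\in\lozenge$, which is the claim.

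There is essentially no obstacle here: the argument is a routine adaptation, and the only point that deserves a line of care is verifying that the conditional mean of $\varphi_{t}$ really equals $\bar\varphi_{t}$ in the two-parameter setting — this uses independence of $X_{1,t}$ and $X_{2,t}$ to write $\mathbb{E}[(X_{2}-\mu_{2})\boldsymbol{1}\{X_{1}\le c\}]=(\mu_{2}^{\bullet}-\mu_{2})\,\mathbb{P}[X_{1}\le c]$ and $\mathbb{E}[(X_{1}-\mu_{1})\boldsymbol{1}\{X_{1}\le c\}]=\mathbb{P}[X_{1}\le c]\,(\mathbb{E}[X_{1}\mid X_{1}\le c]-\mu_{1})$. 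The compactness of $\lozenge$, rather than the unboundedness of $\mathbb{R}^{2}$, is what makes the uniform bound work — this is why the theorem is stated for a prior supported on a bounded parallelogram. I would then simply note that the rest of the convergence argument (the martingale LLN of \citet{heidhues2018unrealistic}, applied uniformly as in Lemma \ref{lem:uniform_LLN}) goes through verbatim with $y_{t}(\mu_{1},\mu_{2})$ in place of $y_{t}(\mu_{2})$.
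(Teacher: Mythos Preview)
Your proposal is correct and follows essentially the same route as the paper's own proof: verify that $\bar\varphi_{t}$ is the conditional mean of $\varphi_{t}$ via independence of $X_{1,t}$ and $X_{2,t}$, bound the conditional second moment by dropping the indicator, and then invoke continuity on the compact set $\lozenge$ to obtain the uniform constant $\kappa_{\xi}$. Your write-up is, if anything, slightly more detailed than the paper's (you give the explicit closed form and spell out the two expectation identities), but the argument is the same.
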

\begin{proof}
Note that $\bar{\varphi}_{t}(\mu_{1},\mu_{2})$ is measurable with
respect to $\mathcal{F}_{t-1}.$ Also, $\varphi_{t}(\mu_{1},\mu_{2})|\mathcal{F}_{t-1}=\varphi_{t}(\mu_{1},\mu_{2})|\tilde{C}_{t}$,
because by independence of $X_{t}$ from $(X_{s})_{s=1}^{t-1},$ the
only information that $\mathcal{F}_{t-1}$ contains about $\varphi_{t}(\mu_{1},\mu_{2})$
is in determining the cutoff threshold $\tilde{C}_{t}$.

At a sample path $\omega$ so that $\tilde{C}_{t}(\omega)=c\in\mathbb{R},$
\begin{align*}
 & \mathbb{E}[\varphi_{s}(\mu_{1},\mu_{2})|\mathcal{F}_{t-1}](\omega)\\
= & \mathbb{E}[-\lambda(X_{2,s}-\mu_{2}+\mu_{2}^{\bullet}+\gamma(X_{1,s}-\mu_{1}))\cdot\boldsymbol{1}\{X_{1}\le c\}]\\
= & \frac{\partial}{\partial\mu_{2}}\int_{-\infty}^{c}g_{1}(x_{1})\cdot\int_{-\infty}^{\infty}g_{2}(x_{2}\mid x_{1})\cdot\ln(\phi(x_{2}\mid\mu_{2}-\gamma(x_{1}-\mu_{1})))dx_{2}dx_{1}\\
= & \frac{\partial}{\partial\mu_{2}}\int_{-\infty}^{c}g_{1}(x_{1})\cdot\int_{-\infty}^{\infty}g_{2}(x_{2}\mid x_{1})\cdot\ln(\phi(x_{2}\mid[\mu_{2}+\gamma(\mu_{1}-\mu_{1}^{\bullet})]-\gamma(x_{1}-\mu_{1}^{\bullet})))dx_{2}dx_{1}\\
= & \frac{\partial}{\partial\mu_{2}}\bar{L}(\mu_{2}+\gamma(\mu_{1}-\mu_{1}^{\bullet})\mid c).
\end{align*}
 This shows that $\mathbb{E}[\varphi_{s}(\mu_{1},\mu_{2})|\mathcal{F}_{t-1}](\omega)=\bar{\varphi}_{s}(\mu_{1},\mu_{2})(\omega)$.
Since this holds regardless of $c$, we get that $\mathbb{E}[\varphi_{s}(\mu_{1},\mu_{2})|\mathcal{F}_{t-1}]=\bar{\varphi}_{t}(\mu_{1},\mu_{2})$
for all $\omega,$ that is to say {\small{}
\[
\mathbb{E}[\xi_{t}^{2}(\mu_{1},\mu_{2})|\mathcal{F}_{t-1}]=\text{Var}[\varphi_{t}(\mu_{1},\mu_{2})|\mathcal{F}_{t-1}]\le\mathbb{E}[\varphi_{t}^{2}(\mu_{1},\mu_{2})|\mathcal{F}_{t-1}]\le\mathbb{E}[(\lambda(X_{2,s}-\mu_{2}+\mu_{2}^{\bullet}+\gamma(X_{1,s}-\mu_{1})))^{2}].
\]
} It suffices to show $\mathbb{E}\left[\left(\lambda(X_{2}-\mu_{2}+\mu_{2}^{\bullet}+\gamma(X_{1}-\mu_{1}))\right)^{2}\right]$
exists for all $\mu_{1},\mu_{2}\in\mathbb{R}$ and is continuous.
The (finite) maximum value this expectation takes on the compact set
$\APLbox$ can be taken as $\kappa_{\xi}$.

Since the second derivative of the log of the normal density is uniformly
bounded, there exists some $\kappa_{f_{2}}<\infty$ so that for all
$z\in\mathbb{R},$ $-\kappa_{f_{2}}<\lambda^{'}(z)<0$. So, $\lambda(z)$
is Lipschitz continuous with constant $\kappa_{f_{2}}$. Let $b_{0}:=\lambda(-\mu_{2}+\mu_{2}^{\bullet}-\gamma\mu_{1})$.

For any $x_{1},x_{2}\in\mathbb{R},$ 
\begin{align*}
\left(\lambda(x_{2}-\mu_{2}+\mu_{2}^{\bullet}+\gamma(x_{1}-\mu_{1}))\right)^{2}= & b_{0}^{2}+\left(\lambda(x_{2}-\mu_{2}+\mu_{2}^{\bullet}+\gamma(x_{1}-\mu_{1}))\right)^{2}-\left(\lambda(-\mu_{2}+\mu_{2}^{\bullet}-\gamma\mu_{1})\right)^{2}\\
\le & b_{0}^{2}+\left|\lambda(x_{2}-\mu_{2}+\mu_{2}^{\bullet}+\gamma(x_{1}-\mu_{1}))-\lambda(-\mu_{2}+\mu_{2}^{\bullet}-\gamma\mu_{1})\right|\cdot\\
 & \times\left|\lambda(x_{2}-\mu_{2}+\gamma(x_{1}+\mu_{2}^{\bullet}-\mu_{1}))+\lambda(-\mu_{2}+\mu_{2}^{\bullet}-\gamma\mu_{1})\right|\\
\le & b_{0}^{2}+(\kappa_{f_{2}}\cdot(|x_{2}|+\gamma|x_{1}|))\cdot(2b_{0}+(\kappa_{f_{2}}\cdot(|x_{2}|+\gamma|x_{1}|))).
\end{align*}
 Note the bound is a second-order polynomial in $|x_{1}|$ and $|x_{2}|$.
We have{\small{}
\[
\mathbb{E}\left[\left(\lambda(X_{2}-\mu_{2}+\mu_{2}^{\bullet}+\gamma(X_{1}-\mu_{1}))\right)^{2}\right]\le\mathbb{E}\left[b_{0}^{2}+(\kappa_{f_{2}}\cdot(|X_{2}|+\gamma|X_{1}|))\cdot(2b_{0}+(\kappa_{f_{2}}\cdot(|X_{2}|+\gamma|X_{1}|)))\right]<\infty,
\]
} where the last inequality is due to the fact that $X_{1},X_{2}$
have finite second moments.
\end{proof}

\subsubsection{A Law of Large Numbers for Martingale Increments}

I use a statistical result from \citet*{heidhues2018unrealistic}
to show that the $y_{t}/t$ term in the decomposition of $\frac{1}{t}\frac{\partial\ell_{t}}{\partial\mu_{2}}$
almost surely converges to 0 in the long run, and furthermore this
convergence is uniform on $\APLbox.$ This lets me focus on terms
of the form $\bar{\varphi}_{s}(\mu_{1},\mu_{2})$, which can be interpreted
as the \emph{expected} contribution to the log likelihood derivative
from round $s$ data. This lends tractability to the problem as $\bar{\varphi}_{s}(\mu_{1},\mu_{2})$
only depends on $\tilde{C}_{s},$ but not on $X_{1,s}$ or $X_{2,s}$.
\begin{lem}
\label{lem:LLN_martingale_general} For every $(\mu_{1},\mu_{2})\in\APLbox$,
$\lim_{t\to\infty}|\frac{y_{t}(\mu_{1},\mu_{2})}{t}|=0$ almost surely.
\end{lem}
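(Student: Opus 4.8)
The statement to prove is that for each fixed $(\mu_1,\mu_2)\in\lozenge$, the martingale-type sum $y_t(\mu_1,\mu_2)=\sum_{s=1}^t \xi_s(\mu_1,\mu_2)$ satisfies $y_t(\mu_1,\mu_2)/t\to 0$ almost surely. The plan is to invoke Proposition 10 of \citet{heidhues2018unrealistic}, quoted in the excerpt: if $(y_t)$ is a martingale with quadratic variation $[y]_t\le vt$ almost surely for some constant $v\ge0$, then $y_t/t\to 0$ a.s. So the entire task reduces to verifying the two hypotheses — that $(y_t(\mu_1,\mu_2))_t$ is a martingale with respect to the filtration $(\mathcal F_t)$, and that its quadratic variation grows at most linearly.

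First I would check the martingale property. Fix $(\mu_1,\mu_2)$ and write $\xi_s=\varphi_s(\mu_1,\mu_2)-\bar\varphi_s(\mu_1,\mu_2)$. The key fact, established inside the proof of Lemma \ref{lem:quadratic_variation-general}, is that $\bar\varphi_s(\mu_1,\mu_2)=\mathbb{E}[\varphi_s(\mu_1,\mu_2)\mid\mathcal F_{s-1}]$: conditioning on $\mathcal F_{s-1}$ only fixes the cutoff $\tilde C_s$ (since $(X_s)$ is independent of $(X_r)_{r<s}$), and a direct computation using the independence of $X_{1,s}$ and $X_{2,s}$ and the truncated-Gaussian mean formula gives exactly $\bar\varphi_s$. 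Hence $\mathbb{E}[\xi_s\mid\mathcal F_{s-1}]=0$, so each $\xi_s$ is a martingale difference and, by the tower property, for $t'<t$ one gets $\mathbb{E}[y_t(\mu_1,\mu_2)\mid\mathcal F_{t'}]=\sum_{s=1}^{t'}\xi_s+\sum_{s=t'+1}^t\mathbb{E}[\mathbb{E}[\xi_s\mid\mathcal F_{s-1}]\mid\mathcal F_{t'}]=y_{t'}(\mu_1,\mu_2)$. Also $\bar\varphi_s$ is $\mathcal F_{s-1}$-measurable because $(\tilde C_t)$ is predictable, so $y_t(\mu_1,\mu_2)$ is $\mathcal F_t$-measurable, and integrability follows from the $L^2$ bound below. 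This is the same argument as in the proof of the analogous Lemma in the known-$\mu_1$ case in the Appendix, so it is essentially routine.

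Next I would bound the quadratic variation. By definition $[y(\mu_1,\mu_2)]_t=\sum_{s=1}^{t-1}\mathbb{E}[(y_s(\mu_1,\mu_2)-y_{s-1}(\mu_1,\mu_2))^2\mid\mathcal F_{s-1}]=\sum_{s=1}^{t-1}\mathbb{E}[\xi_s^2(\mu_1,\mu_2)\mid\mathcal F_{s-1}]$. Lemma \ref{lem:quadratic_variation-general} supplies a uniform constant $\kappa_\xi<\infty$ with $\mathbb{E}[\xi_s^2(\mu_1,\mu_2)\mid\mathcal F_{s-1}](\omega)\le\kappa_\xi$ for all $s,\omega$ and all $(\mu_1,\mu_2)\in\lozenge$; summing gives $[y(\mu_1,\mu_2)]_t\le\kappa_\xi t$ almost surely, so the hypothesis of \citet{heidhues2018unrealistic}'s Proposition 10 holds with $v=\kappa_\xi$. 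Applying that proposition yields $y_t(\mu_1,\mu_2)/t\to0$ a.s., which is exactly the claim. I do not expect any real obstacle here: the conceptual work — identifying $\bar\varphi_s$ as the conditional expectation and getting the uniform $L^2$ bound — has already been done in Lemma \ref{lem:quadratic_variation-general}, and the only thing left is to assemble these pieces into the martingale-LLN template, mirroring verbatim the corresponding proof in the one-dimensional appendix. If anything requires a touch of care, it is simply noting that all the cited bounds are stated uniformly over $(\mu_1,\mu_2)\in\lozenge$, so the fixed-$(\mu_1,\mu_2)$ conclusion is immediate; the genuinely uniform-in-$(\mu_1,\mu_2)$ statement is deferred to the next lemma.
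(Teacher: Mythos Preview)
Your proposal is correct and follows essentially the same approach as the paper: verify the martingale property via $\mathbb{E}[\xi_s\mid\mathcal F_{s-1}]=0$ (using that $\bar\varphi_s=\mathbb{E}[\varphi_s\mid\mathcal F_{s-1}]$ from Lemma~\ref{lem:quadratic_variation-general}), bound the quadratic variation by $\kappa_\xi t$ using the same lemma, and then apply Proposition~10 of \citet{heidhues2018unrealistic}. The paper's proof is virtually identical, including the tower-property computation and the reference back to the one-dimensional case.
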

\begin{proof}
\citet*{heidhues2018unrealistic}'s Proposition 10 shows that if $(y_{t})$
is a martingale such that there exists some constant $v\ge0$ satisfying
$[y]_{t}\le vt$ almost surely, where $[y]_{t}$ is the quadratic
variation of $(y_{t}),$ then almost surely $\lim_{t\to\infty}\frac{y_{t}}{t}=0$.

Consider the process $y_{t}(\mu_{1},\mu_{2})$ for a fixed $(\mu_{1},\mu_{2})\in\APLbox$.
By definition $y_{t}=\sum_{s=1}^{t}\varphi_{s}(\mu_{1},\mu_{2})-\bar{\varphi}_{s}(\mu_{1},\mu_{2})$.
As established in the proof of Lemma \ref{lem:quadratic_variation-general},
for every $s,$ $\bar{\varphi}_{s}(\mu_{1},\mu_{2})=\mathbb{E}[\varphi_{s}(\mu_{1},\mu_{2})|\mathcal{F}_{s-1}]$.
So for $t^{'}<t,$ 
\begin{align*}
\mathbb{E}[y_{t}(\mu_{1},\mu_{2})|\mathcal{F}_{t^{'}}] & =\sum_{s=1}^{t^{'}}\varphi_{s}(\mu_{1},\mu_{2})-\bar{\varphi}_{s}(\mu_{1},\mu_{2})+\mathbb{E}\left[\sum_{s=t^{'}+1}^{t}\varphi_{s}(\mu_{1},\mu_{2})-\bar{\varphi}_{s}(\mu_{1},\mu_{2})|\mathcal{F}_{t^{'}}\right]\\
 & =\sum_{s=1}^{t^{'}}\varphi_{s}(\mu_{1},\mu_{2})-\bar{\varphi}_{s}(\mu_{1},\mu_{2})+\sum_{s=t^{'}+1}^{t}\mathbb{E}[\mathbb{E}[\varphi_{s}(\mu_{1},\mu_{2})-\bar{\varphi}_{s}(\mu_{1},\mu_{2})|\mathcal{F}_{s-1}]\mid\mathcal{F}_{t^{'}}]\\
 & =\sum_{s=1}^{t^{'}}\varphi_{s}(\mu_{1},\mu_{2})-\bar{\varphi}_{s}(\mu_{1},\mu_{2})+0=y_{t^{'}}(\mu_{1},\mu_{2}).
\end{align*}
This shows $(y_{t}(\mu_{1},\mu_{2}))_{t}$ is a martingale. Also,
\[
[y(\mu_{1},\mu_{2})]_{t}=\sum_{s=1}^{t-1}\mathbb{E}[(y_{s}(\mu_{1},\mu_{2})-y_{s-1}(\mu_{1},\mu_{2}))^{2}|\mathcal{F}_{s-1}]=\sum_{s=1}^{t-1}\mathbb{E}[\xi_{s}^{2}(\mu_{1},\mu_{2})|\mathcal{F}_{s-1}]\le\kappa_{\xi}\cdot t
\]
by Lemma \ref{lem:quadratic_variation-general}. Therefore \citet*{heidhues2018unrealistic}
Proposition 10 applies.
\end{proof}
\begin{lem}
\label{lem:uniform_LLN_genereal} $\lim_{t\to\infty}\sup_{(\mu_{1},\mu_{2})\in\APLbox}|\frac{y_{t}(\mu_{1},\mu_{2})}{t}|=0$
almost surely.
\end{lem}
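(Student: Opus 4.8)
The plan is to adapt the proof of Lemma~\ref{lem:uniform_LLN} --- the one-dimensional version of this statement --- to the two-dimensional parameter set $\lozenge$. Recall that the scalar argument invoked the proof of Lemma~11 in \citet*{heidhues2018unrealistic}, which reduces uniform convergence of $y_t/t$ to three ingredients: (i) pointwise almost-sure convergence $y_t(\mu_1,\mu_2)/t\to 0$ for each fixed $(\mu_1,\mu_2)$; (ii) a sequence of random variables $(B_t)$ with $\sup_{(\mu_1,\mu_2)\in\lozenge}\lVert\nabla_{(\mu_1,\mu_2)}\xi_t(\mu_1,\mu_2)\rVert\le B_t$ almost surely; and (iii) $\sup_{t\ge1}\tfrac1t\sum_{s=1}^t\mathbb{E}[B_s]<\infty$ together with $\tfrac1t\sum_{s=1}^t(B_s-\mathbb{E}[B_s])\to 0$ almost surely. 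Ingredient (i) is exactly Lemma~\ref{lem:LLN_martingale_general}, so the only work is to produce a suitable $(B_t)$.

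To that end, I would compute the parameter-gradient of $\xi_s=\varphi_s-\bar\varphi_s$. Differentiating the explicit expressions gives $\partial_{\mu_1}\varphi_s=-\sigma^{-2}\gamma\,\boldsymbol{1}\{X_{1,s}\le\tilde C_s\}$ and $\partial_{\mu_2}\varphi_s=-\sigma^{-2}\,\boldsymbol{1}\{X_{1,s}\le\tilde C_s\}$, while $\partial_{\mu_1}\bar\varphi_s=-\sigma^{-2}\gamma\,\mathbb{P}[X_1\le\tilde C_s]$ and $\partial_{\mu_2}\bar\varphi_s=-\sigma^{-2}\,\mathbb{P}[X_1\le\tilde C_s]$. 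Since indicators and probabilities lie in $[0,1]$, each partial derivative of $\xi_s$ is bounded in magnitude by $\sigma^{-2}(1+\gamma)$, uniformly over $(\mu_1,\mu_2)\in\lozenge$, over $s$, and over $\omega$. Hence the deterministic constant $B_t:=2\sigma^{-2}(1+\gamma)$ bounds $\sup_{(\mu_1,\mu_2)\in\lozenge}\lVert\nabla\xi_t\rVert$, and it trivially satisfies (iii) because $B_t-\mathbb{E}[B_t]\equiv 0$. Equivalently, the scalar map $\xi_s(\cdot,\cdot)$ is Lipschitz on $\lozenge$ with a deterministic constant $B$ not depending on $s$ or $\omega$.

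With these pieces the remaining argument is mechanical. Cover the compact set $\lozenge$ by finitely many balls of radius $\epsilon$ centered at points $(\mu_1^{(j)},\mu_2^{(j)})$; by Lemma~\ref{lem:LLN_martingale_general}, applied to each of the finitely many centers, almost surely $y_t(\mu_1^{(j)},\mu_2^{(j)})/t\to 0$ for all $j$. For an arbitrary $(\mu_1,\mu_2)\in\lozenge$, picking the center $(\mu_1^{(j)},\mu_2^{(j)})$ of a ball containing it, the Lipschitz bound yields $\lvert y_t(\mu_1,\mu_2)/t-y_t(\mu_1^{(j)},\mu_2^{(j)})/t\rvert\le \tfrac1t\sum_{s=1}^t\lvert\xi_s(\mu_1,\mu_2)-\xi_s(\mu_1^{(j)},\mu_2^{(j)})\rvert\le B\epsilon$, so $\limsup_{t\to\infty}\sup_{(\mu_1,\mu_2)\in\lozenge}\lvert y_t(\mu_1,\mu_2)/t\rvert\le B\epsilon$ almost surely. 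Taking $\epsilon\to 0$ along a countable sequence gives the lemma. The only point that needs genuine attention is the two-dimensional bookkeeping --- one must bound the full parameter-gradient of $\xi_s$ rather than the single directional derivative treated in the scalar Lemma~\ref{lem:error_derivative} --- but the computation above shows this introduces no new difficulty.
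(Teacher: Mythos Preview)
Your proposal is correct and essentially identical to the paper's proof. Both arguments reduce the uniform statement to pointwise convergence (Lemma~\ref{lem:LLN_martingale_general}) plus a deterministic Lipschitz bound on $\xi_s(\cdot,\cdot)$, and both arrive at the same constant $2\sigma^{-2}(1+\gamma)$; the paper cites Lemma~2 of \citet{andrews1992generic} (via \citet*{heidhues2018unrealistic}) as a black box, whereas you additionally spell out the underlying finite-cover argument, but that is exactly what Andrews' lemma does.
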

\begin{proof}
This argument is similar to Lemma 11 in \citet*{heidhues2018unrealistic}.
I apply Lemma 2 of \citet{andrews1992generic}, which says to prove
this result I just need to check conditions BD, P-SSLN, and S-LIP
from \citet{andrews1992generic}. BD holds because $\APLbox$ is a
bounded subset of $\mathbb{R}^{2}.$ P-SLLN holds because by Lemma
\ref{lem:LLN_martingale_general}, which shows for all $(\mu_{1},\mu_{2})\in\APLbox$,
$\lim_{t\to\infty}|\frac{y_{t}(\mu_{1},\mu_{2})}{t}|=0$ almost surely.

Condition S-LIP is essentially a Lipschitz continuity condition. It
requires finding sequence of random variables $B_{t}$ such that $|\xi_{t}(\mu_{1},\mu_{2})-\xi_{t}(\mu_{1}^{'},\mu_{2}^{'})|\le B_{t}\cdot(|\mu_{1}-\mu_{1}^{'}|+|\mu_{2}-\mu_{2}^{'}|)$
almost surely, such that these random variables satisfy $\sup_{t\ge1}\frac{1}{t}\sum_{s=1}^{t}\mathbb{E}[B_{s}]<\infty$,
and $\lim_{t\to\infty}\frac{1}{t}\sum_{s=1}^{t}(B_{s}-\mathbb{E}[B_{s}])=0$
almost surely.

But for every $\omega,$ $\varphi_{s}(\mu_{1},\mu_{2}):=-\lambda(X_{2,s}-\mu_{2}+\mu_{2}^{\bullet}+\gamma(X_{1,s}-\mu_{1}))\cdot1\{X_{1,s}\le\tilde{C}_{s}\}$
\begin{align*}
|\varphi_{s}(\mu_{1},\mu_{2})-\varphi_{s}(\mu_{1}^{'},\mu_{2}^{'})|\le & |\lambda(X_{2,s}-\mu_{2}+\mu_{2}^{\bullet}+\gamma(X_{1,s}-\mu_{1}))-\lambda(X_{2,s}-\mu_{2}^{'}+\mu_{2}^{\bullet}+\gamma(X_{1,s}-\mu_{1}^{'}))|.
\end{align*}
 As $\ln(f_{2}(\cdot))$ has a bounded second derivative,  RHS is
bounded by $\kappa_{f_{2}}\cdot\left(|\mu_{2}-\mu_{2}^{'}|+\gamma\cdot|\mu_{1}-\mu_{1}^{'}|\right)$.

Now that we know $|\varphi_{s}(\mu_{1},\mu_{2})-\varphi_{s}(\mu_{1}^{'},\mu_{2}^{'})|(\omega)\le\kappa_{f_{2}}\cdot\left(|\mu_{2}-\mu_{2}^{'}|+\gamma\cdot|\mu_{1}-\mu_{1}^{'}|\right)$
for all $\omega,$ we must also have $|\bar{\varphi}_{s}(\mu_{1},\mu_{2})-\bar{\varphi}_{s}(\mu_{1}^{'},\mu_{2}^{'})|(\omega)\le\kappa_{f_{2}}\cdot\left(|\mu_{2}-\mu_{2}^{'}|+\gamma\cdot|\mu_{1}-\mu_{1}^{'}|\right)$
for all $\omega$ since $\bar{\varphi}_{s}(\mu_{1},\mu_{2})=\mathbb{E}[\varphi_{s}(\mu_{1},\mu_{2})\mid\mathcal{F}_{s-1}]$.

Setting $B_{s}$ as the constant $2\kappa_{f_{2}}$ for every $s$
satisfies S-LIP.
\end{proof}

\subsubsection{Bounds on Asymptotic Beliefs and Asymptotic Cutoffs}

Recall that Lemma \ref{lem:C_equivalence} implies that for any $\mu_{2},$
all pairs of fundamentals on the line $li(\mu_{2})$ have the same
optimal cutoff threshold. Then against any feasible model $\Psi(\mu_{1},\mu_{2};\gamma)$
with $(\mu_{1},\mu_{2})\in\APLbox$, the best cutoff strategy is between
$C(\mu_{1}^{\bullet},\underline{\mu}_{2}^{\circ};\gamma)$ and $C(\mu_{1}^{\bullet},\bar{\mu}_{2}^{\circ};\gamma)$.
Define these cutoffs as $\underline{c}^{\circ}$ and $\bar{c}^{\circ}$
respectively.
\begin{lem}
\label{lem:belief_bound-general} Let $\underline{c}^{\circ}\le c\le\bar{c}^{\circ}$.
If $r-\gamma<0$, then $\liminf_{t\to\infty}\tilde{C}_{t}\ge c$ almost
surely implies $\lim_{t\to\infty}\tilde{M}_{t}(\ \lozenge[\underline{\mu}_{2}^{\circ},\mu_{2}^{*}(c))\ )=0$
almost surely and $\limsup_{t\to\infty}\tilde{C}_{t}\le c$ almost
surely implies $\lim_{t\to\infty}\tilde{M}_{t}(\ \lozenge(\mu_{2}^{*}(c),\bar{\mu}_{2}^{\circ}]\ )=0$
almost surely. If $r-\gamma>0$, then $\liminf_{t\to\infty}\tilde{C}_{t}\ge c$
almost surely implies $\lim_{t\to\infty}\tilde{M}_{t}(\ \lozenge(\mu_{2}^{*}(c),\bar{\mu}_{2}^{\circ}]\ )=0$
almost surely and $\limsup_{t\to\infty}\tilde{C}_{t}\le c$ almost
surely implies $\lim_{t\to\infty}\tilde{M}_{t}(\ \lozenge[\underline{\mu}_{2}^{\circ},\mu_{2}^{*}(c))\ )=0$
almost surely.
\end{lem}
\begin{proof}
We prove the ``liminf'' statement for the case of $r-\gamma<0$
and briefly discuss the argument for the ``limsup'' statement for
the case of $r-\gamma>0$ — the arguments for the other two statements
are very similar.

Consider the first statement when $r-\gamma<0$, fixing some $\underline{c}$
with $\underline{c}^{\circ}\le\underline{c}\le\bar{c}^{\circ}$. We
show that for all $\epsilon>0,$ there exists $\delta>0$ such that
almost surely,
\[
\liminf_{t\to\infty}\inf_{(\mu_{1},\mu_{2})\in\APLbox\cap\lozenge[\underline{\mu}_{2}^{\circ},\mu_{2}^{*}(\underline{c})-\epsilon]}\frac{1}{t}\frac{\partial\ell_{t}}{\partial\mu_{2}}(\mu_{1},\mu_{2})\ge\delta.
\]
From Lemma \ref{lem:decomposition-general}, we may rewrite LHS as
\begin{align*}
\liminf_{t\to\infty}\inf_{(\mu_{1},\mu_{2})\in\APLbox\cap\lozenge[\underline{\mu}_{2}^{\circ},\mu_{2}^{*}(\underline{c})-\epsilon]}\left[\frac{1}{t}\frac{D_{2}m_{0}(\mu_{1},\mu_{2})}{m_{0}(\mu_{1},\mu_{2})}+\frac{y_{t}(\mu_{1},\mu_{2})}{t}+\frac{z_{t}(\mu_{1},\mu_{2})}{t}\right],
\end{align*}
which is no smaller than taking the inf separately across the three
terms in the bracket, 
\begin{align*}
 & \liminf_{t\to\infty}\inf_{(\mu_{1},\mu_{2})\in\APLbox\cap\lozenge[\underline{\mu}_{2}^{\circ},\mu_{2}^{*}(\underline{c})-\epsilon]}\frac{1}{t}\frac{D_{2}m_{0}(\mu_{1},\mu_{2})}{m_{0}(\mu_{1},\mu_{2})}+\liminf_{t\to\infty}\inf_{(\mu_{1},\mu_{2})\in\lozenge[\underline{\mu}_{2}^{\circ},\mu_{2}^{*}(\underline{c})-\epsilon]}\frac{y_{t}(\mu_{1},\mu_{2})}{t}\\
 & +\liminf_{t\to\infty}\inf_{(\mu_{1},\mu_{2})\in\APLbox\cap\lozenge[\underline{\mu}_{2}^{\circ},\mu_{2}^{*}(\underline{c})-\epsilon]}\frac{z_{t}(\mu_{1},\mu_{2})}{t}.
\end{align*}

Since $D_{2}m_{0}/m_{0}$ is bounded on $\APLbox$ as $D_{2}m_{0}$
is continuous and $m_{0}$ is continuous and strictly positive on
the compact set $\APLbox$, the first term is 0 for every $\omega$.
To deal with the second term, 
\begin{align*}
\liminf_{t\to\infty}\inf_{(\mu_{1},\mu_{2})\in\APLbox\cap\lozenge[\underline{\mu}_{2}^{\circ},\mu_{2}^{*}(\underline{c})-\epsilon]}\frac{y_{t}(\mu_{1},\mu_{2})}{t} & \ge\liminf_{t\to\infty}\inf_{(\mu_{1},\mu_{2})\in\APLbox}-|\frac{y_{t}(\mu_{1},\mu_{2})}{t}|\\
 & =\liminf_{t\to\infty}\left\{ -1\cdot\sup_{(\mu_{1},\mu_{2})\in\APLbox}|\frac{y_{t}(\mu_{1},\mu_{2})}{t}|\right\} .
\end{align*}
Lemma \ref{lem:uniform_LLN_genereal} gives $\lim_{t\to\infty}\sup_{(\mu_{1},\mu_{2})\in\APLbox}|\frac{y_{t}(\mu_{1},\mu_{2})}{t}|=0$
almost surely. Hence, we conclude that, almost surely, 
\[
\liminf_{t\to\infty}\inf_{(\mu_{1},\mu_{2})\in\APLbox\cap\lozenge[\underline{\mu}_{2}^{\circ},\mu_{2}^{*}(\underline{c})-\epsilon]}\frac{y_{t}(\mu_{1},\mu_{2})}{t}\ge0.
\]

It suffices then to find $\delta>0$ and show $\liminf_{t\to\infty}\inf_{(\mu_{1},\mu_{2})\in\APLbox\cap\lozenge[\underline{\mu}_{2}^{\circ},\mu_{2}^{*}(\underline{c})-\epsilon]}\frac{z_{t}(\mu_{1},\mu_{2})}{t}\ge\delta$
almost surely. To do this, I first show $\bar{\varphi}_{s}(\mu_{1},\mu_{2})(\omega)\ge\delta$
whenever $\tilde{C}_{s}(\omega)\ge\underline{c}$ and $\mu_{2}\le\mu_{2}^{*}(\underline{c})-\epsilon$.
At every $\underline{c}^{\circ}\le c'\le\bar{c}^{\circ}$, we get
\[
\frac{\partial}{\partial\mu_{2}}\bar{L}(\mu_{2}\mid c')=\int_{-\infty}^{c'}g_{1}(x_{1})\cdot\int_{-\infty}^{\infty}(-1)\cdot g_{2}(x_{2}\mid x_{1})\cdot\lambda(x_{2}-\mu_{2}+\mu_{2}^{\bullet}+\gamma(x_{1}-\mu_{1}^{\bullet}))dx_{2}dx_{1}.
\]
First-order condition implies that $\frac{\partial}{\partial\mu_{2}}\bar{L}(\mu_{2}^{*}(c')\mid c')=0$.
Since $\lambda$ is strictly decreasing, we also get $\frac{\partial}{\partial\mu_{2}}\bar{L}(\mu_{2}\mid c')>0$
for any $\mu_{2}<\mu_{2}^{*}(c').$ Since we have $r-\gamma<0,$ $\mu_{2}^{*}(\cdot)$
is strictly increasing, which means $\frac{\partial}{\partial\mu_{2}}\bar{L}(\mu_{2}^{*}(\underline{c})-\epsilon\mid c')>0$
for any $\underline{c}\le c'\le\bar{c}^{\circ}.$ Let $\delta>0$
satisfy $\min_{c'\in[\underline{c},\bar{c}^{\circ}]}\frac{\partial}{\partial\mu_{2}}\bar{L}(\mu_{2}^{*}(\underline{c})-\epsilon\mid c')>\delta$,
which exists because $c'\mapsto\frac{\partial}{\partial\mu_{2}}\bar{L}(\mu_{2}^{*}(\underline{c})-\epsilon\mid c')$
is continuous on the compact domain $[\underline{c},\bar{c}^{\circ}].$
When $\tilde{C}_{s}(\omega)=c'\in[\underline{c},\bar{c}^{\circ}]$
and for any $(\mu_{1},\mu_{2})\in\APLbox\cap\lozenge[\underline{\mu}_{2}^{\circ},\mu_{2}^{*}(\underline{c})-\epsilon]$,
we have $\bar{\varphi}_{s}(\mu_{1},\mu_{2})(\omega)=\frac{\partial}{\partial\mu_{2}}\bar{L}(\mu_{2}\mid c')\ge\frac{\partial}{\partial\mu_{2}}\bar{L}(\mu_{2}^{*}(\underline{c})-\epsilon\mid c')>\delta.$

Along any $\omega$ where $\liminf_{t\to\infty}\tilde{C}_{t}\ge\underline{c}$,
we therefore have 
\[
\liminf_{s\to\infty}\inf_{(\mu_{1},\mu_{2})\in\APLbox\cap\lozenge[\underline{\mu}_{2}^{\circ},\mu_{2}^{*}(\underline{c})-\epsilon]}\bar{\varphi}_{s}(\mu_{1},\mu_{2})\ge\delta
\]
 and thus 
\[
\liminf_{t\to\infty}\inf_{(\mu_{1},\mu_{2})\in\APLbox\cap\lozenge[\underline{\mu}_{2}^{\circ},\mu_{2}^{*}(\underline{c})-\epsilon]}\frac{z_{t}(\mu_{1},\mu_{2})}{t}=\liminf_{t\to\infty}\inf_{(\mu_{1},\mu_{2})\in\APLbox\cap\lozenge[\underline{\mu}_{2}^{\circ},\mu_{2}^{*}(\underline{c})-\epsilon]}\frac{1}{t}\left[\sum_{s=1}^{t}\bar{\varphi}_{s}(\mu_{1},\mu_{2})\right]\ge\delta.
\]
Let $R:=[\underline{\mu}_{1},\bar{\mu}_{1}]\times[\underline{\mu}_{2},\bar{\mu}_{2}-\epsilon]\cap\lozenge[\underline{\mu}_{2}^{\circ},\mu_{2}^{*}(\underline{c})-2\epsilon]$,
and let $R^{'}:=R+(0,\epsilon)'$ be $R$ shifted upwards by $\epsilon$.
We have both $R,R^{'}\subseteq\APLbox\cap\lozenge[\underline{\mu}_{2}^{\circ},\mu_{2}^{*}(\underline{c})-\epsilon]$.
The illustration is for the case of $\gamma>0$.

\begin{center}\includegraphics[scale=0.17]{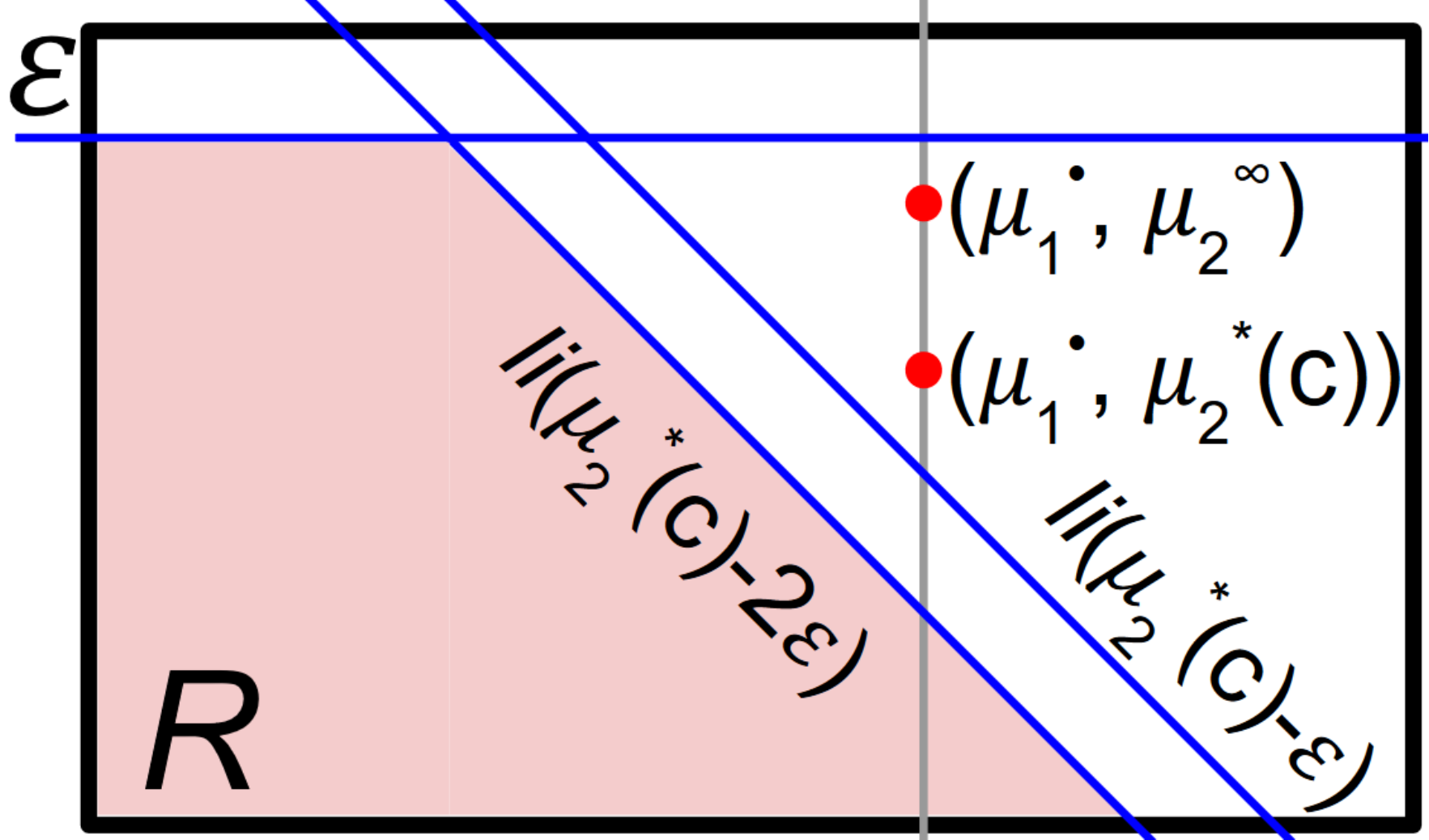}\end{center}

So using the same argument as in the proof of Lemma \ref{lem:learning_mu1},
\begin{align*}
\tilde{M}_{t}(R) & =\int_{R^{'}}\tilde{m}_{t}(\mu_{1},\mu_{2})\cdot\frac{\tilde{m}_{t}(\mu_{1},\mu_{2}-\epsilon)}{\tilde{m}_{t}(\mu_{1},\mu_{2})}d\mu\\
 & =\int_{R^{'}}\tilde{m}_{t}(\mu_{1},\mu_{2})\exp(\ell_{t}(\mu_{1},\mu_{2}-\epsilon)-\ell_{t}(\mu_{1},\mu_{2}))d\mu\\
 & =\int_{R^{'}}\tilde{m}_{t}(\mu_{1},\mu_{2})\exp(-\int_{0}^{\epsilon}\frac{\partial\ell_{t}}{\partial\mu_{2}}(\mu_{1},\mu_{2}-\epsilon+z)dz)d\mu
\end{align*}
Almost surely, 
\[
\liminf_{t\to\infty}\inf_{(\mu_{1},\mu_{2})\in R^{'},z\in[0,\epsilon]}\left(\frac{\partial\ell_{t}}{\partial\mu_{2}}(\mu_{1},\mu_{2}-\epsilon+z)\right)\ge t\delta,
\]
so almost surely 
\[
\limsup_{t\to\infty}\tilde{M}_{t}(R)\le\limsup_{t\to\infty}\exp(-t\epsilon\delta)\cdot\int_{R^{'}}\tilde{m}_{t}(\mu_{1},\mu_{2})d\mu=0.
\]
Letting $\epsilon\to0$ and noting that $li(\mu_{2}^{*}(\underline{c}))$
crosses the top edge of $\APLbox$ to the left of $\mu_{1}^{\bullet}$
when $\gamma>0$, we get $\lim_{t\to\infty}\tilde{M}_{t}(\ \lozenge[\mu_{2}^{*}(\underline{c}),\bar{\mu}_{2}^{\circ}]\cup[\underline{\mu}_{1},\mu_{1}^{\bullet}]\times\{\bar{\mu}_{2}\}\ )=1$
almost surely. But from Lemma \ref{lem:learning_mu1}, the set $[\underline{\mu}_{1},\mu_{1}^{\bullet}]\times\{\bar{\mu}_{2}\}$
must receive no weight in the limit, hence $\lim_{t\to\infty}\tilde{M}_{t}(\ \lozenge[\underline{\mu}_{2}^{\circ},\mu_{2}^{*}(\underline{c}))\ )=0$
almost surely as desired. (The case of $\gamma<0$ is analogous.)

Now consider any $\underline{c}^{\circ}\le\bar{c}\le\bar{c}^{\circ}.$
I briefly discuss why $\limsup_{t\to\infty}\tilde{C}_{t}\le\bar{c}$
almost surely implies $\lim_{t\to\infty}\tilde{M}_{t}(\ \lozenge[\underline{\mu}_{2}^{\circ},\mu_{2}^{*}(\bar{c}))\ )=0$
almost surely when $r-\gamma>0$. As in the argument before, the key
is to find some $\delta>0$ such that $\frac{\partial}{\partial\mu_{2}}\bar{L}(\mu_{2}\mid c')>\delta$
whenever $c'\in[\underline{c}^{\circ},\bar{c}]$ and $\mu_{2}\le\mu_{2}^{*}(\bar{c}).$
For each $c\in[\underline{c}^{\circ},\bar{c}^{\circ}],$ FOC implies
$\frac{\partial}{\partial\mu_{2}}\bar{L}(\mu_{2}^{*}(c')\mid c')=0.$
Since $\lambda$ is strictly decreasing, we also get $\frac{\partial}{\partial\mu_{2}}\bar{L}(\mu_{2}\mid c')>0$
for any $\mu_{2}<\mu_{2}^{*}(c').$ Since we now consider $r-\gamma>0$,
$\mu_{2}^{*}(c)$ is strictly decreasing in $c$, and this shows $\frac{\partial}{\partial\mu_{2}}\bar{L}(\mu_{2}^{*}(\bar{c})-\epsilon\mid c')>0$
for any $\underline{c}^{\circ}\le c'\le\bar{c}$. We can find $\delta>0$
such that $\frac{\partial}{\partial\mu_{2}}\bar{L}(\mu_{2}^{*}(\bar{c})-\epsilon\mid c')>\delta$
for every $\underline{c}^{\circ}\le c'\le\bar{c}$ by continuity,
so we also get $\frac{\partial}{\partial\mu_{2}}\bar{L}(\mu_{2}\mid c')>\delta$
for any $\mu_{2}\le\mu_{2}^{*}(\bar{c})-\epsilon$.
\end{proof}
Now, I use a bound on agents' asymptotic beliefs about $\mu_{2}$
to deduce asymptotic restrictions on their cutoffs.
\begin{lem}
\label{lem:cutoff_bound-general} Suppose that there are $\underline{\mu}_{2}^{\circ}\le\mu_{2}^{l}<\mu_{2}^{h}\le\bar{\mu}_{2}^{\circ}$
such that $\lim_{t\to\infty}\tilde{M}_{t}(\lozenge[\mu_{2}^{l},\mu_{2}^{h}])=1$
almost surely. Then $\liminf_{t\to\infty}\tilde{C}_{t}\ge C(\mu_{1}^{\bullet},\mu_{2}^{l};\gamma)$
and $\limsup_{t\to\infty}\tilde{C}_{t}\le C(\mu_{1}^{\bullet},\mu_{2}^{h};\gamma)$
almost surely.
\end{lem}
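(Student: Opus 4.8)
The plan is to mimic the structure of the proof of Lemma~\ref{lem:cutoff_bound}, which handled the one-dimensional analog under the assumption that $\mu_1^\bullet$ is known, and to reduce the present two-dimensional situation to that case by exploiting Lemma~\ref{lem:C_equivalence}. Recall that lemma says the optimal cutoff depends on the pair $(\mu_1,\mu_2)$ only through the combination $\mu_2+\gamma(\mu_1-\mu_1^\bullet)$; equivalently, $C(\mu_1,\mu_2;\gamma)=C(\mu_1^\bullet,\mu_2';\gamma)$ whenever $(\mu_1,\mu_2)$ and $(\mu_1^\bullet,\mu_2')$ lie on a common line of slope $-\gamma$. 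By construction of $\lozenge_{[\mu_2^l,\mu_2^h]}$, its top and bottom edges are lines of slope $-\gamma$ passing through $(\mu_1^\bullet,\mu_2^h)$ and $(\mu_1^\bullet,\mu_2^l)$ respectively, so \emph{every} point $(\mu_1,\mu_2)\in\lozenge_{[\mu_2^l,\mu_2^h]}$ satisfies $\mu_2^l\le\mu_2+\gamma(\mu_1-\mu_1^\bullet)\le\mu_2^h$, hence $C(\mu_1^\bullet,\mu_2^l;\gamma)\le C(\mu_1,\mu_2;\gamma)\le C(\mu_1^\bullet,\mu_2^h;\gamma)$ by the monotonicity half of Proposition~\ref{prop:c_monotonic}. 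Thus the support restriction $\lim_t \tilde G_t(\lozenge_{[\mu_2^l,\mu_2^h]})=1$ already pins down the myopically optimal cutoff to the interval $[C(\mu_1^\bullet,\mu_2^l;\gamma),C(\mu_1^\bullet,\mu_2^h;\gamma)]$ as far as beliefs inside the parallelogram are concerned.

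The steps I would carry out, in order: (1) Prove $\liminf_t\tilde C_t\ge C(\mu_1^\bullet,\mu_2^l;\gamma)$ almost surely; the bound $\limsup_t\tilde C_t\le C(\mu_1^\bullet,\mu_2^h;\gamma)$ is symmetric. (2) Fix $\epsilon>0$ and set $c^l:=C(\mu_1^\bullet,\mu_2^l;\gamma)$, $\underline c^\circ:=C(\mu_1^\bullet,\underline\mu_2^\circ;\gamma)$, $\bar c^\circ:=C(\mu_1^\bullet,\bar\mu_2^\circ;\gamma)$. For each $(\mu_1,\mu_2)\in\lozenge_{[\mu_2^l,\mu_2^h]}$, single-peakedness of $c\mapsto U(c;\mu_1,\mu_2)$ (established in Lemma~\ref{lem:loss_outside_ball}) together with the observation of the previous paragraph that $c^l$ is at or to the left of the peak, yields $U(c^l;\mu_1,\mu_2)-U(c^l-\epsilon;\mu_1,\mu_2)>0$; by continuity and compactness of $\lozenge_{[\mu_2^l,\mu_2^h]}$ there is a uniform $\kappa^*>0$ with $U(c^l;\mu_1,\mu_2)-U(c^l-\epsilon;\mu_1,\mu_2)>\kappa^*$ on this set. (3) Let $\bar\kappa,\underline\kappa$ be the sup and inf of $U(c;\mu_1,\mu_2)$ over $c\in[\underline c^\circ,\bar c^\circ]$ and $(\mu_1,\mu_2)\in\lozenge$, and choose $p\in(0,1)$ with $p\kappa^*-(1-p)(\bar\kappa-\underline\kappa)=0$. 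Decompose an arbitrary belief $\hat\nu$ placing mass $\hat p>p$ on $\lozenge_{[\mu_2^l,\mu_2^h]}$ into the mixture of its restriction to that set and its restriction to the complement (inside $\lozenge$), and show that for any $\hat c\le c^l-\epsilon$ the posterior-expected payoff gap $\int (U(c^l;\mu_1,\mu_2)-U(\hat c;\mu_1,\mu_2))\,d\hat\nu>0$, so no cutoff at or below $c^l-\epsilon$ is myopically optimal. (4) Along any sample path where $\tilde G_t(\lozenge_{[\mu_2^l,\mu_2^h]})\to1$, eventually $\tilde G_t(\lozenge_{[\mu_2^l,\mu_2^h]})>p$, forcing $\tilde C_t> c^l-\epsilon$ for all large $t$; since $\epsilon>0$ is arbitrary, $\liminf_t\tilde C_t\ge c^l$ almost surely.

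Two technical points need care but are not deep. First, the integrals $U(c;\mu_1,\mu_2)$ are well defined and continuous in $(\mu_1,\mu_2,c)$: this follows from Assumption~\ref{assu:The-payoff-functions}(d) and the dominated-convergence argument already run in the proof of Lemma~\ref{lem:D} and Lemma~\ref{lem:cutoff_optimal_single_agent}, so I can invoke continuity on the compact domain to extract the uniform constant $\kappa^*$. Second, I must make sure that $\tilde C_t$ is always at least $\underline c^\circ$ — true because any cutoff strictly below $\underline c^\circ$ is dominated under every belief supported on $\lozenge$, by single-peakedness and the fact that $\underline c^\circ$ is at or to the left of the peak for every feasible model — so that the splitting argument in Step~3 only ever compares $\hat c$ in the bounded range $[\underline c^\circ,c^l-\epsilon]$ where the uniform bound $\bar\kappa-\underline\kappa$ on payoff differences applies.

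The main obstacle is really just bookkeeping: transferring the one-dimensional argument of Lemma~\ref{lem:cutoff_bound} to the parallelogram requires verifying that the geometry of $\lozenge_{[\mu_2^l,\mu_2^h]}$ — whose defining edges have slope $-\gamma$ — interacts correctly with Lemma~\ref{lem:C_equivalence} so that the cutoff-relevant statistic $\mu_2+\gamma(\mu_1-\mu_1^\bullet)$ stays confined to $[\mu_2^l,\mu_2^h]$ on the whole set; once that is checked, every inequality in the original proof goes through verbatim with ``$\mu_2\in[\mu_2^l,\mu_2^h]$'' replaced by ``$(\mu_1,\mu_2)\in\lozenge_{[\mu_2^l,\mu_2^h]}$'' and ``$C(\mu_1^\bullet,\cdot;\gamma)$'' applied to the endpoints. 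No new probabilistic input is needed beyond the almost-sure hypothesis $\tilde G_t(\lozenge_{[\mu_2^l,\mu_2^h]})\to1$.
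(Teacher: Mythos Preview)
Your proposal is correct and follows essentially the same approach as the paper's proof: both reduce to the one-dimensional argument of Lemma~\ref{lem:cutoff_bound} via Lemma~\ref{lem:C_equivalence}, use single-peakedness plus compactness to extract a uniform gap $\kappa^*$, then run the same mixture-decomposition with threshold $p$ to force $\tilde C_t>c^l-\epsilon$ eventually. The technical points you flag (continuity of $U$, the a priori lower bound $\tilde C_t\ge\underline c^\circ$) are exactly the ones the paper handles.
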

\begin{proof}
I show $\liminf_{t\to\infty}\tilde{C}_{t}\ge C(\mu_{1}^{\bullet},\mu_{2}^{l};\gamma)$
almost surely. The argument establishing $\limsup_{t\to\infty}\tilde{C}_{t}\le C(\mu_{1}^{\bullet},\mu_{2}^{h};\gamma)$
is symmetric.

Let $c^{l}=C(\mu_{1}^{\bullet},\mu_{2}^{l};\gamma)$, and recall before
we defined $\underline{c}^{\circ}:=C(\mu_{1}^{\bullet},\underline{\mu}_{2}^{\circ};\gamma)$
and $\bar{c}^{\circ}:=C(\mu_{1}^{\bullet},\bar{\mu}_{2}^{\circ};\gamma)$.

Let $U(c;\mu_{1},\mu_{2})$ be the expected payoff of using the stopping
strategy $S_{c}$ when $(X_{1},X_{2})\sim\Psi(\mu_{1},\mu_{2};\gamma).$
\foreignlanguage{american}{I first show $c\mapsto U(c;\mu_{1},\mu_{2})$
is single peaked: it is strictly increasing up to $c=c^{*},$ the
subjectively optimal cutoff under $\Psi(\mu_{1},\mu_{2};\gamma)$,
then strictly decreasing afterwards. Recall (from the proof of }Lemma
\ref{lem:cutoff_properties}\foreignlanguage{american}{ when $\gamma\ge-1$)
the cutoff form of the best stopping strategy comes from the fact
that \\ $u_{1}(x_{1})<\mathbb{E}_{\Psi(\mu_{1},\mu_{2};\gamma)}[u_{2}(x_{1},X_{2})|X_{1}=x_{1}]$
for $x_{1}<c^{*},$ but $u_{1}(x_{1})<\mathbb{E}_{\Psi(\mu_{1},\mu_{2};\gamma)}[u_{2}(x_{1},X_{2})|X_{1}=x_{1}]$
for $x_{1}>c^{*}.$ For two cutoffs $c_{1}<c_{2}<c^{*},$ the two
stopping strategies $S_{c_{1}},S_{c_{2}}$ only differ in how they
treat first-period draws in the interval $[c_{1},c_{2}],$ so we can
write the difference in their expected payoffs as $\int_{c_{1}}^{c_{2}}\left(\mathbb{E}_{\Psi(\mu_{1},\mu_{2};\gamma)}[u_{2}(x_{1},X_{2})|X_{1}=x_{1}]-u_{1}(x_{1})\right)\phi(x_{1}\mid\mu_{1})dx_{1}.$
The integrand is strictly positive on $[c_{1},c_{2}],$ therefore
$U(c_{1};\mu_{1},\mu_{2})<U(c_{2};\mu_{1},\mu_{2}).$ This shows $U(\cdot;\mu_{1},\mu_{2})$
is strictly increasing up until $c^{*}$; a symmetric argument shows
it is strictly decreasing after $c^{*}$.}

By Lemma \ref{lem:C_equivalence}, $C(\mu_{1}^{'},\mu_{2}^{'};\gamma)=C(\mu_{1}^{\bullet},\mu_{2};\gamma)$
for all $(\mu_{1}^{'},\mu_{2}^{'})\in li(\mu_{2})$. Since $c\mapsto U(c;\mu_{1},\mu_{2})$
is single peaked for every $(\mu_{1},\mu_{2}),$ and since $c^{l}\le C(\mu_{1}^{\bullet},\mu_{2};\gamma)$
for all $\mu_{2}\in[\mu_{2}^{l},\mu_{2}^{h}],$ we also get $c^{l}\le C(\mu_{1}^{'},\mu_{2}^{'};\gamma)$
for every $(\mu_{1}^{'},\mu_{2}^{'})\in\lozenge[\mu_{2}^{l},\mu_{2}^{h}]$,
since $\lozenge[\mu_{2}^{l},\mu_{2}^{h}]$ is the union of the line
segments, $\lozenge[\mu_{2}^{l},\mu_{2}^{h}]=\cup_{\mu_{2}\in[\mu_{2}^{l},\mu_{2}^{h}]}li(\mu_{2})$.

Fix some $\epsilon>0.$ We get $U(c^{l};\mu_{1},\mu_{2})-U(c^{l}-\epsilon;\mu_{1},\mu_{2})>0$
for every $(\mu_{1},\mu_{2})\in\lozenge[\mu_{2}^{l},\mu_{2}^{h}]$.
As $(\mu_{1},\mu_{2})\mapsto\left(U(c^{l};\mu_{1},\mu_{2})-U(c^{l}-\epsilon;\mu_{1},\mu_{2})\right)$
is continuous, there exists some $\kappa^{*}>0$ so that $U(c^{l};\mu_{1},\mu_{2})-U(c^{l}-\epsilon;\mu_{1},\mu_{2})>\kappa^{*}$
for all $(\mu_{1},\mu_{2})\in\lozenge[\mu_{2}^{l},\mu_{2}^{h}]$.
In particular, if $\nu\in\Delta(\lozenge[\mu_{2}^{l},\mu_{2}^{h}])$
is a belief about fundamentals, then $\int U(c^{l};\mu_{1},\mu_{2})-U(c^{l}-\epsilon;\mu_{1},\mu_{2})d\nu(\mu)>\kappa^{*}.$

Now , let $\bar{\kappa}:=\sup_{c\in[\underline{c}^{\circ},\bar{c}^{\circ}]}\sup_{(\mu_{1},\mu_{2})\in\APLbox}U(c;\mu_{1},\mu_{2}),$
$\underline{\kappa}:=\inf_{c\in[\underline{c}^{\circ},\bar{c}^{\circ}]}\inf_{(\mu_{1},\mu_{2})\in\APLbox}U(c;\mu_{1},\mu_{2}).$
Find $p\in(0,1)$ so that $p\kappa^{*}-(1-p)(\bar{\kappa}-\underline{\kappa})=0.$
At any belief $\hat{\nu}\in\Delta(\APLbox)$ that assigns more than
probability $p$ to the parallelogram $\lozenge[\mu_{2}^{l},\mu_{2}^{h}]$,
the optimal cutoff is larger than $c^{l}-\epsilon$. To see this,
take any $\hat{c}\le c^{l}-\epsilon$ and I will show $\hat{c}$ is
suboptimal. If $\hat{c}<\underline{c},$ then it is suboptimal after
any belief on $\lozenge.$ If $\underline{c}\le\hat{c}\le c^{l}-\epsilon$,
I show that $\int U(c^{l};\mu_{1},\mu_{2})-U(\hat{c};\mu_{1},\mu_{2})d\hat{\nu}(\mu)>0.$
To see this, we may decompose $\hat{\nu}$ as the mixture of a probability
measure $\nu$ on $\lozenge[\mu_{2}^{l},\mu_{2}^{h}]$ and another
probability measure $\nu^{c}$ on $\APLbox\backslash\lozenge[\mu_{2}^{l},\mu_{2}^{h}].$
Let $\hat{p}>p$ be the probability that $\nu$ assigns to $\lozenge[\mu_{2}^{l},\mu_{2}^{h}].$
The above integral is equal to:
\begin{align*}
\hat{p}\int_{\lozenge[\mu_{2}^{l},\mu_{2}^{h}]}U(c^{l};\mu_{1},\mu_{2})-U(\hat{c};\mu_{1},\mu_{2})d\nu(\mu)+(1-\hat{p})\int_{\APLbox\backslash\lozenge[\mu_{2}^{l},\mu_{2}^{h}]}U(c^{l};\mu_{1},\mu_{2})-U(\hat{c};\mu_{1},\mu_{2})d\nu^{c}(\mu)
\end{align*}
Since $c^{l}$ is to the left of the optimal cutoff for all $(\mu_{1},\mu_{2})\in\lozenge[\mu_{2}^{l},\mu_{2}^{h}]$
and $\hat{c}\le c^{l}-\epsilon$, then $U(\hat{c};\mu_{1},\mu_{2})\le U(c^{l}-\epsilon;\mu_{1},\mu_{2})$
for all $(\mu_{1},\mu_{2})\in\lozenge[\mu_{2}^{l},\mu_{2}^{h}]$.
The first summand is no less than $\hat{p}\int_{\lozenge[\mu_{2}^{l},\mu_{2}^{h}]}U(c^{l};\mu_{1},\mu_{2})-U(c^{l}-\epsilon;\mu_{1},\mu_{2})d\nu(\mu)\ge\hat{p}\kappa^{*}.$
Also, the integrand in the second summand is no smaller than $-(\bar{\kappa}-\underline{\kappa}),$
therefore $\int U(c^{l};\mu_{1},\mu_{2})-U(\hat{c};\mu_{1},\mu_{2})d\hat{\nu}(\mu)\ge\hat{p}\kappa^{*}-(1-\hat{p})(\bar{\kappa}-\underline{\kappa}).$
Since $\hat{p}>p$, we get $\hat{p}\kappa^{*}-(1-\hat{p})(\bar{\kappa}-\underline{\kappa})>0$.

Along any sample path $\omega$ where $\lim_{t\to\infty}\tilde{M}_{t}(\lozenge[\mu_{2}^{l},\mu_{2}^{h}])(\omega)=1,$
eventually $\tilde{M}_{t}(\lozenge[\mu_{2}^{l},\mu_{2}^{h}])(\omega)>p$
for all large enough $t,$ meaning $\liminf_{t\to\infty}\tilde{C}_{t}(\omega)\ge c^{l}-\epsilon.$
Since $\lim_{t\to\infty}\tilde{M}_{t}(\lozenge[\mu_{2}^{l},\mu_{2}^{h}])=1$
almost surely, this shows $\liminf_{t\to\infty}\tilde{C}_{t}\ge C(\mu_{1}^{\bullet},\mu_{2}^{l};\gamma)-\epsilon$
almost surely. As the choice of $\epsilon>0$ was arbitrary, we conclude
$\liminf_{t\to\infty}\tilde{C}_{t}\ge C(\mu_{1}^{\bullet},\mu_{2}^{l};\gamma)$
almost surely.
\end{proof}

\subsubsection{The Contraction Map}

I now combine the results established so far to prove the convergence
statement in Proposition \ref{prop:one-by-one}.
\begin{proof}
Let $\mu_{2,[1]}^{A}:=\underline{\mu}_{2}^{\circ}$, $\mu_{2,[1]}^{B}:=\bar{\mu}_{2}^{\circ}$.
For $k=2,3,...$, iteratively define $\mu_{2,[k]}^{A}:=\mathcal{I}(\mu_{2,[k-1]}^{A};\gamma)$
and $\mu_{2,[k]}^{B}:=\mathcal{I}(\mu_{2,[k-1]}^{B};\gamma)$. Let
$\mu_{2,[k]}^{l}:=\min(\mu_{2,[k]}^{A},\mu_{2,[k]}^{B})$ and $\mu_{2,[k]}^{h}:=\max(\mu_{2,[k]}^{A},\mu_{2,[k]}^{B})$.
I show by induction that for every $k$, $\lim_{t\to\infty}\tilde{M}_{t}(\lozenge[\mu_{2,[k]}^{l},\mu_{2,[k]}^{h}])=1$
almost surely. (The base case of $k=1$ holds by the support of the
prior belief.)

\textbf{Inductive step when $r-\gamma<0$. }From Lemma \ref{lem:cutoff_bound-general},
if $\lim_{t\to\infty}\tilde{M}_{t}(\lozenge[\mu_{2,[k]}^{l},\mu_{2,[k]}^{h}])=1$
almost surely, then $\liminf_{t\to\infty}\tilde{C}_{t}\ge C(\mu_{1}^{\bullet},\mu_{2,[k]}^{l};\gamma)$
and $\limsup_{t\to\infty}\tilde{C}_{t}\le C(\mu_{1}^{\bullet},\mu_{2,[k]}^{h};\gamma)$
almost surely. Using these conclusions in Lemma \ref{lem:belief_bound-general},
we deduce that almost surely,
\[
\lim_{t\to\infty}\tilde{M}_{t}(\lozenge[\mu_{2}^{*}(C(\mu_{1}^{\bullet},\mu_{2,[k]}^{l};\gamma)),\mu_{2}^{*}(C(\mu_{1}^{\bullet},\mu_{2,[k]}^{h};\gamma))])=1.
\]

Both $C(\mu_{1}^{\bullet},\cdot;\gamma)$ and $\mu_{2}^{*}(\cdot)$
are strictly increasing, so $\lim_{t\to\infty}\tilde{M}_{t}(\lozenge[\mu_{2,[k+1]}^{l},\mu_{2,[k+1]}^{h}])=1$
almost surely.

\textbf{Inductive step when $r-\gamma>0$}. Now, $C(\mu_{1}^{\bullet},\cdot;\gamma)$
is strictly increasing but $\mu_{2}^{*}(\cdot)$ is strictly decreasing.
From Lemma \ref{lem:cutoff_bound-general}, if $\lim_{t\to\infty}\tilde{M}_{t}(\lozenge[\mu_{2,[k]}^{l},\mu_{2,[k]}^{h}])=1$
almost surely, then $\liminf_{t\to\infty}\tilde{C}_{t}\ge C(\mu_{1}^{\bullet},\mu_{2,[k]}^{l};\gamma)$
and $\limsup_{t\to\infty}\tilde{C}_{t}\le C(\mu_{1}^{\bullet},\mu_{2,[k]}^{h};\gamma)$
almost surely. But using these conclusions in Lemma \ref{lem:belief_bound-general},
for the case of $r-\gamma>0$, we further deduce that 
\[
\lim_{t\to\infty}\tilde{M}_{t}(\lozenge[\mu_{2}^{*}(C(\mu_{1}^{\bullet},\mu_{2,[k]}^{h};\gamma)),\mu_{2}^{*}(C(\mu_{1}^{\bullet},\mu_{2,[k]}^{l};\gamma))])=1.
\]
So now we have $\mu_{2,[k+1]}^{l}=\mu_{2}^{*}(C(\mu_{1}^{\bullet},\mu_{2,[k]}^{h};\gamma))$
and $\mu_{2,[k+1]}^{h}=\mu_{2}^{*}(C(\mu_{1}^{\bullet},\mu_{2,[k]}^{l};\gamma))$,
but still conclude $\lim_{t\to\infty}\tilde{M}_{t}(\lozenge[\mu_{2,[k+1]}^{l},\mu_{2,[k+1]}^{h}])=1$
almost surely.

The iterates $(\mu_{2,[k]}^{A})_{k\ge1}$ and $(\mu_{2,[k]}^{B})_{k\ge1}$
are the iterates of a contraction map, so $\lim_{k\to\infty}\mu_{2,[k]}^{A}=\mu_{2}^{\bullet}=\lim_{k\to\infty}\mu_{2,[k]}^{B}$.
Thus, agent's posterior converges in $L^{1}$ to $li(\mu_{2}^{\infty})$
almost surely (since the support of the prior is bounded). In addition,
the sequences of bounds on asymptotic actions also converge by continuity,
$\lim_{k\to\infty}C(\mu_{1}^{\bullet},\mu_{2,[k]}^{A};\gamma)=c^{\infty}=\lim_{k\to\infty}C(\mu_{1}^{\bullet},\mu_{2,[k]}^{B};\gamma)$.
This implies $\lim_{t\to\infty}\tilde{C}_{t}=c^{\infty}$ almost surely.
Finally, combining the asymptotic belief result with Lemma \ref{lem:learning_mu1},
we see that in fact $\tilde{M}_{t}$ converges in $L^{1}$ to the
point $(\mu_{1}^{\bullet},\mu_{2}^{\infty})$ almost surely.
\end{proof}

\end{document}